\onecolumn \linespread{1.5}
\title{A Correctness Result for Online Robust PCA}
\author{
Brian~Lois,~\IEEEmembership{Graduate~Student~Member,~IEEE}\\
Namrata~Vaswani,~\IEEEmembership{Senior Member,~IEEE}
\thanks{B. Lois is with the Mathematics and ECpE departments, and N. Vaswani is with the ECpE department at Iowa State University.
Email: \{blois,namrata\}@iastate.edu.}
\thanks{This work was partly supported by NSF grant CCF-1117125. 
}
}
\newtheorem{theorem}{Theorem}[section]
\newtheorem{lem}[theorem]{Lemma}
\newtheorem{corollary}[theorem]{Corollary}
\newtheorem{definition}[theorem]{Definition}
\newtheorem{remark}[theorem]{Remark}
\newtheorem{example}[theorem]{Example}
\newtheorem{fact}[theorem]{Fact}
\newcommand{\ds}{\displaystyle}
\newcommand{\R}{\mathbb{R}}
\newcommand{\bi}{\begin{itemize}}
\newcommand{\ei}{\end{itemize}}
\newcommand{\ben}{\begin{enumerate}}
\newcommand{\een}{\end{enumerate}}
\newcommand{\bean}{\begin{eqnarray*} }
\newcommand{\eean}{\end{eqnarray*} }
\newcommand{\bea}{\begin{eqnarray} }
\newcommand{\eea}{\end{eqnarray} }
\newcommand{\nn}{\nonumber}
\newcommand{\ba}{\begin{array} }
\newcommand{\ea}{\end{array} }
\newcommand{\beq}{\begin{equation}}
\newcommand{\eeq}{\end{equation}}
\newcommand{\vect}[2]{\left[\begin{array}{cccccc}
     #1 \\
     #2
   \end{array}
  \right]
  }
\renewcommand\thetheorem{\arabic{section}.\arabic{theorem}}
\newcommand{\mt}{\bm{m}_t}
\newcommand{\xt}{\bm{x}_t}
\newcommand{\xhatt}{\hat{\bm{x}}_t}
\newcommand{\lt}{\bm{\ell}_t}
\newcommand{\lhatt}{\hat{\bm{\ell}}_t}
\newcommand{\et}{\bm{e}_t}
\newcommand{\Pt}{\bm{P}_t}
\newcommand{\Pj}{\bm{P}_{(j)}}
\newcommand{\Pjm}{\bm{P}_{(j-1)}}
\newcommand{\Pjnew}{\bm{P}_{(j),\mathrm{new}}}
\newcommand{\at}{\bm{a}_t}
\newcommand{\atnew}{\bm{a}_{t,\mathrm{new}}}
\newcommand{\I}{\bm{I}}
\newcommand{\Lamt}{\bm{\Lambda}_t}
\newcommand{\Lamtnew}{(\bm{\Lambda}_{t})_\mathrm{new}}
\newcommand{\T}{\mathcal{T}}
\newcommand{\D}{\bm{D}}
\newcommand{\rmnew}{\mathrm{new}}
\newcommand{\new}{\mathrm{new}}
\newcommand{\cs}{\text{cs}}
\newcommand{\Lhat}{\hat{\bm{\ell}}}
\newcommand{\Ltil}{\tilde{\bm{\ell}}}
\newcommand{\Phat}{\hat{\bm{P}}}
\newcommand{\Span}{\operatorname{span}}
\newcommand{\add}{\mathrm{add}}
\newcommand{\rank}{\operatorname{rank}}
\newcommand{\E}{\mathbb{E}}
\newcommand{\calc}{\mathcal{C}}
\newcommand{\full}{\mathrm{full}}
\newcommand{\cov}{\operatorname{Cov}}
\newcommand{\SE}{\mathrm{SE}}
\newcommand{\Ijk}{\mathcal{I}_{j,k}}
\begin{document}

\maketitle

\begin{abstract}
This work studies the problem of sequentially recovering a sparse vector $\xt$ and a vector from a low-dimensional subspace $\lt$ from knowledge of their sum $\mt = \xt + \lt$.
If the primary goal is to recover the low-dimensional subspace where the $\lt$'s lie, then the problem is one of online or recursive robust principal components analysis (PCA).
To the best of our knowledge, this is the first correctness result for online robust PCA.
We prove that if the $\lt$'s obey certain denseness and slow subspace change assumptions,
and the support of $\xt$ changes by at least a certain amount at least every so often,
and some other mild assumptions hold,
then with high probability,
the support of $\xt$ will be recovered exactly, and the error made in estimating $\xt$ and $\lt$
will be small.
An example of where such a problem might arise is in separating a sparse foreground and slowly changing dense background in a surveillance video.
\end{abstract}

\section{Introduction}

Principal Components Analysis (PCA) is a widely used tool for dimension reduction.  Given a matrix of data $\D$, PCA seeks to recover a small number of directions that contain most of the variability of the data.
This is typically accomplished by performing a singular value decomposition (SVD) of $\D$ and retaining the singular vectors corresponding to the largest singular values.
A limitation of this procedure is that it is highly sensitive to outliers in the data set.
Recently there has been much work done to develop and analyze algorithms for PCA that are robust with respect to outliers.
A common way to model outliers is as sparse vectors \cite{error_correction_PCP_l1}.
In seminal papers Cand\`{e}s et. al. and Chandrasekaran et. al. introduced the Principal Components Pursuit (PCP) algorithm and proved its robustness to sparse outliers \cite{rpca}, \cite{rpca2}.
Principal Components Pursuit poses the robust PCA problem as identifying a low rank matrix and a sparse matrix from their sum.
The algorithm is to minimize a weighted sum of the nuclear norm of the low rank matrix and the vector $\ell_1$ norm of the sparse matrix subject to their sum being equal to the observed data matrix.
Stronger results for the PCP program can be found in \cite{chen2011low}.
Other methods such as \cite{outlier_pursuit} model the entire column vector as being either correct or an outlier.  Some other works on the performance guarantees for batch robust PCA include \cite{rpca_tropp}, \cite{linear_inverse_prob}, and \cite{noisy_undersampled_yuan}.
All of these methods require waiting until all of the data has been acquired before performing the optimization.

In this work we consider an online or recursive version of the robust PCA problem where we seek to separate vectors into low dimensional and sparse components as they arrive, using the previous estimates, rather than re-solving the entire problem at each time $t$.  An application where this type of problem is useful is in video analysis \cite{Torre03aframework}.  Imagine a video sequence that has a distinct background and foreground.  An example might be a surveillance camera where a person walks across the scene.  If the background does not change very much, and the foreground is sparse (both practical assumptions), then separating the background and foreground can be viewed as a robust PCA problem.  Sparse plus low rank decomposition can also be used to detect anomalies in network traffic patterns \cite{mardani2013dynamic}.  In all such an applications an online solution is desirable.

\subsection{Contributions}

To the best of our knowledge, this is among the first works that provides a correctness result for an online (recursive) algorithm for sparse plus low-rank matrix recovery.
We study the ReProCS algorithm introduced in \cite{ReProCS_IT}.
As shown in \cite{han_tsp}, with practical heuristics used to set its parameters, ReProCS has significantly improved recovery
performance compared to other recursive (\cite{GRASTA,mateos2012robust, mardani2013dynamic}) and even batch methods (\cite{rpca, Torre03aframework, mateos2012robust}) for many simulated and real video datasets.


Online algorithms are needed for real-time applications such as video surveillance or for other streaming video applications. Moreover, even for offline applications, they are faster and need less storage compared to batch techniques.  Finally, as we will see, online approaches can provide a way to exploit temporal dependencies in the dataset (in this case, slow subspace change), and we use this to
allow for more correlated support sets of the sparse vectors than do the various results for  PCP \cite{rpca,rpca2,chen2011low}.
Of course this comes at a cost.
Our result needs a tighter bound on the rank-sparsity product as well as an accurate estimate of the initial low dimensional subspace and analyzes an algorithm that requires knowledge of subspace change model parameters.

We show that as long as algorithm parameters are set appropriately (which requires knowledge of the subspace change model parameters), a good-enough estimate of the initial subspace is available, a slow subspace change assumption holds, the subspaces are dense enough for a given maximum support size and a given maximum rank, and there is a certain amount of support change at least every so often, then the support can be exactly recovered with high probability.
Also the sparse and low-rank matrix columns can be recovered with bounded and small error.

Partial results have been provided for ReProCS in the recent work of Qiu et. al. \cite{ReProCS_IT} and follow up papers \cite{zhan_correlated, rrpcp_globalsip};
however, none of these results is a correctness result.
All require an assumption that depends on intermediate algorithm estimates.
Recent work of Feng et. al. from NIPS 2013 \cite{OnlinePCA_ContaminatedData}, \cite{rpca_stochatistic_optimization} provides partial results for online robust PCA.
One of these papers, \cite{OnlinePCA_ContaminatedData}, does not model the outlier as a sparse vector;  \cite{rpca_stochatistic_optimization} does, but it again contains a partial result.
Moreover the theorems in both papers only talk about asymptotically converging to the solution of the batch problem, whereas in the current work,
we exploit slow subspace change to actually relax a key assumption needed by the batch methods (that of uniformly distributed random supports or of very frequent support change).
A more detailed comparison of our results with \cite{rpca}, \cite{rpca2}, \cite{rpca_stochatistic_optimization}, \cite{mod_pcp}, and \cite{hsu2011robust} is given in Section \ref{comparison}.
Other work only provides an algorithm without proving any performance results; for example \cite{GRASTA}, \cite{mateos2012robust}.

Our result uses the overall proof approach of \cite{ReProCS_IT} as its starting point.
The new techniques used in proving our correctness result are described in Section \ref{pfsketch}.
Moreover, new techniques are also needed to prove that various practical models for support change follow the support change assumptions needed by our main result (Section \ref{motion}).

\subsection{Notation}

We use lowercase bold letters for vectors, capital bold letters for matrices, and calligraphic capital letters for sets.
We use $\bm{x}'$ for the transpose of $\bm{x}$.
The  2-norm of a vector and the induced 2-norm of a matrix are denoted by $\| \cdot \|_2$.
We refer to a matrix with orthonormal columns as a {\em basis matrix}.
Notice that for a basis matrix $\bm{P}$, $\bm{P}'\bm{P} = \bm{I}$.
 For a set $\mathcal{T}$ of integers, $|\mathcal{T}|$ denotes its cardinality.
For a vector $\bm{x}$, $\bm{x}_{\mathcal{T}}$ is a vector containing the entries of $\bm{x}$ indexed by $\mathcal{T}$.
Define $\I_{\mathcal{T}}$ to be an $n \times |\mathcal{T}|$ matrix of those columns of the identity matrix indexed by $\mathcal{T}$.
Then let $\bm{A}_{\mathcal{T}} := \bm{AI}_{\mathcal{T}}$.
We use the interval notation $[a, b]$ to mean all of the integers between $a$ and $b$, inclusive, and similarly for $(a,b)$ etc.
For a matrix $\bm{A}$, the restricted isometry constant (RIC) $\delta_s(\bm{A})$ is the smallest real number $\delta_s$ such that
\[
(1-\delta_s)\|\bm{x}\|_2^2 \leq \|\bm{Ax}\|_2^2 \leq (1+\delta_s) \|\bm{x}\|_2^2
\]
for all $s$-sparse vectors $\bm{x}$ \cite{candes_rip}.  A vector $\bm{x}$ is $s$-sparse if it has $s$ or fewer non-zero entries.
For Hermitian matrices $\bm{A}$ and $\bm{B}$, the notation $\bm{A}\preceq\bm{B}$ means that $\bm{B}-\bm{A}$ is positive semi-definite.
For a Hermitian matrix $\bm{H}$, $\bm{H}\overset{\mathrm{EVD}}= \bm{U\Lambda U}'$ denotes its eigenvalue decomposition.  Similarly for any matrix $\bm{A}$, $\bm{A}\overset{\mathrm{SVD}}=\bm{U\Sigma V'}$ denotes its singular value decomposition.

\subsection{Organization}
The rest of the paper is organized as follows.
In Section \ref{Problem Definition and Assumptions} we describe the signal model assumed for our data.
Our main result (Theorem \ref{thm1}) is presented in Section \ref{results},
and the proof is given in Section \ref{pfsketch}.
Section \ref{motion} describes examples of motion that satisfy our support change assumptions, and contains Corollary \ref{probcor} which is a result for a simple example of motion that satisfies our support change assumptions.  The corollary is proved in Section \ref{probpf}.
A discussion of our results and support change model along with comparisons with other works can be found in Section \ref{comparison}.
We show a simulation experiment that demonstrates our result in Section \ref{simulation}.
Finally, some concluding remarks and directions for future work are given in Section \ref{conclusions}.


\section{Problem Definition and Assumptions}\label{Problem Definition and Assumptions}

At time $t$ we observe a vector $\mt \in \R^n$ that is the sum of a vector from a slowly changing low-dimensional subspace $\lt$ and a sparse vector $\xt$.
So
\[
\mt = \lt + \xt \qquad \text{ for } t = 0, 1, 2, \dots, t_{\max},
\]
with the possibility that $t_{\max} = \infty$.  We model the low-dimensional $\lt$'s as $\lt = \Pt \at$ for a basis matrix $\Pt$ that is allowed to change slowly over time.
Given an estimate of the initial subspace $\hat{\bm{P}}_{(0)}$,
the goal is to obtain estimates $\xhatt$ and $\lhatt$ at each time $t$ and to periodically update the estimate of $\Pt$.

\subsection{Model on \texorpdfstring{$\lt$}{lt}}\label{ltmodel}

\begin{enumerate}

\item Subspace Change Model for $\lt$

Let $t_j$ for $j = 1, \dots, J$ be the times at which the subspace where the $\lt$'s lie changes.
We assume $\lt = \Pt \at$ where $\Pt = \Pj$ for $t_{j} \leq t < t_{j+1}$.

$\Pj$ is a basis matrix that changes as $\Pj = [ \Pjm \ \Pjnew ]$.
Because $\Pj$ is a basis matrix, $ \Pjm \perp \Pjnew$.
Let $r_j = \rank(\Pj)$ and define $r := r_J = \max_{j} \rank{\Pj}$.
Also let $c_{j,\rmnew} = \rank(\Pjnew)$ and define $c := \max_{j}\rank(\Pjnew)$.
Observe that
\[
r = \rank ( [ \bm{\ell}_1, \dots, \bm{\ell}_{t_{\max}} ] ) \quad\text{ and }\quad r \leq r_0 + Jc .
\]

For $t\in[t_{j}, t_{j+1})$, $\lt$ can be written as $\lt = [ \Pjm \ \Pjnew ]\vect{\bm{a}_{t,*}}{\atnew}$.
where
\[
\bm{a}_{t,*} := {\Pjm}'\lt \quad\text{ and }\quad \bm{a}_{t,\rmnew} := {\Pjnew}'\lt
\]

\item Assumptions and notation for $\at$

We assume that the $\at$'s are zero mean bounded random variables that are mutually independent over time.  Let
\[
\gamma := \sup_t \|\at\|_{\infty}
\quad\text{ and }\quad
\gamma_{\rmnew} := \sup_{t} \|\atnew\|_{\infty}.
\]
Define $\Lamt := \cov(\at)$ and assume it is diagonal.
Let $\Lamtnew := \cov(\atnew)$.
Define $\lambda^- := \inf_t \lambda_{\min}(\Lamt)$ and $\lambda^+ := \sup_t \lambda_{\max}(\Lamt)$,
and assume that $0 < \lambda^- \leq \lambda^+ < \infty$.
Also, for an integer $d$, define $\lambda^-_{\new} := \min_{j} \min_{t\in[t_j,t_j+d]}\lambda_{\min}((\bm{\Lambda}_{t})_{\new})$ and
$\lambda^+_{\new} := \max_{j} \max_{t\in[t_j,t_j+d]} \lambda_{\max}((\bm{\Lambda}_{t})_{\new})$.

Then define
\[
f := \frac{\lambda^+}{\lambda^-} \quad \text{ and } \quad g := \frac{\lambda^+_{\rmnew}}{\lambda^-_{\rmnew}}.
\]

Notice that $f$ is a bound on the condition number of $\bm{\Lambda}_t$.  And $g$ is a bound on the condition number of  $\Lamtnew$ for the first $d$ time instants after a subspace change.

\end{enumerate}

\subsection{Model on \texorpdfstring{$\xt$}{xt}} \label{xtmodel}

Let $\T_t := \{ i : (\xt)_i \neq 0\}$ be the support set of $\xt$ and let $s = \max_{t}| \T_t |$ be the size of the largest support.
Let $x_{\min} := \inf_t \min_{i\in\T_t} |(\xt)_i|$ denote the size of the smallest non-zero entry of any $\xt$.


Divide the interval $[1,t_{\max}]$ into subintervals of length $\beta$. For the interval $[(u-1)\beta,u\beta-1]$ let $\mathcal{T}_{(i),u}$ for $i = 1,\dots, l_u$ be mutually disjoint subsets of $\{ 1, \dots, n\}$ such that for every $t \in [(u-1)\beta,u\beta-1]$,

\begin{equation}\label{union}
\T_t \subseteq \mathcal{T}_{(i),u} \cup \mathcal{T}_{(i+1),u} \quad\text{ for some } i.
\end{equation}

Then define
\begin{align}\label{alphabyp}
h(\beta) &:= \max_{u = 1,\dots, \lceil\frac{t_{\max}}{\beta}\rceil}\max_{i}\big| \{ t \in [(u-1)\beta+1,u\beta] \ : \ \T_t \subseteq \mathcal{T}_{(i),u}\cup \mathcal{T}_{(i+1),u}\} \big|
\end{align}
First notice that \eqref{union} can always be trivially satisfied by choosing $l_u =1$ and $\mathcal{T}_{(1),u} = \{  1, \dots, n \}$.
Also notice that $h(\beta)$ is a bound on how long the support of $\xt$ remains in a given area during the intervals $[(u-1)\beta,u\beta-1]$.  We should also point out that as defined above, $h(\beta)$ depends on the choice of $\mathcal{T}_{(i),u}$.  The trivial choice will give $h(\beta) = \beta$.  To eliminate this dependence, we could specify that the $\mathcal{T}_{(i)}$ be chosen optimally.  So we could instead define
\begin{align}\label{hstar}
h^*(\beta) &:=   \max_{u = 1,\dots, \lceil\frac{t_{\max}}{\beta}\rceil}  \min_{l_u=1,\dots,n} \min_{\substack{\text{mutually disjoint} \\ \mathcal{T}_{(1),u},\dots,\mathcal{T}_{(l_u),u} \text{satisfying} \eqref{union} } }\max_{i}\big| \{ t \in [(u-1)\beta,u\beta-1] \ : \ \T_t \subseteq \mathcal{T}_{(i),u}\cup \mathcal{T}_{(i+1),u}\} \big|
\end{align}
Observe that $h^*(\beta)$ is $h(\beta)$ with the $\mathcal{T}_{(i)}$ chosen to minimize $h(\beta)$.

As we will see, we need an upper bound on $h^*(\beta)$ for large enough $\beta$.
Practically, this means that there is some support change every so often.
We show several examples of this model in Section \ref{motion}.

\begin{remark}\label{dsubset}
We could replace \eqref{union} above with
\[
\T_{t}\subseteq \T_{(i),u} \cup \T_{(i+1),u} \cup \dots \cup \T_{(i+d-1),u}.
\]
In this case we would need a tighter bound on $h^*(\beta)$ because the conclusion of Lemma \ref{blockdiag} will be
$\|\bm{M}\|_2 \leq 2d^2\sigma^+ h^*(\beta) $. The proof is easily modified by adding additional bands around the block diagonal.  We choose to use $d=2$ for purposes of exposition.
\end{remark}

\subsection{Subspace Denseness}

Below we give the definition of the denseness coefficient $\kappa_s$.
\begin{definition}
For a basis matrix $\bm{P}$, define $\ds\kappa_s(\bm{P}) := \max_{|\mathcal{T}| \leq s} \| {\I_{\mathcal{T}}}' \bm{P} \|_2$.
\end{definition}

As described in \cite{ReProCS_IT}, $\kappa_s$ is a measurement of the denseness of the vectors in the subspace $\operatorname{range}(\bm{P})$.  Notice that \emph{small} $\kappa_s$ means that the columns of $\bm{P}$ are dense vectors.
The reason for quantifying denseness using $\kappa_s$ is the following lemma from \cite{ReProCS_IT}.

\begin{lem}\label{kappadelta}
For a basis matrix $\bm{P}$, $\delta_s(\I - \bm{P}\bm{P}') = \left(\kappa_s(\bm{P})\right)^2$.
\end{lem}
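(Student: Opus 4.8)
The plan is to compute the restricted isometry constant of $\I - \bm{P}\bm{P}'$ directly from its definition by evaluating $\|(\I - \bm{P}\bm{P}')\bm{x}\|_2^2$ for an arbitrary $s$-sparse vector $\bm{x}$ and relating the worst case over all such $\bm{x}$ to $\kappa_s(\bm{P})$. First I would observe that $\bm{Q} := \I - \bm{P}\bm{P}'$ is the orthogonal projector onto $\operatorname{range}(\bm{P})^\perp$, so it is Hermitian and idempotent: $\bm{Q}^2 = \bm{Q} = \bm{Q}'$. Hence for any vector $\bm{x}$,
\[
\|\bm{Q}\bm{x}\|_2^2 = \bm{x}'\bm{Q}'\bm{Q}\bm{x} = \bm{x}'\bm{Q}\bm{x} = \|\bm{x}\|_2^2 - \|\bm{P}'\bm{x}\|_2^2,
\]
using $\bm{P}'\bm{P} = \I$ to get $\bm{x}'\bm{P}\bm{P}'\bm{x} = \|\bm{P}'\bm{x}\|_2^2$. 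This already gives $\|\bm{Q}\bm{x}\|_2^2 \le \|\bm{x}\|_2^2$, i.e. the upper RIC inequality holds with any nonnegative constant, so the binding constraint is the lower one.

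Next I would restrict to $\bm{x}$ that is $s$-sparse with support $\mathcal{T}$, $|\mathcal{T}| \le s$, and write $\bm{x} = \I_{\mathcal{T}} \bm{x}_{\mathcal{T}}$. Then $\bm{P}'\bm{x} = \bm{P}'\I_{\mathcal{T}}\bm{x}_{\mathcal{T}} = (\I_{\mathcal{T}}'\bm{P})'\bm{x}_{\mathcal{T}}$, so $\|\bm{P}'\bm{x}\|_2 \le \|\I_{\mathcal{T}}'\bm{P}\|_2 \|\bm{x}_{\mathcal{T}}\|_2 \le \kappa_s(\bm{P})\|\bm{x}\|_2$, since $\|\bm{x}_{\mathcal{T}}\|_2 = \|\bm{x}\|_2$ and $|\mathcal{T}|\le s$. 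Combining with the identity above,
\[
\|\bm{Q}\bm{x}\|_2^2 = \|\bm{x}\|_2^2 - \|\bm{P}'\bm{x}\|_2^2 \ge \bigl(1 - \kappa_s(\bm{P})^2\bigr)\|\bm{x}\|_2^2,
\]
which shows $\delta_s(\bm{Q}) \le \kappa_s(\bm{P})^2$. For the reverse inequality, by the definition of $\kappa_s$ there is a set $\mathcal{T}^*$ with $|\mathcal{T}^*|\le s$ and a unit vector $\bm{w}$ with $\|\I_{\mathcal{T}^*}'\bm{P}\bm{w}\|_2 = \kappa_s(\bm{P})$; applying $\bm{P}'$ and using $\bm{P}'\bm{P}=\I$, the vector $\bm{z} := \I_{\mathcal{T}^*}'\bm{P}\bm{w}$ is $s$-sparse and satisfies $\|\bm{P}'\I_{\mathcal{T}^*}\bm{z}\|_2 = \|\bm{P}'\bm{P}\bm{w}\|_2\cdot(\text{appropriate normalization})$ — more carefully, one picks $\bm{x} = \I_{\mathcal{T}^*}\I_{\mathcal{T}^*}'\bm{P}\bm{w}$ and checks that $\|\bm{P}'\bm{x}\|_2 / \|\bm{x}\|_2$ attains $\kappa_s(\bm{P})$, so the lower bound constant $1-\kappa_s(\bm{P})^2$ cannot be improved and $\delta_s(\bm{Q}) = \kappa_s(\bm{P})^2$.

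I expect the only delicate point to be the reverse inequality: verifying that the maximizer in the definition of $\kappa_s$ yields an $s$-sparse vector $\bm{x}$ for which the ratio $\|\bm{P}'\bm{x}\|_2/\|\bm{x}\|_2$ equals exactly $\kappa_s(\bm{P})$, rather than merely being bounded by it. This requires choosing $\bm{x}$ as the projection onto $\operatorname{range}(\I_{\mathcal{T}^*})$ of the relevant singular vector of $\I_{\mathcal{T}^*}'\bm{P}$ and a short computation with $\bm{P}'\bm{P} = \I$; everything else is a routine application of the projector identity and the definition of the RIC. Since this is stated as Lemma \ref{kappadelta} and cited from \cite{ReProCS_IT}, I would present the forward direction in full (as it is what the rest of the paper uses, via the bound $\delta_s \le \kappa_s^2$) and remark that equality holds by the tightness of the $\kappa_s$ maximizer.
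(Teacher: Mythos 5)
Your argument is correct and is the natural one; the paper does not print a proof of Lemma~\ref{kappadelta} (it simply cites \cite{ReProCS_IT}), and the projector identity $\|(\I-\bm{P}\bm{P}')\bm{x}\|_2^2=\|\bm{x}\|_2^2-\|\bm{P}'\bm{x}\|_2^2$ combined with
$\sup_{\bm{x}\ \text{$s$-sparse},\ \|\bm{x}\|_2=1}\|\bm{P}'\bm{x}\|_2=\max_{|\mathcal{T}|\le s}\|\I_\mathcal{T}'\bm{P}\|_2=\kappa_s(\bm{P})$
is exactly what the cited lemma rests on.

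One clean-up in your reverse direction: the intermediate object $\bm{z}:=\I_{\mathcal{T}^*}'\bm{P}\bm{w}$ is a vector of length $|\mathcal{T}^*|$, \emph{not} an $s$-sparse vector in $\R^n$, and the interim equality $\|\bm{P}'\I_{\mathcal{T}^*}\bm{z}\|_2=\|\bm{P}'\bm{P}\bm{w}\|_2\cdot(\text{normalization})$ is not correct, since $\bm{P}'\I_{\mathcal{T}^*}\I_{\mathcal{T}^*}'\bm{P}\ne\bm{P}'\bm{P}$. Your corrected choice $\bm{x}=\I_{\mathcal{T}^*}\I_{\mathcal{T}^*}'\bm{P}\bm{w}$ does work when $\bm{w}$ is the top right singular vector of $\I_{\mathcal{T}^*}'\bm{P}$, since then $\|\bm{x}\|_2=\kappa_s(\bm{P})$ and $\bm{P}'\bm{x}=\kappa_s(\bm{P})^2\bm{w}$. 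Even simpler: for the maximizing set $\mathcal{T}^*$, let $\bm{u}^*$ with $\|\bm{u}^*\|_2=1$ attain $\|\I_{\mathcal{T}^*}'\bm{P}\|_2=\|\bm{P}'\I_{\mathcal{T}^*}\|_2$, and take the $s$-sparse unit vector $\bm{x}=\I_{\mathcal{T}^*}\bm{u}^*$; then $\|\bm{P}'\bm{x}\|_2=\kappa_s(\bm{P})$ directly and the projector identity gives $\|(\I-\bm{P}\bm{P}')\bm{x}\|_2^2=1-\kappa_s(\bm{P})^2$, so $\delta_s=\kappa_s(\bm{P})^2$ exactly.
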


Lemma \ref{kappadelta} says that if the columns of $\bm{P}$ are dense,
then the orthogonal projection onto the orthogonal complement of the range of $\bm{P}$ will have a small RIC.  Our result assumes a bound on $\kappa_s(\bm{P}_{(J)})$ and a tighter bound on $\kappa_s(\bm{P}_{(j),\new})$.

\section{Main Result}\label{results}

\subsection{Main Result}

In this section we state and discuss our main result for the ReProCS algorithm introduced in \cite{ReProCS_IT}.
We restate the ReProCS algorithm as Algorithm \ref{reprocs} and briefly explain its main idea in Section \ref{algosubsec}.

\begin{theorem}[Correctness result for Algorithm \ref{reprocs} under the model given in Section \ref{Problem Definition and Assumptions}]\label{thm1}
\

Pick a $\zeta$ that satisfies
\[
\zeta  \leq  \min\left(\frac{10^{-4}}{r^2},\frac{1.5 \times 10^{-4}}{r^2 f},\frac{1}{r^{3}\gamma^2}\right).
\]
Suppose
\begin{enumerate}

\item $\ds\|(\I - \Phat_{(0)} \Phat_{(0)}{}') \bm{P}_{(0)}\|_2 \le r_0 \zeta$;

\item The algorithm parameters are set as:
\begin{itemize}
\item $K = \left\lceil \frac{\log(0.17c\zeta)}{\log(0.72)}\right\rceil$;
\item $\xi = \sqrt{c}\gamma_{\rmnew} + \sqrt{\zeta}(\sqrt{r} + \sqrt{c})$;
\item $7\xi \leq \omega \leq x_{\min} - 7 \xi$;
\item $\alpha = C (\log(6KJ) + 11\log(n))$ for a constant
$C \geq C_{\add}$ with
\[
 C_{\add} := \frac{4800}{(\zeta \lambda^-)^2}\max\{16,(1.2\xi)^4\}
\]
\end{itemize}
\item The subspace changes slowly enough such that the model parameters satisfy:
\begin{itemize}
\item $t_{j+1} - t_{j} > d \geq K\alpha \ $ for all $j$;
\item $r < \min\{ n , t_{j+1} - t_j \}$ for all $j$;
\item $\sqrt{c}\gamma_{\rmnew} + \sqrt{\zeta}(\sqrt{r} + \sqrt{c}) \leq \frac{x_{\min}}{14}$;
\item $g \leq \sqrt{2}$;
\end{itemize}
\item \label{supchass}The support of $\xt$ changes enough such that for the $\alpha$ chosen above,
\[
h^*(\alpha) \leq h^+ \alpha 
\]
where $h^+ := \frac{1}{200}$.

\item The low dimensional subspace is dense such that
\begin{itemize}
\item $\kappa_{2s}(\bm{P}_{(J)}) \leq 0.3$;
\item  $\max_{j}\kappa_{2s}(\Pjnew) \leq 0.02$.
\end{itemize}
\end{enumerate}

Then, with probability at least $1 - n^{-10}$, at all times $t$

\begin{enumerate}

\item The support of $\xt$ is recovered exactly, i.e. $\hat\T_t = \T_t$

\item The recovery error satisfies:
\begin{align*}
\|\hat{\bm{x}}_t - \xt \|_2 &\leq
\begin{cases}
1.2 \left(1.83\sqrt{\zeta} +  (0.72)^{k-1}\sqrt{c}\gamma_{\rmnew}  \right) &   t \in[t_j + (k-1)\alpha, t_j + k\alpha -1], \ k=1,2, \dots, K \\
2.4\sqrt{\zeta} & t \in [t_j + K \alpha, t_{j+1} - 1]
\end{cases}
\end{align*}

\item The subspace error
$\SE_t := \|( \I - \hat{\bm{P}}_t \hat{\bm{P}}_t{}' ) \Pt \|_2$ satisfies:
\begin{align*}
\SE_t \leq
\begin{cases}
 10^{-2} \sqrt{\zeta} +  0.72^{k-1} & t \in [t_j + (k-1)\alpha, t_j + k\alpha -1], \ k=1,2, \dots, K  \\
10^{-2} \sqrt{\zeta}   &  t \in [t_j + K \alpha, t_{j+1} - 1].
\end{cases}
\end{align*}

\end{enumerate}

\end{theorem}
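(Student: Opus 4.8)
The plan is to follow the two-phase structure underlying the ReProCS analysis of \cite{ReProCS_IT}: at each time $t$, a \emph{projected compressed-sensing (CS)} step recovers $\xt$ from $\bm{y}_t := (\I - \Phat_{t-1}\Phat_{t-1}{}')\mt$, after which $\lhatt := \mt - \xhatt$; and after every block of $\alpha$ frames a \emph{projection-PCA} step updates the subspace estimate from the recovered $\lhatt$'s along the directions orthogonal to the current estimate. Correctness then follows from a double induction: an outer induction over the subspace-change intervals $[t_j,t_{j+1})$, and, within each, an inner induction over the $K$ projection-PCA sub-intervals $\Ijk := [t_j + (k-1)\alpha, t_j + k\alpha -1]$. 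The outer induction hypothesis is that $\SE_{t_j-1}$ is at the $\sqrt{\zeta}$ level --- true for $j=0$ by Assumption~1 --- and the inner claim is exactly the theorem's conclusion $\SE_t \le 10^{-2}\sqrt\zeta + 0.72^{k-1}$ on $\Ijk$, so that after $K = \lceil \log(0.17c\zeta)/\log(0.72)\rceil$ steps the error is pushed back to the $\sqrt\zeta$ level in time for the next interval.

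First I would analyze the CS step assuming the current subspace error is small. Writing $\bm{y}_t = (\I - \Phat_{t-1}\Phat_{t-1}{}')\xt + \bm{b}_t$ with $\bm{b}_t := (\I - \Phat_{t-1}\Phat_{t-1}{}')\lt$, the residual $\bm{b}_t$ is bounded in terms of $\SE_{t-1}$ (split across old and new directions), $\gamma$, $\gamma_{\new}$, $r$ and $c$; Lemma~\ref{kappadelta} together with the denseness bounds $\kappa_{2s}(\bm{P}_{(J)})\le 0.3$, $\max_j\kappa_{2s}(\Pjnew)\le 0.02$ and the smallness of $\SE_{t-1}$ keeps $\delta_{2s}(\I - \Phat_{t-1}\Phat_{t-1}{}')$ below the $\sqrt2-1$ threshold needed for stable $\ell_1$ recovery. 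A standard sparse-recovery bound then controls $\|\xhatt - \xt\|_2$ by a small multiple of the noise level $\xi = \sqrt c\gamma_{\new} + \sqrt\zeta(\sqrt r + \sqrt c)$ (refined to $1.83\sqrt\zeta + 0.72^{k-1}\sqrt c\gamma_{\new}$ on $\Ijk$, since $\bm{b}_t$ shrinks as the estimate improves), and the conditions $7\xi \le \omega \le x_{\min}-7\xi$ and $\sqrt c\gamma_{\new}+\sqrt\zeta(\sqrt r+\sqrt c)\le x_{\min}/14$ ensure that thresholding at $\omega$ followed by a least-squares debiasing step recovers $\T_t$ exactly and yields the stated error bound; then $\|\lhatt-\lt\|_2 = \|\xhatt-\xt\|_2$.

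The heart of the proof is the subspace-update step: show that one projection-PCA step on $\frac1\alpha\sum_{t\in\Ijk}(\I - \Phat_{(j),k-1}\Phat_{(j),k-1}{}')\,\lhatt\,\lhatt{}'(\I - \Phat_{(j),k-1}\Phat_{(j),k-1}{}')$ shrinks the component of the subspace error along the as-yet-unestimated directions $\Pjnew$ by a factor $\approx 0.72$ while leaving the error along already-estimated directions at the $\sqrt\zeta$ level. Substituting $\lhatt = \lt + (\xt-\xhatt)$ and setting $\bm{w}_t := \xt-\xhatt$, I would expand this matrix into a signal term $\frac1\alpha\sum \Pt\at\at{}'\Pt{}'$ (appropriately projected) and cross/noise terms involving $\bm{w}_t$. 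The signal term is handled by matrix Hoeffding-type concentration, applied \emph{conditioned} on the event that all earlier CS and subspace steps succeeded (so that $\Phat_{(j),k-1}$ is essentially deterministic under the conditioning while the $\at$'s on $\Ijk$ are fresh): this is what forces $\alpha \ge C_{\add}(\log(6KJ)+11\log n)$ and, after a union bound over the $\le KJ$ steps, delivers the $1-n^{-10}$ probability. The cross and noise terms are sums of the form $\sum_{t\in\Ijk}(\cdot)_{\T_t}(\cdot)_{\T_t}{}'$; because each $\bm{w}_t$ is supported on $\T_t$ and the supports satisfy \eqref{union}--\eqref{hstar} with $h^*(\alpha)\le \alpha/200$, these are, after a common permutation, banded block matrices, so Lemma~\ref{blockdiag} bounds their spectral norm by $\bigO(\sigma^+ h^*(\alpha))$, which is small precisely because $h^+ = 1/200$ --- this is the ingredient that replaces the assumption on intermediate algorithm estimates used in \cite{ReProCS_IT}. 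Finally a $\sin\theta$ (Davis--Kahan-type) bound converts the eigenvalue gap produced by the concentration step into the claimed geometric decay of $\SE_t$, and one checks that the resulting $\SE_t$ remains small enough to re-run the CS step on $\mathcal{I}_{j,k+1}$ and that the $k=K$ bound feeds the next interval's hypothesis, closing both inductions.

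The main obstacle is this subspace-update analysis, for two reasons. First, the summands in the projection-PCA matrix are not independent across $t$ (through the data-dependence of $\Phat_{(j),k-1}$ and of $\bm{w}_t$), so the concentration inequalities must be invoked on a carefully defined good event, and the contribution of $\bm{w}_t$ --- whose conditional distribution is not well understood --- must be bounded \emph{deterministically} using the exact-support-recovery guarantee rather than probabilistically. Second, and crucially, obtaining a bound on the cross-term matrices that does not reference algorithm estimates is exactly what the support-change model \eqref{union}--\eqref{hstar} and the banded-matrix Lemma~\ref{blockdiag} are for; verifying that concrete motion models actually satisfy $h^*(\alpha)\le \alpha/200$ is itself nontrivial and is carried out separately in Section~\ref{motion}.
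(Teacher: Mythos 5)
Your proposal follows essentially the same architecture as the paper's proof: a double induction over the $J$ subspace-change intervals and the $K$ projection-PCA steps within each, a projected-CS lemma (Lemma~\ref{cslem}) yielding exact support recovery and the deterministic expression $\bm{e}_t = \bm{I}_{\T_t}[(\bm{\Phi}_t)_{\T_t}{}'(\bm{\Phi}_t)_{\T_t}]^{-1}\bm{I}_{\T_t}{}'\bm{\Phi}_t\lt$, a $\sin\theta$ plus matrix-Hoeffding analysis of the PCA step conditioned on the good event $\Gamma_{j,k-1}$ (Lemma~\ref{PCAlem}), and, critically, the block-tridiagonal Lemma~\ref{blockdiag} applied under $h^*(\alpha)\le\alpha/200$ to bound the $\sum_t \bm{e}_t\bm{e}_t{}'$ and $\sum_t \tilde{\bm{\ell}}_t\bm{e}_t{}'$ terms so that no assumption on intermediate algorithm estimates is needed --- precisely the paper's new ingredient. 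One small correction: the data matrix fed into projection-PCA is $\frac{1}{\alpha}\sum_{t\in\Ijk}\bm{\Phi}_{(0)}\hat{\bm{\ell}}_t\hat{\bm{\ell}}_t{}'\bm{\Phi}_{(0)}$ with $\bm{\Phi}_{(0)}=\I - \Phat_{(j-1)}\Phat_{(j-1)}{}'$ (Algorithm~\ref{reprocs} always projects out only the \emph{previously settled} subspace $\Phat_{(j-1)}$, keeping the full new-direction signal in play across all $K$ iterations), not $\I - \Phat_{(j),k-1}\Phat_{(j),k-1}{}'$ as written in your proposal; the latter projector is what the CS step uses, and conflating them would spoil the signal term $\bm{A}_k$ whose smallest eigenvalue drives the $\sin\theta$ gap.
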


\begin{corollary}
In the special case where $(\bm{\Lambda}_{t})_{\new} $ is a multiple of the identity, that is $(\bm{\Lambda}_{t})_{\new} = \lambda_{t,\new}\bm{I}$, then the same result as above can be proven with $h^+ = \frac{1}{88}$.  This assumption would hold for example if only one new direction is added at a time i.e. $c=1$.  For this proof, the Cauchy-Schwarz inequality is not needed, so there will not be a square root in the expression for $b_{4,k}$ on page \pageref{b4}
\end{corollary}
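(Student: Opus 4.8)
The plan is to reuse the proof of Theorem \ref{thm1} almost verbatim and to isolate the single place where the constant $h^+ = 1/200$ is actually used. In that argument $h^+$ enters only through the bound on the projection-PCA perturbation, and specifically through the term denoted $b_{4,k}$ on page \pageref{b4}, which controls the contribution to the sample covariance (used for the subspace update) of the cross-correlation between the new-direction coefficients $\atnew$ and the sparse recovery error $\et$. So the first and essentially only real step is to re-derive the bound on $b_{4,k}$ under the extra hypothesis $\Lamtnew = \lambda_{t,\new}\I$.

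Here is why the scalar-identity assumption helps. The relevant matrix is a sum over $t$ in a length-$\alpha$ window, grouped according to which region $\mathcal{T}_{(i),u}$ the support $\T_t$ lies in, so that each group contributes at most $h^*(\alpha)$ terms and the whole matrix is banded-block-diagonal with a bounded number of bands --- this is exactly the structure exploited in Lemma \ref{blockdiag}. In the general (diagonal but not scalar) case one bounds each block by applying Cauchy--Schwarz across its (at most $h^*(\alpha)$) summands in order to decouple the coefficient weights from the support indicators; after normalizing by $\alpha$ this leaves a factor $\sqrt{h^*(\alpha)/\alpha}$, which is the square root mentioned in the corollary statement and which forces a bound of the form $b_{4,k}\lesssim\sqrt{h^+}$. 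When $\Lamtnew$ is a scalar multiple of the identity the vectors $\atnew$ within a block are isotropic, so up to the scalars $\lambda_{t,\new}\le\lambda^+_{\new}$ each block is a submatrix of an orthogonal projector; its operator norm is then at most $\lambda^+_{\new}$ times the number of bands times $h^*(\alpha)$, with no Cauchy--Schwarz, and normalizing by $\alpha$ gives $b_{4,k}\lesssim h^+$ with no square root. (As noted in the statement, when $c=1$ this hypothesis is automatic, since then $\Lamtnew$ is a $1\times 1$ matrix.)

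With the square-root-free bound in hand I would re-run the chain of inequalities in the projection-PCA step --- the $\sin\theta$ / Davis--Kahan--type bound that produces the geometric decay $\zetasp \le 0.72^{k}(\cdot)$, and hence the $0.72^{k-1}$ term in the bound on $\SE_t$ --- tracking the numerical constants. Because $h^*(\alpha)/\alpha<1$, replacing $\sqrt{h^+}$ by $h^+$ only improves the bound, and the requirement that the perturbation be small enough for the argument to close now reads $h^+\le 1/88$ instead of $h^+\le 1/200$. Nothing else changes: the choices of $K$, $\alpha$, $\zeta$, $\xi$, $\omega$, the denseness bounds through $\kappa_{2s}$ and Lemma \ref{kappadelta}, the high-probability concentration arguments, and the induction over the length-$\beta$ subintervals never use the value of $h^+$ except through this one inequality. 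Hence the same conclusions --- exact support recovery, $\hat\T_t=\T_t$, and the same error bounds on $\xhatt$ and $\SE_t$ --- hold with probability at least $1-n^{-10}$ under the weakened support-change requirement $h^*(\alpha)\le\alpha/88$.

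The work here is bookkeeping rather than new ideas, and that is also where the main obstacle lies: one must check carefully that $b_{4,k}$ is genuinely the only quantity carrying an $h^*(\alpha)$ factor into a place where $h^+$ is needed (in particular that the companion cross terms, and any term in the bound on the expectation of the sample covariance, either do not involve $h^*(\alpha)$ or admit the same simplification), and then recompute the worst-case constant in the perturbation bound precisely enough to certify the value $1/88$. One should also confirm that it is exactly the spread of the eigenvalues of $\Lamtnew$ that the scalar-identity hypothesis kills, while the condition number $f$ of $\Lamt$ on the old directions --- which is controlled separately and still appears in the bound on $\zeta$ --- is irrelevant to this particular step.
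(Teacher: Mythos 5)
Your identification of the decisive step is right: the only place Cauchy--Schwarz enters is in the $k\ge 2$ bound on $b_{4,k}$, i.e.\ on $\bigl\|\frac{1}{\alpha}\sum_t\Ltil_t\et'\bigr\|_2$, and that is where the $\sqrt{8h^+}$ (as opposed to $8h^+$) originates. You are also right that $b_{2,k}$ and the other terms already carry $8h^+$ or no $h^+$ at all, so the scalar hypothesis changes nothing there. However, the mechanism you give for why the scalar hypothesis removes Cauchy--Schwarz is not the correct one. The sum being bounded has the form
\[
\frac{1}{\alpha}\sum_{t\in\Ijk}\Bigl(\bm{D}_*(\bm{\Lambda}_t)_*\bm{D}_{*,k-1}'+\bm{D}_{\rmnew}(\bm{\Lambda}_t)_{\rmnew}\bm{D}_{\rmnew,k-1}'\Bigr)\,\bm{I}_{\T_t}\bigl[{(\bm{\Phi}_t)_{\T_t}}'(\bm{\Phi}_t)_{\T_t}\bigr]^{-1}\bm{I}_{\T_t}',
\]
and the obstacle in the general case is that the left factor varies with $t$ in a way that destroys the block-banded structure on which Lemma \ref{blockdiag} relies, so the paper uses Lemma \ref{CSmat} to decouple it from the supported right factor, at the cost of a square root. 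When $(\bm{\Lambda}_t)_{\rmnew}=\lambda_{t,\rmnew}\bm{I}$, the new-direction part of the left factor becomes $\lambda_{t,\rmnew}\,\bm{D}_{\rmnew}\bm{D}_{\rmnew,k-1}'$ with a \emph{time-independent} matrix factor; one can pull $\bm{D}_{\rmnew}\bm{D}_{\rmnew,k-1}'$ out of the sum, absorb the scalar $\lambda_{t,\rmnew}\le\lambda_{\rmnew}^+$ into $\sigma^+$, and apply Lemma \ref{blockdiag} directly to $\frac{1}{\alpha}\sum_t\lambda_{t,\rmnew}\bm{I}_{\T_t}[\cdot]^{-1}\bm{I}_{\T_t}'$, getting $\zeta_{k-1}^+\lambda_{\rmnew}^+\phi^+\cdot 8h^+$ with no square root. (The $\bm{D}_*$ piece is bounded crudely by $(\zeta_*^+)^2\lambda^+\phi^+$, which is negligible.) Your ``each block is a submatrix of an orthogonal projector'' explanation does not describe what is happening: the blocks that Lemma \ref{blockdiag} sees are $\lambda_{t,\rmnew}[{(\bm{\Phi}_t)_{\T_t}}'(\bm{\Phi}_t)_{\T_t}]^{-1}$, not projectors, and nothing about isotropy of $\atnew$ is used --- only that the $(\bm{\Lambda}_t)_{\rmnew}$ factor commutes with everything. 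Beyond this, you never actually recompute the recursion for $\zeta_{j,k}^+$ with the modified $b_{4,k}$ to certify the value $h^+=\frac{1}{88}$; you simply assert it. Since the whole content of the corollary is that specific constant, the bookkeeping you describe as ``where the main obstacle lies'' is precisely the part that still needs to be carried out.
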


Theorem \ref{thm1} says that if an accurate estimate of the initial subspace is available, the algorithm parameters are set appropriately, the low-dimensional subspace changes slowly enough, the support of the sparse part changes quickly enough, and the low-dimensional subspace is dense, then with high probability the support of the sparse vector will be recovered exactly and the error in estimating both $\xt$ and $\lt$ will be small at all times $t$.  Also, the error in estimating the low-dimensional subspace will be initially large when new directions are added, but decays exponentially to a small constant times $\sqrt{\zeta}$.

The assumptions on the support change of $\xt$ in Theorem \ref{thm1} are stronger than needed. We presented the model as such for simplicity and ease of exposition.  Our proof only uses the support change assumptions for the intervals $[t_j+(k-1)\alpha,t_j+k\alpha-1]$ for $j = 1,\dots,J$ and $k=1,\dots,K$.  These are the intervals where projection-PCA is done (see Algorithm \ref{reprocs}).



\subsection{The ReProCS Algorithm}\label{algosubsec}

The ReProCS algorithm presented here was introduced in \cite{ReProCS_IT}.  A more practical version including heuristics for setting the parameters was given in \cite{han_tsp}.  The basic idea of ReProCS is as follows.
Given an accurate estimate of the subspace where the $\lt$'s lie, projecting the measurement $\mt = \xt + \lt$ onto the orthogonal complement of the estimated subspace will nullify most of $\lt$.  The denseness of $\lt$ implies that this projection will have small RIC (Lemma \ref{kappadelta}) so the sparse recover step will produce an accurate estimate $\hat{\bm{x}}_t$.  Then, subtraction also gives a good estimate $\hat{\bm{\ell}}_t = \mt - \hat{\bm{x}}_t$.  Using these $\hat{\bm{\ell}}_t$, the algorithm successively updates the subspace estimate by a modification of the standard PCA procedure, which we call projection PCA.

\begin{algorithm}[!ht]
\caption{Recursive Projected CS (ReProCS) \cite{ReProCS_IT}}\label{reprocs}
{\em Parameters}:  algorithm parameters: $\xi$, $\omega$, $\alpha$, $K$, model parameters: $t_j$, $c_{j,\rmnew}$ \\
{\em Input}:  $\bm{m}_t$, \\
{\em Output}:  $\xhatt$, $\Lhat_t$, $\Phat_{t}$ \\

Set $\Phat_{t} \leftarrow \Phat_{(0)}$,  $j \leftarrow 1$, $k\leftarrow 1$.

For $t >0$, do the following:
\ben
\item Estimate $\mathcal{T}_t$ and $\bm{x}_t$ via Projected CS:
\ben
\item \label{othoproj} Nullify most of $\bm{\ell}_t$: set $\bm{\Phi}_{t} \leftarrow \bm{I} - \Phat_{t-1} \Phat_{t-1}{}'$, compute $\bm{y}_t \leftarrow \bm{\Phi}_{t} \bm{m}_t$
\item \label{Shatcs} Sparse Recovery: compute $\hat{\bm{x}}_{t,\cs}$ as the solution of $\min_{\bm{x}} \|\bm{x}\|_1 \ s.t. \ \|\bm{y}_t - \bm{\Phi}_{t} \bm{x}\|_2 \leq \xi$
\item \label{That} Support Estimate: compute $\hat{\mathcal{T}}_t = \{i: \ |(\hat{\bm{x}}_{t,\cs})_i| > \omega\}$
\item \label{LS} LS Estimate of $\bm{x}_t$: compute $(\hat{\bm{x}}_t)_{\hat{\mathcal{T}}_t}= ((\bm{\Phi}_t)_{\hat{\mathcal{T}}_t})^{\dag} \bm{y}_t, \ (\hat{\bm{x}}_t)_{\hat{\mathcal{T}}_t^{c}} = 0$
\een
\item Estimate $\bm{\ell}_t$: $\hat{\bm{\ell}}_t = \bm{m}_t - \hat{\bm{x}}_t$.
\item \label{PCA}
Update $\Phat_{t}$: K Projection PCA steps.
\ben

\item If $t = t_j + k\alpha-1$,
\ben

\item $\Phat_{(j),\rmnew,k} \leftarrow$ proj-PCA$\left(\left[\hat{\bm{\ell}}_{t_j+(k-1)\alpha}, \dots, \hat{\bm{\ell}}_{t_j+k\alpha-1}\right],\Phat_{(j-1)},c_{j,\rmnew}\right)$.

\item set $\Phat_{t} \leftarrow [\Phat_{(j-1)} \ \Phat_{(j),\rmnew,k}]$; increment $k \leftarrow k+1$.
\een
Else set $\Phat_{t} \leftarrow \Phat_{t-1}$.

\item If $t = t_j + K \alpha - 1$, then set $\Phat_{(j)} \leftarrow [\Phat_{(j-1)} \ \Phat_{(j),\rmnew,K}]$.

Increment $j \leftarrow j + 1$. Reset $k \leftarrow 1$.
\een
\item Increment $t \leftarrow t + 1$ and go to step 1.
\een

Where,

$\bm{Q} \leftarrow \text{proj-PCA}(\bm{D},\bm{P},r)$
\ben
\item Projection: compute $\bm{D}_{\mathrm{proj}} \leftarrow (\bm{\bm{I} - P P}') \bm{D}$
\item PCA: compute $\frac{1}{\alpha}  \bm{D}_{\mathrm{proj}}{\bm{D}_{\mathrm{proj}}}' \overset{\mathrm{EVD}}{=}
\left[ \begin{array}{cc}\bm{Q} & \bm{Q}_{\perp} \\\end{array}\right]
\left[ \begin{array}{cc} \bm{\Lambda} & 0 \\0 & \bm{\Lambda}_{\perp} \\\end{array}\right]
\left[ \begin{array}{c} \bm{Q}' \\ {\bm{Q}_{\perp}}'\\\end{array}\right]$
where $\bm{Q}$ is an $n \times r$ basis matrix and  $\alpha$ is the number of columns in $\bm{D}$.
\een

\end{algorithm}

\section{Support Change Examples}\label{supchex}\label{motion}
The support change model assumed in Theorem \ref{thm1} is quite general, but also rather abstract.  So in this section, we provide some concrete examples of motion patterns that will satisfy our model.

We introduce the following definition for ease of notation.

\begin{definition}
Define the interval
\[
\mathcal{J}_u := [(u-1)\alpha,u\alpha-1]
\]
for $u = 1,\dots,\lceil\frac{t_{\max}}{\alpha}\rceil$.
\end{definition}

\subsection{Support change by at least $\frac{s}{2}$ every so often.}

\begin{figure}[ht]
\includegraphics[width= \columnwidth]{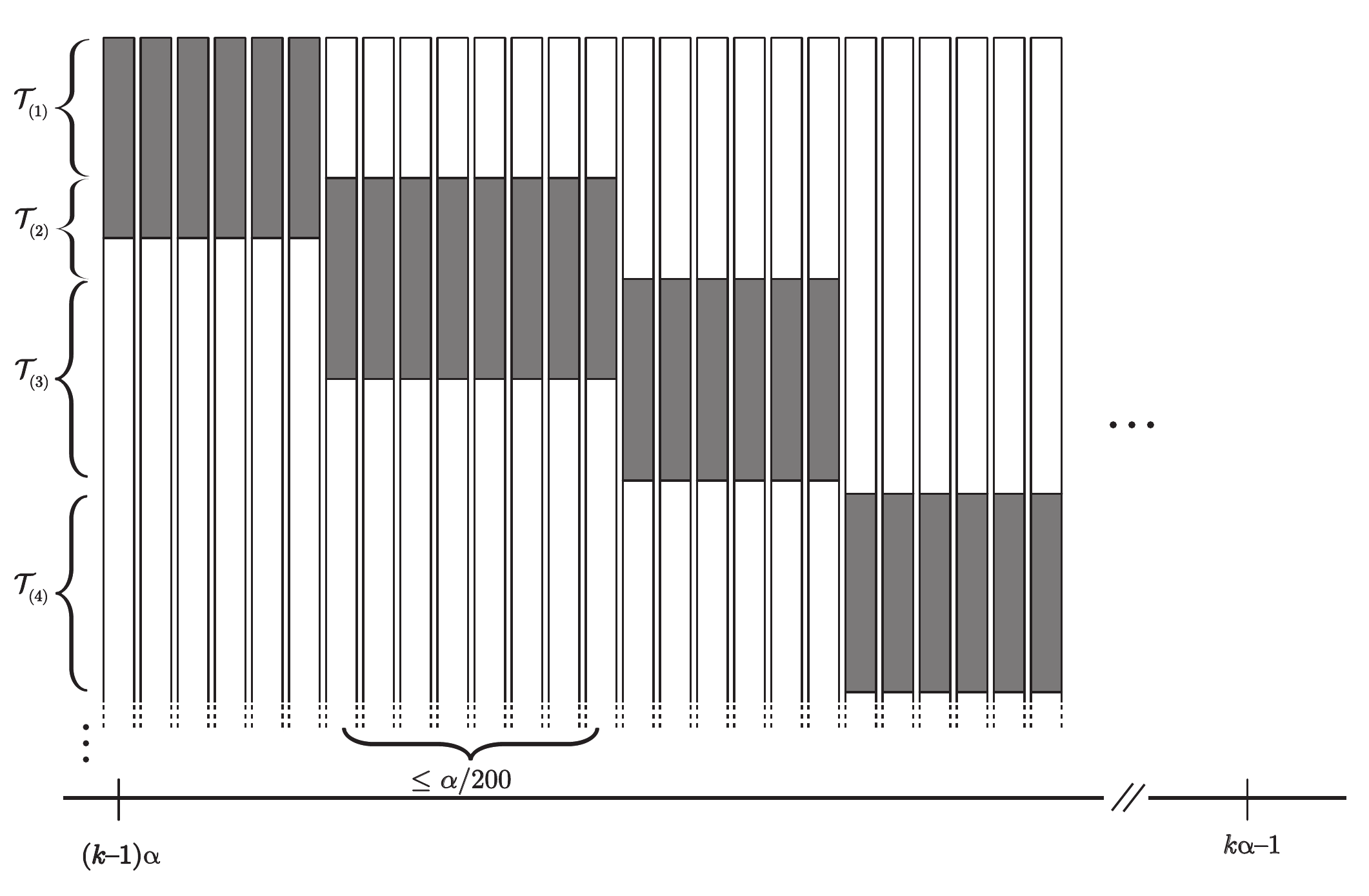}
\caption{\label{supportfig} A visual representation of Example \ref{sby2} and its proof.}
\end{figure}

\begin{example}
\label{sby2}
Consider one dimensional motion where the support moves down the vector.
Also assume that when the support reaches $n$, it starts over at 1.
Let $b_{u}$ be the number of times the support of $\xt$ changes during the interval $\mathcal{J}_{u}$.
If during the interval $\mathcal{J}_{u}$,
\begin{enumerate}
\item $b_{u} \leq \frac{n}{2s}$ ; \label{bupbnd}
\item  the support changes at least once every $\frac{\alpha}{200}$ time instants ;\label{alphaby44}
\item when the support of $\xt$ changes, it moves down by at least $\frac{s}{2}$ indices but fewer than $2s$ indices ; \label{mvmt}
\end{enumerate}
then, $h^*(\alpha) \leq \frac{\alpha}{200}$.
Notice that \ref{alphaby44}) implies that $b_{u}\geq 199$.
\end{example}

To help clarify Example \ref{sby2},
consider
\begin{align*}
\mathcal{T}_{t} &= \{ 1,\dots,10 \} \quad \text{ for the first $\frac{\alpha}{200}$ or fewer frames}, \\
\mathcal{T}_{t} &= \{ 6,\dots,15\}\quad\text{ for the next $\frac{\alpha}{200}$ or fewer frames} ,\\
&\text{and so on.}
\end{align*}

\begin{proof}

First notice that together \ref{bupbnd}) and \ref{mvmt}) ensure that during $\mathcal{J}_{u}$, the object does not revisit indices that it has previously occupied.
If for each interval $\mathcal{J}_{u}$, we can construct one set of $\mathcal{T}_{(i),u}$'s so that $\big| \{ t \in [(u-1)\alpha,u\alpha-1] \ : \ \T_t \subseteq \mathcal{T}_{(i),u}\cup \mathcal{T}_{(i+1),u}\} \big| \leq \frac{\alpha}{200}$, we will have shown that for these $\mathcal{T}_{(i),u}$, $h(\alpha)\leq \frac{\alpha}{200}$.  Because $h^*(\alpha)$ takes the minimum over choices of $\mathcal{T}_{(i)}$, we will be done.

Each interval $[(u-1)\alpha,u\alpha-1]$ is treated in the same way, so we remove the subscript $u$ for simplicity.
Let $\mathcal{T}^{[j]}$ for $j = 1, \dots ,m$ be the distinct supports during the interval $\mathcal{J}_u$.  That is $\mathcal{T}_t = \mathcal{T}^{[j]}$ for some $j$.
Now define $\mathcal{T}_{(i)} = \mathcal{T}^{[i]} \setminus \mathcal{T}^{[i+1]}$ for $i = 1, \dots, m-1$,
and $\mathcal{T}_{(m)} = \mathcal{T}^{[m]}$.
Because the support moves by at least $\frac{s}{2}$ indices, $\mathcal{T}^{[j]}$ only has overlap with $\mathcal{T}^{[j-1]}$ and $\mathcal{T}^{[j+1]}$ (In particular, $\mathcal{T}^{[j]}\subseteq \big(\mathcal{T}^{[j+2]}\big)^c$ ).
Thus the $\mathcal{T}_{(i)}$ are disjoint.
Next we show that $\mathcal{T}^{[j]} \subseteq \mathcal{T}_{(j)}\cup \mathcal{T}_{(j+1)}$.
\begin{align*}
\mathcal{T}^{[j]} &= \left(\mathcal{T}^{[j]} \cap \big(\mathcal{T}^{[j+1]}\big)^c \right) \cup \left( \mathcal{T}^{[j]} \cap \mathcal{T}^{[j+1]} \right)\\
&= \mathcal{T}_{(j)} \cup \left( \mathcal{T}^{[j]} \cap \mathcal{T}^{[j+1]} \right) \\
&\subseteq \mathcal{T}_{(j)} \cup \left( \big(\mathcal{T}^{[j+2]}\big)^c \cap \big(\mathcal{T}^{[j+1]}\big) \right) \\
&\subseteq \mathcal{T}_{(j)} \cup \mathcal{T}_{(j+1)}
\end{align*}
Finally, by \ref{alphaby44}), the support changes at least once every $\frac{\alpha}{200}$ time instants, so with the $\mathcal{T}_{(i)}$ chosen this way, $h(\alpha)$, defined in \eqref{alphabyp}, will be less than $\frac{\alpha}{200}$.  Since $h^*(\alpha)$ is the minimum over all choices of $\mathcal{T}_{(i)}$, $h^*(\alpha)\leq \frac{\alpha}{200}$.
\end{proof}

\begin{remark}
As per Remark \ref{dsubset}, we can replace $\frac{s}{2}$ above with $\frac{s}{d}$.  In many cases this is a more realistic assumption.  As noted previously, this will require a smaller bound on $h^*(\alpha)$.  For example, if $d=5$, then we would need $h^*(\alpha)\leq\frac{\alpha}{1250}$, or if $d=10$, then $h^*(\alpha)$ would have to be less than $\frac{\alpha}{5000}$.
\end{remark}

\begin{example}[Probabilistic model on support change]\label{probsupch}

Again consider one-dimensional motion of the support of $\xt$,
and let $o_t$ be its center at time $t$.
Suppose that the support moves according to the model
\begin{equation}\label{probsupp}
o_t = o_{t-1} + \theta_t ( 0.6s + \nu_t )
\end{equation}
where $\nu_t$ is Gaussian $\mathcal{N}(0,\sigma^2)$ and $\theta_t$ is a Bernoulli random variable that takes the value 1 with probability $q$ and 0 with probability $1-q$.  Again, when the object reaches $n$, it starts over at 1.  Assume that $\{ \nu_t \}$, $\{ \theta_t \}$ are mutually independent and independent of $\{ \bm{a}_t\}$ for $t = 1,\dots,t_{\max}$.  So according to this model, at each time instant the object moves with probability $q$ and remains stationary with probability $1-q$.  Also, when the object moves, it moves by $0.6s$ plus some noise.  This noise can be viewed as a random acceleration.
\end{example}


\begin{lem}\label{problem}
Under the model of Example \ref{probsupch}, if
\begin{enumerate}[(i)]


\item the length of the data record, $t_{\max} \leq n^{10}$;

\item $\ds\sigma^2 =\frac{\rho s^2}{\log(n)}$ for a $\ds\rho \leq \frac{1}{4000}$;




\item $q \geq 1 - \left(  \frac{n^{-10}}{2(t_{\max}+\alpha)} \right)^{\frac{200}{\alpha}}  $ ;

\item $s \leq \frac{n}{2 \alpha}$ \label{sbnd};


\end{enumerate}
then, the sequence of supports $\mathcal{T}_t$ will satisfy Example \ref{sby2} with probability at least $1 - n^{-10}$.
\end{lem}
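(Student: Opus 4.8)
The plan is to show that the random support sequence $\{\mathcal{T}_t\}$ generated by the model of Example~\ref{probsupch} satisfies, with probability at least $1-n^{-10}$, all three numbered hypotheses of Example~\ref{sby2}; the desired bound $h^*(\alpha)\le \alpha/200$ is then immediate from that example. The one-dimensional downward wrap-around geometry of \eqref{probsupp} already matches the set-up of Example~\ref{sby2}, and the displacement $0.6s+\nu_t$ is positive outside a rare event that is controlled when we verify hypothesis~(3) below, so only the three quantitative conditions remain to be checked.

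Hypothesis~(1) of Example~\ref{sby2}, namely $b_u\le n/(2s)$ for every $u$, holds \emph{deterministically}: each interval $\mathcal{J}_u$ contains exactly $\alpha$ time instants and the support can change at most once per instant, so $b_u\le\alpha$, while assumption~(iv), $s\le n/(2\alpha)$, is precisely $\alpha\le n/(2s)$. (Combined with hypothesis~(3) this also yields the ``no revisiting within $\mathcal{J}_u$'' property used in the proof of Example~\ref{sby2}, since a cumulative displacement of at most $b_u\cdot 2s\le n$ cannot wrap around.) No randomness is spent here.

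For hypothesis~(2) I would union-bound over windows of $\lceil\alpha/200\rceil$ consecutive time instants meeting $[1,t_{\max}]$. Such a window contains no support change exactly when $\theta_t=0$ throughout it, which by mutual independence of the $\theta_t$'s has probability $(1-q)^{\lceil\alpha/200\rceil}\le(1-q)^{\alpha/200}\le n^{-10}/(2(t_{\max}+\alpha))$ by assumption~(iii); since there are at most $t_{\max}+\alpha$ such windows, hypothesis~(2) fails with probability at most $\tfrac12 n^{-10}$. For hypothesis~(3), conditioned on a move occurring at time $t$ the displacement $0.6s+\nu_t$ lies in $[\tfrac s2,2s)$ iff $\nu_t\in[-0.1s,1.4s)$; with $\nu_t\sim\mathcal{N}(0,\sigma^2)$, $\sigma^2=\rho s^2/\log n$ and $\rho\le 1/4000$, the normalized threshold is $0.1s/\sigma=0.1\sqrt{(\log n)/\rho}\ge 6\sqrt{\log n}$, so the Gaussian tail bound (in the form $\Pr[\mathcal{N}(0,1)>x]\le \tfrac{1}{x\sqrt{2\pi}}e^{-x^2/2}$) gives $\Pr[\nu_t\notin[-0.1s,1.4s)]\le \tfrac12 n^{-20}$, the $1.4s$ tail being negligible. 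A union bound over the at most $t_{\max}\le n^{10}$ time instants (assumption~(i)) then bounds the failure probability of hypothesis~(3) by $\tfrac12 n^{-10}$. Summing the two bounds, all three hypotheses of Example~\ref{sby2} hold simultaneously with probability at least $1-n^{-10}$, as claimed.

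The argument is essentially routine; the only step needing genuine care is the accounting of the $n^{-10}$ probability budget — in particular, ensuring the Gaussian-tail exponent supplied by $\rho\le 1/4000$ (and the $1/(x\sqrt{2\pi})$ prefactor) is large enough that a union bound over the up to $n^{10}$ possible moves, once combined with the window bound for hypothesis~(2), still fits under $n^{-10}$. Assumptions~(i), (iii) and (iv) are calibrated exactly for this. Note finally that the independence of $\{\nu_t\}$ and $\{\theta_t\}$ from $\{\at\}$ plays no role in this lemma; it is needed only when the conclusion is fed into the subspace analysis in the proof of Theorem~\ref{thm1}.
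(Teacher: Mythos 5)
Your proposal is correct and follows essentially the same route as the paper's own proof: hypothesis (1) is handled deterministically from $b_u\le\alpha$ and $s\le n/(2\alpha)$, hypothesis (2) reduces to bounding the probability of a run of $\alpha/200$ zeros among the Bernoulli $\theta_t$'s with failure probability $\le (t_{\max}+\alpha)(1-q)^{\alpha/200}\le n^{-10}/2$, and hypothesis (3) is a Gaussian tail at threshold $0.1s/\sigma\ge 6\sqrt{\log n}$ followed by a union bound over the $t_{\max}\le n^{10}$ potential moves. The only cosmetic deviations are that the paper bounds (2) via a longest-run estimate per interval $\mathcal{J}_u$ and then multiplies across intervals (rather than your direct union bound over sliding windows, which yields the same $1-(t_{\max}+\alpha)(1-q)^{\alpha/200}$), and that the paper uses the symmetric acceptance region $|\nu_t|\le 0.1s$ where you keep the natural one-sided interval $[-0.1s,1.4s)$ — both give a per-instant failure of order $n^{-20}$.
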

The proof is given in Section \ref{probpf}.  Notice that if $s$ grows faster than $\sqrt{\log(n)}$, then the noise also grows with $n$.

\begin{corollary}\label{probcor}
If the supports of the $\xt$ obey the conditions of Lemma \ref{problem} and all other assumptions of Theorem \ref{thm1} are satisfied, then all conclusions will also hold with probability at least $1 - n^{-10}$.
\end{corollary}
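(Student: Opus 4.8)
The plan is to obtain the corollary by simply chaining together three results already established — Lemma~\ref{problem}, Example~\ref{sby2}, and Theorem~\ref{thm1} — while keeping the two independent sources of randomness separate: the support-motion variables $\{\nu_t\}$, $\{\theta_t\}$ of Example~\ref{probsupch} on one hand, and the coefficient sequence $\{\at\}$ from Section~\ref{ltmodel} on the other. No new concentration inequality or estimate should be required; the content is entirely in how the pieces fit together, and in the observation (already guaranteed by Example~\ref{probsupch}) that these two randomness sources are independent.

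Concretely, I would proceed as follows. First, let $\mathcal{E}_{\mathrm{supp}}$ denote the event, measurable with respect to $\{\nu_t\}$ and $\{\theta_t\}$, that the realized support sequence $\{\T_t\}$ satisfies the three conditions of Example~\ref{sby2}; since the hypotheses (i)--(iv) of Lemma~\ref{problem} are among the assumptions carried over from that lemma (in particular with the same $\alpha$ as in Theorem~\ref{thm1}), Lemma~\ref{problem} gives $\Pr(\mathcal{E}_{\mathrm{supp}}) \ge 1 - n^{-10}$. Second, on $\mathcal{E}_{\mathrm{supp}}$, Example~\ref{sby2} yields $h^*(\alpha) \le \tfrac{\alpha}{200} = h^+\alpha$, which is exactly assumption~\ref{supchass} of Theorem~\ref{thm1}; every remaining hypothesis of Theorem~\ref{thm1} is assumed in the statement of the corollary. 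Third, because $\{\nu_t\},\{\theta_t\}$ are independent of $\{\at\}$, I would condition on an arbitrary realization of the support sequence lying in $\mathcal{E}_{\mathrm{supp}}$ — which makes that support sequence deterministic and leaves the law of $\{\at\}$ unchanged — and invoke Theorem~\ref{thm1}: its conclusions (exact support recovery $\hat\T_t = \T_t$ and the stated bounds on $\|\xhatt - \xt\|_2$ and $\SE_t$ at all $t$) then hold with probability at least $1 - n^{-10}$ over $\{\at\}$. Integrating this conditional bound over $\mathcal{E}_{\mathrm{supp}}$ gives the conclusions unconditionally with probability at least $(1-n^{-10})^2 \ge 1 - 2n^{-10}$.

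The only genuine wrinkle is bookkeeping the two failure probabilities: the naive combination gives $1 - 2n^{-10}$ rather than the advertised $1 - n^{-10}$. I would handle this by noting that neither bound is tight at $n^{-10}$ — the proof of Lemma~\ref{problem} and the proof of Theorem~\ref{thm1} each lose only polynomial factors, so their constants can be adjusted (for instance, replacing the ``$2$'' in hypothesis~(iii) of Lemma~\ref{problem} by ``$4$'', and/or slightly enlarging $C$ in Theorem~\ref{thm1}) so that each event fails with probability at most $\tfrac12 n^{-10}$; the union bound then delivers the clean $1 - n^{-10}$. (Alternatively, one may simply record the constant and state $1 - 2n^{-10}$.) Either way this is routine, and I do not expect any real obstacle: the substance of the corollary is the implication chain Lemma~\ref{problem} $\Rightarrow$ Example~\ref{sby2} $\Rightarrow$ assumption~\ref{supchass} of Theorem~\ref{thm1}, combined with the stated independence of the support randomness from the low-dimensional coefficients.
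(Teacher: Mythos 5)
Your proposal is correct and is essentially the paper's argument made explicit: the paper ``proves'' Corollary~\ref{probcor} only through two short remarks — in Definition~\ref{Gamma_k} (``also include $\{\T_1,\dots,\T_{t_{\max}}\}$ in the definition of $X_{k-1}$. Because the supports are independent of the $\at$'s, when conditioned on $X_{k-1}$ they can be treated as constant'') and at the end of the proof of Lemma~\ref{problem} (``The only modification to the proof of Theorem~\ref{thm1} is in the definition of the random variable $X_{k-1}$''). Both your argument and the paper's rely on exactly the same ingredients: independence of $\{\nu_t\},\{\theta_t\}$ from $\{\at\}$, Lemma~\ref{problem} to ensure Example~\ref{sby2} holds with high probability, and then Theorem~\ref{thm1} conditioned on a fixed good support sequence. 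Your two-stage decomposition ($\mathcal{E}_{\mathrm{supp}}$ first, then $\{\at\}$) is arguably cleaner than folding the supports into $X_{k-1}$, and you correctly flag a point the paper elides: combining the two $1-n^{-10}$ events honestly gives $1-2n^{-10}$, not $1-n^{-10}$. Your suggested remedy (tighten one of the two constants, or simply report $1-2n^{-10}$) is the right fix; this factor of~$2$ is cosmetic but real, and the paper is loose about it.
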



\subsection{Other Generalizations}

\begin{example}\label{severy2}
An easy generalization of Example \ref{sby2} is if 1) and 2) stay the same and 3) is relaxed to:
\begin{enumerate}
\item[3*)] the support of $\xt$ moves between $s$ and $4s$ indices over any two consecutive support changes.
\end{enumerate}

In this case the distinct supports $\mathcal{T}^{[j]}$ only have overlap with $\mathcal{T}^{[j-1]}$ and $\mathcal{T}^{[j+1]}$.  Notice that the proof of Example \ref{sby2} only uses this overlap assumption, and at least $\frac{s}{2}$ indices merely ensures that $\mathcal{T}^{[j]}$ only overlaps with $\mathcal{T}^{[j-1]}$ and $\mathcal{T}^{[j+1]}$.
This  example replaces the requirement of motion by at least $\frac{s}{2}$ and at most $2s$ {\em every} time to only require a total between $s$ and $4s$ in two changes.
\end{example}

We should point out that the above examples allow the object to change in size over time as long as its {\em center} moves by the required amount and the object's size is bounded by $s$.

\begin{example}[Disjoint Supports]\label{disjointsupp}
Again assume 1) and 2) from Example \ref{sby2}.
Further assume that in every interval $\mathcal{J}_{u}$, the distinct supports of the $\xt$ are disjoint.  That is for $t \in \mathcal{J}_{u}$, $\mathcal{T}_t = \mathcal{T}_{(i),u}$ for some $i$
and $\mathcal{T}_{(i),u}\cap \mathcal{T}_{(i'),u} = \emptyset$ for $i\neq i'$.
Then the support change assumptions of Theorem \ref{thm1} are satisfied.
\end{example}

In fact, for disjoint supports we can modify Lemma \ref{blockdiag} to $\|\bm{M}\|_2 \leq 2\sigma^+ h^*(\beta)$.
To prove this simply notice that the off diagonal blocks in \eqref{blockstruct} become zero.
Using this we could relax assumption \ref{supchass} in Theorem \ref{thm1} to:
\[
h^*(\alpha) \leq \frac{\alpha}{50}.
\]

\begin{example}[Motion Every Frame]\label{motioneveryframe}
Suppose that the support of $\xt$ is of a constant size $s$, consists of consecutive indices, and moves in the same direction by between 1 and $a$ indices at every time $t$.
In this case, $b_{u} = \alpha$.  So in order for the support to not revisit old indices in the interval $\mathcal{J}_{u}$, we also need $\alpha \leq \frac{n}{a}$.
Because the theorem assumes $\alpha = C\log(n)$, choosing $\alpha \leq \frac{n}{a}$ is not very restrictive.

In this case we can let $\mathcal{T}_{(i)} = [(i-1)s + 1,i s]$ which will satisfy \eqref{union}.
Here up to $s$ distinct supports may be contained in $\mathcal{T}_{(i)} \cup \mathcal{T}_{(i+1)}$.
For this choice of $\mathcal{T}_{(i)}$ and $\alpha \leq \frac{n}{a}$, it is easy to see that $h^*(\alpha)\leq s$ (see Figure \ref{everyframe}).  Assumption \ref{supchass} of Theorem \ref{thm1} would then become $s \leq \frac{\alpha}{200}$.  With $\alpha$ as chosen in Theorem \ref{thm1} this would restrict the support size $s$ to being logarithmic in $n$.   
\end{example}

\begin{figure}[!ht]
\centering
\includegraphics[scale=.8]{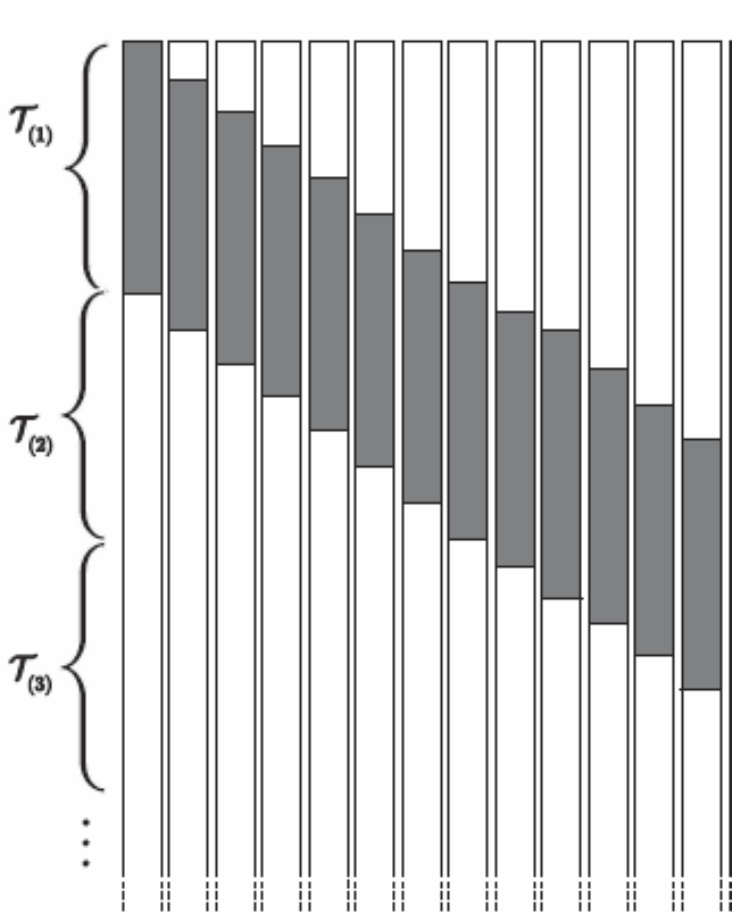}
\caption{Choosing $\mathcal{T}_{(i)}$ when the support moves every frame. \label{everyframe}}
\end{figure}

We make one final remark about our support change assumptions.

\begin{remark}\label{overassumed}
While we assume that the support of $\xt$ changes enough during the intervals $[(u-1)\alpha, u\alpha-1]$ for $u = 1,\dots,\lceil \frac{t_{\max}}{\alpha}\rceil$, our proof only needs the assumption in the intervals where projection-PCA is performed: $[t_j+(k-1)\alpha , t_j + k\alpha-1]$ for $j = 1,\dots, J$ and $k = 1,\dots,K$. Furthermore, our proof does not require anything of the support in the interval $[t_j,t_j + \alpha-1]$ ($k=1$ above).  During this interval we can use subspace denseness to get the bound we need (see Lemma \ref{Dnew0_lem}).
\end{remark}

\section{Discussion of main result and comparison with other works}\label{comparison}

We have proved the main result in Theorem \ref{thm1} and one important corollary for a simple and practical support change model in Corollary \ref{probcor}.
We chose to highlight Example \ref{sby2}, but any of the other examples, Example \ref{severy2}, \ref{disjointsupp}, or \ref{motioneveryframe}. could also be used to get corollaries for Theorem \ref{thm1}.
These results improve upon the result of \cite{ReProCS_IT} by removing the denseness requirements on the quantities $(\bm{I} - \bm{P}_{(j),\rmnew}{\bm{P}_{(j),\rmnew}}')\hat{\bm{P}}_{(j),\rmnew,k}$ and $(\bm{I} - \bm{\hat{P}}_{(j-1)}\bm{\hat{P}}_{(j-1)}{}' -\hat{\bm{P}}_{(j),\rmnew,k}\hat{\bm{P}}_{(j),\rmnew,k}{}')\bm{P}_{(j),\rmnew}$.

First we discuss the data model of Section \ref{Problem Definition and Assumptions}.  This model assumes that after a subspace change, $\|\bm{a}_{t,\rmnew}\|_{\infty}$ and therefore also $\|\Lamtnew\|_2$ are initially small.  After $t_{j}+d$, the eigenvalues of $\Lamtnew$ are allowed to increase up to $\lambda^+$.
Thus a new direction added at time $t_j$ can have variance as large as $\lambda^+$ by $t_{j}+d$.

An important limitation of our results is that we analyze an algorithm (ReProCS) that needs knowledge of some model parameters which is not true of other algorithms such as PCP.
We also require an accurate initial subspace estimate and slow subspace change.
By slow subspace change we mean both that there is a large enough delay between subspace changes,
and that the projection of $\lt$ along the new directions is initially small.
However, as explained and demonstrated in \cite{ReProCS_IT}, both are often valid for video sequences.
Another limiting assumption we make is the zero mean and independence of the $\lt$'s over time.
If a mean background image (obtained by averaging an initial sequence of background only training data) is subtracted from all measurements, then zero mean is valid.
Moreover, if background variation is due to small and random illumination changes, then independence is also valid (or close to valid).
This assumption is used for simplicity, and allows us to apply the matrix Hoeffding inequality.
In \cite{ReProCS_IT} Qiu et. al. suggest that a similar result for a
more realistic autoregressive model on the $\at$'s can be proven using the Azuma inequality (see \cite{zhan_correlated}).
It should be possible to do this for our result as well.


The important assumptions on the low-dimensional vectors are the bounds on $\kappa_{2s}(\bm{P}_J)$ and $\kappa_{2s}(\bm{P}_{(j),\rmnew})$.
The way $\kappa_s$ is defined, these bounds simultaneously place restrictions on denseness of $\lt$, $r = \rank(\bm{P}_J)$, and $s$ (the maximum sparsity of any $\xt$).
To compare our assumptions with those of \cite{rpca}, we could assume $\kappa_1(\bm{P}_{(J)}) \leq \sqrt{\frac{\mu r}{n}}$,
where $\mu$ is any value between $1$ and $\frac{n}{r}$.
It is easy to show that $\kappa_s(\bm{P}) \leq \sqrt{s}\kappa_1(\bm{P})$ \cite{ReProCS_IT}.
So if
\begin{equation}\label{srn}
\frac{2 s r}{n}\leq \mu^{-1}(0.3)^2,
\end{equation}
then our assumption of $\kappa_{2s}(\bm{P}_{(J)}) \leq 0.3$ will be satisfied.
The support change model of Corollary \ref{probcor} requires $s\leq C\frac{n}{\log(n)}$ for a constant $C$ and $J \leq C_1 \log(n)$ for another constant $C_1$.  If we assume $r_0$ and $c$ are constant, then the restriction on $J$ implies that $r \leq r_0 + cJ$ grows as $\log(n)$.  These two assumptions together still imply that $\frac{sr}{n}$ is bounded by a constant (as in \eqref{srn}).  Alternatively, if we use Example \ref{motioneveryframe} to get another corollary of Theorem \ref{thm1}, we will need $s \leq C\log(n)$ and \eqref{srn}.

Up to differences in the constants, \eqref{srn} is the same requirement found in \cite{hsu2011robust} (which studies the PCP program and is an improvement over \cite{rpca2}), except that \cite{hsu2011robust} does not need specific bounds on $s$ and $r$.
The above assumptions on $s$ and $r$ are stronger than those used by \cite{rpca} (which also studies the batch approach PCP).
There $s$ is allowed to grow linearly with $n$, and $r$ is simultaneously allowed to grow as $\frac{n}{\log(n)^2}$.
However, neither comparison is direct because we do not need denseness of the right singular vectors or a bound on the vector infinity norm of $\bm{UV}'$, while \cite{rpca,rpca2,hsu2011robust} do.
Here $\bm{L} = [\bm{\ell}_{1},\dots,\bm{\ell}_{t_{\max}}] \overset{\mathrm{SVD}}= \bm{U\Sigma V}'$.
The reason for our stronger requirement on the product $sr$ is because we study an online algorithm, ReProCS, that recovers the sparse vector $\xt$ at each time $t$  rather than in a batch or a piecewise batch fashion.
Because of this, the sparse recovery step does not use the low dimensional structure of the new (and still unestimated) subspace.



Because we only require that the support changes after a given maximum allowed duration, it can be constant for a certain period of time (or move slowly as in Example \ref{motioneveryframe}). This is a substantially weaker assumption than the independent or uniformly random supports required by \cite{rpca} and \cite{mod_pcp}.
As we explain next, for $r>200$, this is also significantly weaker than the assumption used by \cite{hsu2011robust} (the result of \cite{hsu2011robust} is already an improvement of \cite{rpca2}). Corollary \ref{probcor} allows for at most $\frac{\alpha}{200}$ non-zero entries per row of  $[\bm{x}_{(k-1)\alpha},\dots,\bm{x}_{k\alpha-1}]$. However, in the next $\alpha$ frames, the same supports could be repeated. So if we consider the whole matrix $[\bm{x}_{1},\dots,\bm{x}_{t_{\max}}]$ then at most $\frac{t_{\max}}{200}$ non-zero entries per row are possible. This is an improvement over \cite{hsu2011robust} which requires at most $\frac{t_{\max}}{r}$ non-zero entries per row. In summary, {\em an important advantage} of  our result is that it allows for highly correlated support sets of $\xt$, which is important for applications such as video surveillance that involve one or more moving foreground objects or persons forming the sparse vector $\xt$.
See Section \ref{simulation} for simulations that illustrate this point.

In recent work, Feng et. al. \cite{rpca_stochatistic_optimization} propose a method for recursive robust PCA and prove a partial result for their algorithm.  The approach is to reformulate the PCP program and use this reformulation to develop a recursive algorithm that converges asymptotically to the solution of PCP (so the above comparisons still apply) as long as the basis estimate $\hat{\bm{P}}_t$ is full rank at each time $t$.
Since this result assumes something about the algorithm estimates, it is only a partial result.
Like our result, \cite{rpca_stochatistic_optimization} uses an initial estimate of the low dimensional subspace.  While we require knowledge of $c_{j,\rmnew}$ for each change, \cite{rpca_stochatistic_optimization} only needs the total rank $r$.

Another recent work that uses knowledge of the initial subspace estimate is modified PCP \cite{mod_pcp}.
The proof in \cite{mod_pcp} is a modification of the proof in \cite{rpca}.
Let $\bm{L} \overset{\mathrm{SVD}}= \bm{U\Sigma V'}$.
Modified PCP requires denseness of the columns of $\hat{\bm{P}}_{(0)}$ and $\bm{U}$ which if
$\operatorname{range}(\hat{\bm{P}}_{(0)}) \subseteq \operatorname{range}(\bm{L})$ is the same requirement as in \cite{rpca}.
Modified PCP also requires the same uniformly random supports as \cite{rpca}.
Where modified PCP improves the results of \cite{rpca} is in the assumptions on the right singular vectors $\bm{V}$ and vector infinity norm of $\bm{UV}'$.
These assumptions are only needed on the singular vectors of the unestimated part of the subspace.
Let $(\bm{I} - \hat{\bm{P}}_{(0)}\hat{\bm{P}}_{(0)}{}')\bm{L} \overset{\mathrm{SVD}} = \bm{U}_{\rmnew}\bm{\Sigma}_{\rmnew}{\bm{V}_{\rmnew}}'$.
Then modified PCP needs denseness of the columns of $\bm{V}_{\rmnew}$ and a bound on the vector infinity norm of $\bm{U}_{\rmnew}{\bm{V}_{\rmnew}}'$.
%
As with PCP, the main advantage of our result with respect to modified PCP is that we do not need uniformly random supports of the sparse vectors $\xt$.


Finally, our subspace change model only allows for adding new directions to the subspace. This is a valid model if, at time $t$, the goal is to estimate the column span of the matrix $\bm{L}_t:=[\bm{\ell}_1, \bm{\ell}_2, \dots, \bm{\ell}_t]$, which is the goal in robust PCA. However, when $\xt$ is the quantity of interest and $\lt$ is the large but structured noise, then this model can be restrictive (the denseness assumption imposes a bound on the rank $r$ see \eqref{srn}). In this case a better model would be one that also allows removal of directions from $\bm{P}_{(j)}$,  for example model 7.1 of \cite{ReProCS_IT} and to study the ReProCS-cPCA algorithm introduced there. This will significantly relax the required denseness assumption, and will be done in future work.



\section{Proof of Theorem \ref{thm1}}\label{pfsketch}

In this section we prove our main result.
To do this, we need new proof techniques that allow us to get a correctness result.
Except for the partial results in \cite{ReProCS_IT} and \cite{rpca_stochatistic_optimization}, all existing work on sparse plus low rank matrix decomposition is for batch methods.
We refer to these as partial results, because both require assumptions on estimates produced by the algorithm.
Like \cite{ReProCS_IT}, our proof cannot just be a combination of a sparse recovery result and a result for PCA, because in our PCA step,
the error between the estimated value of $\lt$ and its true value is correlated with $\lt$.
Almost all existing work on finite sample PCA assumes that the two are uncorrelated, e.g. \cite{nadler}.
Our proof is inspired by that of \cite{ReProCS_IT},
but we need a new approach to analyze the subspace estimate update step (step \ref{PCA} in Algorithm \ref{reprocs})
in order to remove the assumption on intermediate algorithm estimates used by the result of \cite{ReProCS_IT}.
The key new idea is to leverage the fact that, because of exact support recovery,
the error $\et := \hat{\bm{x}}_t - \xt = \lt - \hat{\bm{\ell}}_t$ is supported on $\mathcal{T}_t$.
Also, our support change model ensures that $\mathcal{T}_t$ changes at least every so often.
Together, this ensures that the matrix $\sum_t \et {\et} '$ can be written as a block-banded matrix with only three bands,
and the dominant term in  $\E[\sum_t (\I - \hat{\bm{P}}_{(j-1)}\hat{\bm{P}}_{(j-1)}{}') \lt {\et} ']$ can be written as a product of a full matrix and a block banded matrix with only three bands.  Here $\E[\cdot]$ denotes expected value conditioned on accurate recovery through the previous subspace update step.

Our proof strategy is to first show that given an accurate estimate of the current low-dimensional subspace, step 1 (sparse recovery step) of the algorithm will exactly recover the support, $\mathcal{T}_t$, and accurately recover $\xt$.  Since we know $\mt = \xt + \lt$, this also gives an accurate estimate of $\lt$.  The next, and difficult, step of the proof is to show that given the estimates $ \{ \hat{\bm{\ell}}_{t_j + (k-1)\alpha}, \dots, \hat{\bm{\ell}}_{t_j + k\alpha - 1} \}$ and the slow subspace change assumptions, step 2 (PCA step) of the algorithm will produce an accurate estimate of the newly added subspace.

\subsection{Three Main Lemmas}\label{keylemmas}

The proof of Theorem \ref{thm1} essentially follows from three main lemmas and one simple fact.  The actual proof of the theorem is in Appendix \ref{thmpf}.



\begin{definition}\label{defzetajk}
Define the following:
\begin{enumerate}
\item $\zeta_{j,*} := \|( \I - \hat{\bm{P}}_{(j-1)}\hat{\bm{P}}_{(j-1)}{}' ) \Pjm \|_2$
\item $ \zeta_{j,k} := \|(\bm{I} - \Phat_{(j-1)} \Phat_{(j-1)}{}' - \Phat_{(j),\rmnew,k} \Phat_{(j),\rmnew,k}{}') \bm{P}_{(j),\rmnew}\|_2 $
\item $\zeta_{j,*}^+ := (r_0 + (j-1)c)\zeta$

\item  $\zeta_{j,0}^+ := 1$,
\begin{align*}
 \zeta_{j,1}^+ & :=\frac{g\big( 8h^+ (\kappa_{s}^+)^2 (\phi^+)^2 + 2\kappa_{s}^+ \phi^+\big)}
{ D_{j,1} } 
+c\zeta \frac{\frac{(\zeta_{j,*}^+)^2}{c\zeta}f\big( 8h^+(\kappa_{s}^+)^2(\phi^+)^2 + {2\kappa_{s}^+\phi^+} + 2 \big) + \frac{5}{24}}
{D_{j,1} }
\end{align*}
where
\[
D_{j,1} := 1-(\zeta_{j,*}^+)^2(1+f) -\frac{c\zeta}{8} -  (\zeta_{j,*}^+)^2 f \big( 8h^+(\kappa_{s}^+)^2(\phi^+)^2 + {2\kappa_{s}^+\phi^+} + 2 \big) -  g \big( {8h^+(\kappa_{s}^+)^2} +{2\kappa_{s}^+\phi^+}\big) - \frac{5 c \zeta}{24}
\]

and for $k \geq 2$
\begin{align*}
 \zeta_{j,k}^+ & :=\zeta_{j,k-1}^+ \frac{g\big( 8h^+(\phi^+)^2\zeta_{j,k-1}^+ + 2\sqrt{8h^+}\phi^+\big)}
{ D_{j,k} } 
+c\zeta \frac{\frac{(\zeta_{j,*}^+)^2}{c\zeta}f\big( 8h^+(\phi^+)^2 + {2\sqrt{8h^+}\phi^+} + 2 \big) + \frac{5}{24}}
{D_{j,k} }
\end{align*}
where
\[
D_{j,k} := 1-(\zeta_{j,*}^+)^2(1+f) -\frac{c\zeta}{8} -  (\zeta_{j,*}^+)^2 f \big( 8h^+(\phi^+)^2 + {2\sqrt{8h^+}\phi^+} + 2 \big) - \zeta_{j,k-1}^+ g \big( {8h^+(\phi^+)^2\zeta_{j,k-1}^+} +{2\sqrt{8h^+}\phi^+}\big) - \frac{5 c \zeta}{24},
\]
$\kappa_{s}^+ := 0.0215$,
$h^+ := \frac{1}{200}$,
and $\phi^+ := 1.2$.
\end{enumerate}

\end{definition}
We use $\zeta_{j,*}$ as a measure of the error in estimating the previously existing subspace and $\zeta_{j,k}$ as a measure of the estimation error for the newly added subspace.  As will be shown, $\zeta_{j,*}^+$ and $\zeta_{j,k}^+$ are the high probability upper bounds on $\zeta_{j,*}$ and $\zeta_{j,k}$ respectively, under the conditions of Theorem \ref{thm1}.

The following 2 Lemmas are proved in the Appendix.

\begin{lem}[{Exponential decay of $\zeta_{j,k}^+$ (similar to \cite[Lemma 6.1]{ReProCS_IT})}] \label{expzeta}
Assume that the bounds on $\zeta$ from Theorem \ref{thm1} hold.
Then $\zeta_{j,0}^+ = 1$ and $\zeta_{j,k}^+ \leq 0.72^k + 0.83c\zeta$
for $k = 1, 2, \dots, K$.
Also, $\zeta_{j,k}^+ \leq 0.1$ for all $k \geq 1$.
\end{lem}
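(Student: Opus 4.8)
The plan is to extract the explicit recursion that defines the $\zeta_{j,k}^+$ and to show that, after discarding negligible error terms, it is essentially a contraction with ratio below $0.72$ and a tiny additive constant proportional to $c\zeta$. Indeed $\zeta_{j,1}^+$ and, for $k\ge 2$, $\zeta_{j,k}^+$ each have the form $\zeta_{j,k}^+ = \rho_k\,\zeta_{j,k-1}^+ + b_k\,c\zeta$, where $\rho_k$ is the displayed numerator divided by $D_{j,k}$ and $b_k c\zeta$ is the remaining additive term, and both are built only from the fixed constants $\kappa_s^+=0.0215$, $h^+=1/200$, $\phi^+=1.2$ and the model quantities $\zeta_{j,*}^+$, $f$ and $g\le\sqrt2$. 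So I would proceed in three stages: (i) use the $\zeta$-bounds of Theorem \ref{thm1} to make every $\zeta_{j,*}^+$- or $c\zeta$-dependent term in the numerators and in $D_{j,k}$ negligible; (ii) substitute the numeric constants to get clean estimates $\rho_k< 0.72$ and $b_k\le 0.23$ valid once $\zeta_{j,k-1}^+$ is small; (iii) run an induction on $k$ proving $\zeta_{j,k}^+\le 0.72^k+0.83c\zeta$ and $\zeta_{j,k}^+\le 0.1$ together.

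For stage (i) I would use $\zeta_{j,*}^+=(r_0+(j-1)c)\zeta\le r\zeta$, so $\zeta\le 10^{-4}/r^2$ gives $(\zeta_{j,*}^+)^2\le 10^{-4}\zeta$ and $c\zeta\le 10^{-4}$, while $\zeta\le 1.5\times10^{-4}/(r^2 f)$ gives $(\zeta_{j,*}^+)^2 f\le 1.5\times10^{-4}\zeta$ and $\tfrac{(\zeta_{j,*}^+)^2}{c\zeta}f\le 1.5\times10^{-4}$. Hence every term of $D_{j,k}$ other than the $\zeta_{j,k-1}^+\,g(\cdots)$ one is at most about $10^{-4}$, so $D_{j,k}\ge 1-\zeta_{j,k-1}^+\,g\big(8h^+(\phi^+)^2\zeta_{j,k-1}^+ + 2\sqrt{8h^+}\phi^+\big)-10^{-3}$, and the additive term is dominated by its $\tfrac{5}{24}c\zeta$ piece, so $b_k\le \tfrac{5}{24}/D_{j,k}+10^{-3}\le 0.23$. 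For stage (ii), with $g\le\sqrt2$ one has the arithmetic facts $2\sqrt{8h^+}\phi^+=0.48$, $8h^+(\phi^+)^2=0.0576$, $2\kappa_s^+\phi^+=0.0516$, so $g\cdot 2\sqrt{8h^+}\phi^+\le 0.68$ and $g\big(8h^+(\kappa_s^+)^2(\phi^+)^2+2\kappa_s^+\phi^+\big)$ is a small explicit number near $0.073$.

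Stage (iii) is the induction. For $k=1$, since $\zeta_{j,0}^+=1$, the formula for $\zeta_{j,1}^+$ with (i)--(ii) gives $D_{j,1}\ge 0.92$ and $\zeta_{j,1}^+\le 0.073/D_{j,1}+0.23c\zeta$, a small number (below $0.09$), hence $\le 0.72+0.83c\zeta$ and $\le 0.1$. For the step, once $\zeta_{j,k-1}^+$ is small the term $g\cdot 8h^+(\phi^+)^2\zeta_{j,k-1}^+$ in the $k$-th numerator is tiny, so $\rho_k\le\big(g\cdot 2\sqrt{8h^+}\phi^+ + \text{tiny}\big)/\big(1-0.69\,\zeta_{j,k-1}^+-10^{-3}\big)$, and an elementary computation shows this is $<0.72$ as soon as $\zeta_{j,k-1}^+\lesssim 0.07$; since $\zeta_{j,1}^+$ is just above this threshold one runs the $k=2$ step with the weaker bound $\rho_2\le 0.73$, getting $\zeta_{j,2}^+\le 0.06$, and then $\zeta_{j,k}^+\le 0.72\,\zeta_{j,k-1}^+ + 0.23c\zeta$ for all $k\ge 3$. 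Writing $\zeta_{j,k}^+\le u_k+v_k c\zeta$, the scalars obey $u_k=\rho_k u_{k-1}$ and $v_k=\rho_k v_{k-1}+0.23$; from $u_1\le 0.09$, $v_1\le 0.23$ and $\rho_k\le 0.73$ (indeed $\le 0.72$ for $k\ge 3$) one checks by induction that $u_k\le 0.72^k$ and that $v_k$ stays below its fixed point $0.23/(1-0.72)=23/28\approx 0.821<0.83$. This gives $\zeta_{j,k}^+\le 0.72^k+0.83c\zeta$ for $k=1,\dots,K$, and since $u_k\le 0.09$ and $v_k c\zeta\le 0.83\times10^{-4}$, also $\zeta_{j,k}^+\le 0.1$.

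The hard part will be the tightness of the contraction constant: the ``ideal'' ratio $g\cdot 2\sqrt{8h^+}\phi^+=\sqrt2\cdot 0.48\approx 0.68$ is uncomfortably close to the target $0.72$, so one has to first establish that $\zeta_{j,1}^+$ — and hence every subsequent $\zeta_{j,k-1}^+$ — is \emph{well} below $0.07$ before the factor $1/D_{j,k}$ can be allowed to inflate the numerator; this forces the $k=1$ and $k=2$ steps to be treated by hand and only then permits the clean geometric induction for $k\ge 3$. A second delicate point is controlling the accumulated $c\zeta$-terms: a naive summation overshoots $0.83c\zeta$, so one must instead track the scalar recursion $v_k=\rho_k v_{k-1}+0.23$ and note its fixed point is $\approx 0.821<0.83$. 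A final, purely clerical wrinkle is that the $k=1$ recursion carries $\kappa_s^+$ wherever the $k\ge 2$ recursion carries $\sqrt{8h^+}$, so those two cases need separate (but analogous) arithmetic.
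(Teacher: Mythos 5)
Your proposal is correct and follows essentially the same route as the paper: establish the preliminary bounds $\zeta_{j,*}^+\le 10^{-4}$, $c\zeta\le 10^{-4}$, $\zeta_{j,*}^+ f\le 1.5\times 10^{-4}$, $(\zeta_{j,*}^+)^2 f/(c\zeta)\le 1.5\times 10^{-4}$, compute $\zeta_{j,1}^+\lesssim 0.08$ and $\zeta_{j,2}^+\lesssim 0.06$ by hand, then run the geometric induction. Your version is in fact a bit sharper at one point: you observe that the ``$0.72$'' contraction ratio is only safe once $\zeta_{j,k-1}^+$ has dropped to roughly $0.06$--$0.07$ (at $\zeta_{j,k-1}^+=0.1$ the plain numerator-over-$D_{j,k}$ ratio comes out closer to $0.74$), so the first two steps must be treated separately; the paper's displayed line $\zeta_{j,k}^+\le 0.72\,\zeta_{j,k-1}^+ + 0.23 c\zeta$ relies implicitly on the monotonicity $\zeta_{j,k}^+\le\zeta_{j,k-1}^+$ having already pushed $\zeta_{j,k-1}^+$ below $0.06$, which your separate $u_k,v_k$ bookkeeping makes explicit.
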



\begin{lem}[{Sparse Recovery Lemma (similar to \cite[Lemma 6.4]{ReProCS_IT}}]\label{cslem}
Assume that all of the conditions of Theorem \ref{thm1} hold.  Let
\[
\bm{b}_t = (\bm{I} - \Phat_{t-1} \Phat_{t-1}{}') \bm{\ell}_t
\]
 be the noise seen by the sparse recovery step.  If $\zeta_{j',*} \leq \big( r_0 + (j' - 1)c \big)\zeta$, $ j' = 1, \dots, j$ and $\zeta_{j,k'} \leq \zeta_{j,k'}^+$ for $k' = 1, \dots, k-1$, then for $t \in\mathcal{I}_{j,k}$,

\begin{enumerate}

\item  the support of $\xt$ is recovered exactly i.e. $\hat{\mathcal{T}}_t = \mathcal{T}_t$

\item $\phi_t := \| [ ({\bm{\Phi}_{t})_{\mathcal{T}_t}}'(\bm{\Phi}_{t})_{\mathcal{T}_t}]^{-1} \|_2 \leq \phi^+ := 1.2$.

\item $\et$ satisfies:
\begin{align}\label{etdef0}
\et := \hat{\bm{x}}_t - \xt = \bm{I}_{\mathcal{T}_t} {(\bm{\Phi}_{t})_{\mathcal{T}_t}}^{\dag} \bm{\bm{b}}_t = \bm{I}_{\mathcal{T}_t} [ ({\bm{\Phi}_{t})_{\mathcal{T}_t}}'(\bm{\Phi}_{t})_{\mathcal{T}_t}]^{-1}  {\bm{I}_{\mathcal{T}_t}}' \bm{\Phi}_{t} \bm{\ell}_t
\end{align}
and so
\begin{align*}
\|\bm{e}_t\|_2 &\leq
\begin{cases}
 \phi^+ (\zeta_{j,*}^+ \sqrt{r}\gamma +  \zeta_{j,k-1}^+ \sqrt{c}\gamma_{\rmnew}) & t \in [t_j,t_j +d]\\
\phi^+ (\zeta_{j,*}^+ \sqrt{r}\gamma +  c\zeta\sqrt{c}\gamma) & t \in (t_j +d,t_{j+1})
\end{cases}\\
 &\le
\begin{cases}
1.2 \left(1.83\sqrt{\zeta} +  (0.72)^{k-1}\sqrt{c}\gamma_{\rmnew}  \right) &  t \in [t_j,t_j +d] \\
1.2\left( 2\sqrt{\zeta} \right) & t \in (t_j + d, t_{j+1})
\end{cases}
\end{align*}

\end{enumerate}

\end{lem}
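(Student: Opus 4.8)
The plan is to prove the three claims in the stated order, all of them resting on a preliminary bound on the measurement noise $\bm{b}_t = (\I - \Phat_{t-1}\Phat_{t-1}{}')\lt$ that the $\ell_1$ step sees. First I would write $\lt = \Pjm\bm{a}_{t,*} + \Pjnew\atnew$ and use that for $t\in\Ijk$ the subspace estimate has the block form $\Phat_{t-1} = [\Phat_{(j-1)}\ \Phat_{(j),\new,k-1}]$, the second block being empty when $k=1$. Since $\operatorname{span}(\Phat_{(j-1)})\subseteq\operatorname{span}(\Phat_{t-1})$, projecting orthogonal to $\Phat_{t-1}$ only shrinks the $\Pjm$-part, so its residual is at most $\zeta_{j,*}$; the $\Pjnew$-part is controlled by $\zeta_{j,k-1}$, with the convention $\zeta_{j,0}=1$ (i.e. the trivial bound $\|(\I-\Phat_{t-1}\Phat_{t-1}{}')\Pjnew\|_2\le1$) handling $k=1$, and with $\zeta_{j,K}\le c\zeta$ (by the choice of $K$) handling $t>t_j+d$, where the new directions have been absorbed into $\Phat_{(j)}$. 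Combining this with $\|\bm{a}_{t,*}\|_2\le\sqrt r\gamma$, $\|\atnew\|_2\le\sqrt c\gamma_{\new}$ on $[t_j,t_j+d]$ (and $\le\sqrt c\gamma$ afterward), the hypotheses $\zeta_{j',*}\le(r_0+(j'-1)c)\zeta$ and $\zeta_{j,k'}\le\zeta_{j,k'}^+$, the exponential-decay bound of Lemma \ref{expzeta}, and the smallness of $\zeta$ (the third bound in Theorem \ref{thm1}, which lets $\zeta_{j,*}^+\sqrt r\gamma\le\sqrt\zeta$), I obtain both the displayed bounds on $\|\et\|_2$ and the feasibility estimate $\|\bm{b}_t\|_2\le\sqrt c\gamma_{\new}+\sqrt\zeta(\sqrt r+\sqrt c)=\xi$, so that $\xt$ itself is feasible for the program in Step \ref{Shatcs}.

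Second I would bound the restricted isometry constant of $\bm{\Phi}_t = \I - \Phat_{t-1}\Phat_{t-1}{}'$. By Lemma \ref{kappadelta}, $\delta_{2s}(\bm{\Phi}_t)=\kappa_{2s}(\Phat_{t-1})^2$, so I only need to bound $\kappa_{2s}(\Phat_{t-1})$. Applying the elementary facts $\kappa_s(\Qhat)\le\kappa_s(\bm{Q})+\|(\I-\Qhat\Qhat{}')\bm{Q}\|_2$ (from \cite{ReProCS_IT}) and $\|[\bm{M}_1\ \bm{M}_2]\|_2\le\sqrt{\|\bm{M}_1\|_2^2+\|\bm{M}_2\|_2^2}$ to the two blocks of $\Phat_{t-1}$ gives $\kappa_{2s}(\Phat_{t-1})^2\le(\kappa_{2s}(\Pjm)+\zeta_{j,*})^2+(\kappa_{2s}(\Pjnew)+\zeta_{j,k-1})^2$. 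Because the subspaces are nested, $\kappa_{2s}(\Pjm)\le\kappa_{2s}(\bm{P}_{(J)})\le0.3$; together with $\kappa_{2s}(\Pjnew)\le0.02$, $\zeta_{j,*}\le r\zeta$ (negligible), and $\zeta_{j,k-1}\le0.1$ (Lemma \ref{expzeta}; the new block is absent for $k=1$), this makes $\delta_{2s}(\bm{\Phi}_t)$ a numerical constant comfortably below $\sqrt2-1$, and in fact below $1-1/1.2$.

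Third I would assemble the conclusions. For (1): since $\|\bm{b}_t\|_2\le\xi$ and $\delta_{2s}(\bm{\Phi}_t)<\sqrt2-1$, a standard RIP-based noisy recovery guarantee (\cite{candes_rip}) gives $\|\hat{\bm{x}}_{t,\cs}-\xt\|_\infty\le\|\hat{\bm{x}}_{t,\cs}-\xt\|_2\le7\xi$; together with $7\xi\le\omega\le x_{\min}-7\xi$, every entry of $\hat{\bm{x}}_{t,\cs}$ on $\T_t$ then exceeds $\omega$ while every entry off $\T_t$ is below $\omega$, so Step \ref{That} returns $\hat{\T}_t=\T_t$. For (2): with $\hat{\T}_t=\T_t$ and $|\T_t|\le s$, the RIP lower bound yields $\phi_t=\|[(\bm{\Phi}_t)_{\T_t}{}'(\bm{\Phi}_t)_{\T_t}]^{-1}\|_2=\lambda_{\min}((\bm{\Phi}_t)_{\T_t}{}'(\bm{\Phi}_t)_{\T_t})^{-1}\le(1-\delta_s(\bm{\Phi}_t))^{-1}\le(1-\delta_{2s}(\bm{\Phi}_t))^{-1}\le1.2=\phi^+$. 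For (3): exact support recovery makes $(\bm{\Phi}_t)_{\T_t}$ full column rank, so substituting $\bm{y}_t=\bm{\Phi}_t\mt=(\bm{\Phi}_t)_{\T_t}(\xt)_{\T_t}+\bm{b}_t$ into the LS formula of Step \ref{LS} and using that the pseudoinverse is a left inverse gives $\et=\bm{I}_{\T_t}((\bm{\Phi}_t)_{\T_t})^{\dag}\bm{b}_t$; writing out the pseudoinverse and using $\bm{b}_t=\bm{\Phi}_t\lt$ together with $\bm{\Phi}_t^2=\bm{\Phi}_t$ recovers exactly \eqref{etdef0}, after which $\|\et\|_2\le\phi_t\|\bm{b}_t\|_2\le\phi^+\|\bm{b}_t\|_2$ and the displayed error bounds follow from the first paragraph's estimate of $\|\bm{b}_t\|_2$.

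The main obstacle is the second paragraph. The hypotheses of Theorem \ref{thm1} constrain the \emph{true} subspaces $\bm{P}_{(J)},\Pjnew$ and the error quantities $\zeta_{j,*},\zeta_{j,k}$, but the $\ell_1$ step actually runs with $\Phat_{t-1}$, whose column span is a block sum of \emph{estimated} subspaces, so one must carefully propagate the denseness and error bounds through this block structure. The dependence on $k$ also needs care --- in particular the degenerate $k=1$ case, where denseness of $\Pjnew$ is unavailable and its role is played instead by the trivial norm bound together with the smallness of $\|\atnew\|$ during $[t_j,t_j+d]$ --- and one must check that every numerical constant lines up so that $\delta_{2s}(\bm{\Phi}_t)$ stays below the $\ell_1$ recovery threshold and $\phi_t$ stays below $1.2$.
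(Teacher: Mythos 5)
Your proposal is correct and follows essentially the same route as the paper's own proof: bound $\|\bm b_t\|_2\le\xi$ from the decomposition $\lt=\Pjm\bm a_{t,*}+\Pjnew\atnew$ and the inductive subspace-error bounds; bound $\delta_{2s}(\bm\Phi_t)$ via $\kappa_{2s}(\Phat_{t-1})$; invoke the RIP-based noisy-$\ell_1$ guarantee of \cite{candes_rip} to get the $7\xi$ CS error and hence $\hat\T_t=\T_t$; and then read off the LS error formula \eqref{etdef0} and its norm bound from $\bm\Phi_t^2=\bm\Phi_t$ and $\phi_t\le(1-\delta_s(\bm\Phi_t))^{-1}$.

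One small imprecision worth flagging: when you bound $\kappa_{2s}(\Phat_{(j),\new,k-1})$ by $\kappa_{2s}(\Pjnew)+\zeta_{j,k-1}$, the quantity you actually want from your "elementary fact'' is $\|(\I-\Phat_{(j),\new,k-1}\Phat_{(j),\new,k-1}{}')\Pjnew\|_2$, which is \emph{not} $\zeta_{j,k-1}$ --- by definition $\zeta_{j,k-1}$ also projects off $\Phat_{(j-1)}$. The correct intermediate inequality is $\|(\I-\Phat_{(j),\new,k-1}\Phat_{(j),\new,k-1}{}')\Pjnew\|_2\le\zeta_{j,k-1}+\|\Phat_{(j-1)}{}'\Pjnew\|_2\le\zeta_{j,k-1}+\zeta_{j,*}$, which is exactly the $\zeta_k+\zeta_*$ appearing in the paper's Lemma \ref{RIC_bnd}. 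Since $\zeta_{j,*}^+\le10^{-4}$ this extra term is numerically innocuous (and you do call $\zeta_{j,*}$ "negligible''), so the conclusion $\delta_{2s}(\bm\Phi_t)<0.1479<1-1/1.2$ still holds; but the algebra as written skips that cross-term. The paper's Lemma \ref{RIC_bnd} also uses a slightly different (squared) bound on $\kappa_s(\Phat_*)$, namely $\kappa_s^2(\Phat_*)\le(\kappa_{s,*})^2+2\zeta_*$, rather than $(\kappa_{s,*}+\zeta_*)^2$; both are valid and give comparable numerics. Everything else --- the handling of $k=1$ via $\zeta_{j,0}=1$ and the smallness of $\gamma_{\new}$ on $[t_j,t_j+d]$, the use of $\zeta_{j,K}\le c\zeta$ for $t>t_j+d$, the $\omega$-threshold argument for exact support recovery, and the derivation of the closed-form LS error --- matches the paper.
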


Notice from the proof that unlike \cite{ReProCS_IT}, Lemma \ref{cslem} does not use denseness of $(\bm{I} - \bm{P}_{(j),\rmnew}{\bm{P}_{(j),\rmnew}}')\hat{\bm{P}}_{(j),\rmnew,k}$.

The next lemma gives a lower bound on the probability of the newly added subspace being estimated accurately.
\begin{lem}[Subspace Recovery Lemma]\label{PCAlem}\label{mainlem}
Assume that all of the conditions of Theorem \ref{thm1} hold.  Then,
\begin{multline*}
\mathbb{P}\Big(  \zeta_{j,k} \leq \zeta_{j,k}^+ \ \big| \ \zeta_{j',*} \leq \zeta_{j',*}^+ \text { for } j' = 1, \dots, j
 \text{ and } \zeta_{j,k'} \leq \zeta_{j,k'}^+ \text{ for } k' = 1, \dots, k-1 \Big) \geq p(\alpha,\zeta)
\end{multline*}
where $p(\alpha,\zeta) = 1 - p_a(\alpha,\zeta) - p_b(\alpha,\zeta) - p_c(\alpha,\zeta)$ is a quantity that is increasing in $\alpha$.  The definitions of $p_a(\alpha,\zeta), p_b(\alpha,\zeta)$, and $p_c(\alpha,\zeta)$ can be found in Lemma \ref{termbnds}.
\end{lem}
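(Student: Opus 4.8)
The plan is to analyze one projection-PCA step in isolation, conditioned on the event that all prior subspace estimates are accurate (i.e.\ $\zeta_{j',*} \le \zeta_{j',*}^+$ for $j' \le j$ and $\zeta_{j,k'} \le \zeta_{j,k'}^+$ for $k' < k$), and to bound $\zeta_{j,k}$ using a $\sin\Theta$-type perturbation argument (Davis--Kahan / $\sin\theta$ theorem) applied to the EVD computed in the proj-PCA subroutine. First I would invoke Lemma~\ref{cslem}: under the conditioning, the support is recovered exactly on the interval $\Ijk := [t_j+(k-1)\alpha, t_j+k\alpha-1]$, $\phi_t \le \phi^+ = 1.2$, and crucially the error $\et = \hat{\bm x}_t - \xt = \lt - \hat{\bm\ell}_t$ is supported on $\T_t$ and has the closed form \eqref{etdef0}. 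Then I would write the matrix $\frac1\alpha \sum_{t\in\Ijk} \hat{\bm\ell}_t\hat{\bm\ell}_t'$, projected onto $(\I - \hat{\bm P}_{(j-1)}\hat{\bm P}_{(j-1)}{}')$, as a ``signal'' term involving $\atnew$ plus perturbation terms built from $\et$ and the cross terms $\lt\et'$. The decomposition $\hat{\bm\ell}_t = \lt - \et = \Pjm \bm a_{t,*} + \Pjnew\atnew - \et$ yields, after projection, a leading term $\egam \Lambda \egam{}'$ where $\egam$ is close to $(\I-\hat{\bm P}_{(j-1)}\hat{\bm P}_{(j-1)}{}')\Pjnew$, plus noise/perturbation matrices.

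The heart of the argument is bounding the perturbation terms. Here I would split the computation of $\frac1\alpha\sum_t \hat{\bm\ell}_t\hat{\bm\ell}_t'$ into the pieces studied in Lemma~\ref{termbnds} (referenced in the statement): a term $\mathcal{A}$ coming from the $\lt$'s themselves, a term $\mathcal{A}_{\mathrm{new}}$ perturbation, and the ``noise'' terms $\mathcal{H}$ built from $\sum_t\et\et'$ and $\sum_t \lt\et'$ (and its transpose). The key structural observation — and this is the new idea over \cite{ReProCS_IT} — is that because $\et$ is supported on $\T_t$ and the support-change model (Assumption~\ref{supchass}, $h^*(\alpha)\le h^+\alpha$) ensures $\T_t$ moves every so often, the matrix $\sum_{t\in\Ijk}\et\et'$ is block-banded with only three nonzero bands, so by Lemma~\ref{blockdiag} its spectral norm is at most $2\cdot 2^2 \sigma^+ h^*(\alpha) \le 8h^+\alpha\,\sigma^+$ rather than the trivial $\alpha\sigma^+$; similarly the dominant part of $\sum_t(\I-\hat{\bm P}_{(j-1)}\hat{\bm P}_{(j-1)}{}')\lt\et'$ factors as a full matrix times a three-band block-banded matrix. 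This is exactly what produces the factors $8h^+(\phi^+)^2$ and $2\sqrt{8h^+}\phi^+$ (the square root from Cauchy--Schwarz on the cross term) appearing in $D_{j,k}$ and in the numerators of $\zeta_{j,k}^+$ in Definition~\ref{defzetajk}. I would then use denseness — the bounds $\kappa_{2s}(\bm P_{(J)})\le 0.3$ and $\max_j\kappa_{2s}(\Pjnew)\le 0.02$ together with Lemma~\ref{kappadelta} — to control $\|\I_{\T_t}{}'\Phi_t\|_2$ and hence the ``size'' of $\et$ via $\phi^+$, and to bound the deterministic parts.

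To convert these deterministic bounds (which hold on the conditioning event) into the high-probability statement, I would replace the sample covariance $\frac1\alpha\sum_t \atnew\atnew{}'$ and the cross/noise sums by their conditional expectations plus a concentration error, applying the matrix Hoeffding inequality (valid since the $\at$'s are bounded and independent over time, so conditioned on the past the summands are bounded independent matrices). This is where $\alpha = C(\log(6KJ)+11\log n)$ with $C\ge C_{\mathrm{add}}$ enters: it makes the deviation terms $p_a,p_b,p_c$ — the three ``bad events'' that some sub-matrix deviates from its mean by more than a $\zeta\lambda^-$-scale amount — each at most a small polynomial in $n^{-1}$, and $p(\alpha,\zeta) = 1-p_a-p_b-p_c$ increasing in $\alpha$. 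Finally, on the complement of these bad events I would feed the resulting perturbation bound into the $\sin\theta$ theorem with denominator (eigen-gap) $\lambda^-_{\mathrm{new}} - (\text{perturbation})$; collecting terms, using $g\le\sqrt2$ and the smallness of $c\zeta$ from the choice of $\zeta$, reproduces precisely the recursion defining $\zeta_{j,k}^+$, so $\zeta_{j,k}\le\zeta_{j,k}^+$ on that event, giving the claimed probability bound $p(\alpha,\zeta)$.

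The main obstacle I anticipate is \emph{not} any single inequality but the bookkeeping needed to show the block-banded structure of $\sum_t\et\et'$ and of the dominant term of $\sum_t(\I-\hat{\bm P}_{(j-1)}\hat{\bm P}_{(j-1)}{}')\lt\et'$ rigorously — i.e.\ proving Lemma~\ref{blockdiag} and verifying that the $\mathcal{T}_{(i),u}$-partition from the support model really forces all but three bands to vanish, and then carefully matching the constants $8h^+$, $2\sqrt{8h^+}$, $\phi^+$, $\kappa_s^+$ that propagate through $D_{j,k}$ so that the final recursion closes (and is consistent with Lemma~\ref{expzeta}, which guarantees $\zeta_{j,k}^+\le 0.72^k+0.83c\zeta$). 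Handling the cross term $\lt\et'$ correctly — separating its conditional-mean part (which is the block-banded object) from a zero-mean fluctuation part to which Hoeffding applies — is the delicate conceptual step, since $\et$ and $\lt$ are correlated.
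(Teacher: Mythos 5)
Your proposal is correct and follows essentially the same route as the paper's proof: condition on the event $\Gamma_{k-1}$, invoke Lemma~\ref{cslem} to get exact support recovery and the closed form \eqref{etdef0} for $\et$, write $\hat{\bm\ell}_t = \lt-\et$ and project, bound $\zeta_{j,k}$ via the $\sin\theta$ / Weyl estimate of Lemma~\ref{zetakbnd}, use the block-banded structure (Lemma~\ref{blockdiag}) to bound the conditional means of $\sum_t\et\et'$ and $\sum_t \tilde{\bm\ell}_t\et'$ (with Cauchy--Schwarz producing the $\sqrt{8h^+}$ on the cross term), apply matrix Hoeffding conditionally to get $p_a,p_b,p_c$, and union-bound. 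The only piece you did not explicitly name is the residual term $\bm F_t = \bm E_{\rmnew,\perp}\bm E_{\rmnew,\perp}{}'\tilde{\bm\ell}_t\tilde{\bm\ell}_t{}'\bm E_{\rmnew}\bm E_{\rmnew}{}'$ in the decomposition of $\bm{\mathcal H}_k$, but that is a routine third summand bounded by $(\zeta_*^+)^2\lambda^+$, so the omission is not a gap in the approach.
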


Unlike \cite{ReProCS_IT}, the proof of this lemma does not use denseness of
$(\bm{I} - \hat{\bm{P}}_{(j),\rmnew}\hat{\bm{P}}_{(j),\rmnew}{}')\bm{P}_{(j),\rmnew,k}$.

The proof of Theorem \ref{thm1} follows by the above three lemmas and the following fact.

\begin{fact}\label{zetafact}
If $\zeta_{j,*} \leq \zeta_{j,*}^+$ and $\zeta_{j,K}\leq \zeta_{j,K}^+$, then $\zeta_{j+1,*} \leq \zeta_{j+1,*}^+$.
\end{fact}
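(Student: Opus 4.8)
The plan is to unwind the definition of $\zeta_{j+1,*}$ in terms of the subspace estimate produced at the end of the $j$-th update, split the operator norm across the old and new parts of $\Pj$, and then close a small numerical gap using Lemma \ref{expzeta}. First I would record the structural fact that at $t = t_j + K\alpha - 1$ Algorithm \ref{reprocs} sets $\Phat_{(j)} = [\Phat_{(j-1)}\ \Phat_{(j),\rmnew,K}]$, and that this is a basis matrix: proj-PCA produces $\Phat_{(j),\rmnew,K}$ from the EVD of a matrix whose range lies in $\Span(\I - \Phat_{(j-1)}\Phat_{(j-1)}{}')$, so its columns are orthonormal and orthogonal to those of $\Phat_{(j-1)}$. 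Hence $\Phat_{(j)}\Phat_{(j)}{}' = \Phat_{(j-1)}\Phat_{(j-1)}{}' + \Phat_{(j),\rmnew,K}\Phat_{(j),\rmnew,K}{}'$, so that
\[
\I - \Phat_{(j)}\Phat_{(j)}{}' \;=\; \I - \Phat_{(j-1)}\Phat_{(j-1)}{}' - \Phat_{(j),\rmnew,K}\Phat_{(j),\rmnew,K}{}' \;\preceq\; \I - \Phat_{(j-1)}\Phat_{(j-1)}{}'.
\]

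Next I would write $\Pj = [\Pjm\ \Pjnew]$ and use $\|[\bm{B}_1\ \bm{B}_2]\|_2 \le \|\bm{B}_1\|_2 + \|\bm{B}_2\|_2$ to get
\[
\zeta_{j+1,*} = \|(\I - \Phat_{(j)}\Phat_{(j)}{}')\Pj\|_2 \le \|(\I - \Phat_{(j)}\Phat_{(j)}{}')\Pjm\|_2 + \|(\I - \Phat_{(j)}\Phat_{(j)}{}')\Pjnew\|_2.
\]
For the first term, since $\I - \Phat_{(j)}\Phat_{(j)}{}'$ and $\I - \Phat_{(j-1)}\Phat_{(j-1)}{}'$ are both orthogonal projectors with the former $\preceq$ the latter, one has $\|(\I - \Phat_{(j)}\Phat_{(j)}{}')\bm{z}\|_2^2 = \bm{z}'(\I - \Phat_{(j)}\Phat_{(j)}{}')\bm{z} \le \bm{z}'(\I - \Phat_{(j-1)}\Phat_{(j-1)}{}')\bm{z} = \|(\I - \Phat_{(j-1)}\Phat_{(j-1)}{}')\bm{z}\|_2^2$ for every $\bm{z}$, so the first term is at most $\|(\I - \Phat_{(j-1)}\Phat_{(j-1)}{}')\Pjm\|_2 = \zeta_{j,*} \le \zeta_{j,*}^+$. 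For the second term, substituting the identity above gives exactly $\|(\I - \Phat_{(j-1)}\Phat_{(j-1)}{}' - \Phat_{(j),\rmnew,K}\Phat_{(j),\rmnew,K}{}')\Pjnew\|_2 = \zeta_{j,K} \le \zeta_{j,K}^+$. Combining, $\zeta_{j+1,*} \le \zeta_{j,*}^+ + \zeta_{j,K}^+$.

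It remains to check $\zeta_{j,*}^+ + \zeta_{j,K}^+ \le \zeta_{j+1,*}^+$. Since $\zeta_{j+1,*}^+ = (r_0 + jc)\zeta = \zeta_{j,*}^+ + c\zeta$, this reduces to $\zeta_{j,K}^+ \le c\zeta$; by Lemma \ref{expzeta}, $\zeta_{j,K}^+ \le 0.72^K + 0.83 c\zeta$, and the choice $K = \lceil \log(0.17 c\zeta)/\log(0.72)\rceil$ forces $0.72^K \le 0.17 c\zeta$, hence $\zeta_{j,K}^+ \le c\zeta$, which finishes the argument. There is no genuine obstacle here: the only points requiring a moment's care are that $\Phat_{(j)}$ really does have orthonormal columns (so the projector $\I - \Phat_{(j)}\Phat_{(j)}{}'$ splits additively) and that $K$ has been taken large enough, the latter being exactly what Lemma \ref{expzeta} and the parameter choice provide. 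This is why the statement is a Fact rather than a Lemma.
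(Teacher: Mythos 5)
Your argument is correct and is essentially the paper's own proof, just with the compressed step ``$\zeta_{j+1,*}\le\zeta_{j,*}+\zeta_{j,K}$'' unwound: you split $\bm{P}_{(j)}=[\bm{P}_{(j-1)}\ \bm{P}_{(j),\rmnew}]$, bound the $\bm{P}_{(j-1)}$-part by projector monotonicity using $\hat{\bm{P}}_{(j-1)}\perp\hat{\bm{P}}_{(j),\rmnew,K}$, identify the $\bm{P}_{(j),\rmnew}$-part with $\zeta_{j,K}$, and then close with Lemma \ref{expzeta} and the choice of $K$ exactly as the paper does. No difference in substance, only in the level of detail.
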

\begin{proof}
Because $\hat{\bm{P}}_{j-1} \perp \hat{\bm{P}}_{(j),\rmnew,k}$, we have that $\zeta_{j+1,*} \leq \zeta_{j,*} + \zeta_{j,K}$.  The choice of $K$ and Lemma \ref{expzeta} imply that $\zeta_{j,K}^+ \leq c\zeta$, so $\zeta_{j+1,*} \leq \zeta_{j,*} + c\zeta$.  Finally, notice that $\zeta_{j+1,*}^+ = \zeta_{j,*}^+ + c\zeta$, so $\zeta_{j+1,*} \leq \zeta_{j+1,*}^+$ as desired.
\end{proof}

%

\subsection{A Lemma for Proving Lemma \ref{PCAlem} }

As stated previously, the key contribution of this work is to remove the assumption on algorithm estimates made by previous work.
We give the main lemma used to do this below.

\begin{lem}\label{blockdiag}
Consider a sequence of $s_t \times s_t$ ($s_t = |\mathcal{T}_t|$) symmetric positive-semidefinite matrices $\bm{A}_t$ such that
$\| \bm{A}_t\|_2 \leq \sigma^+$ for all $t$.
Let $\bm{M} = \sum_{t=a}^{b} \bm{I}_{\mathcal{T}_t} \bm{A}_t {\bm{I}_{\mathcal{T}_t}}'$ be an $n \times n$ matrix ($\I$ is an $n\times n$ identity matrix).
If the $\mathcal{T}_t$ satisfy \eqref{union} and $b - a \leq \beta$,
then
\[
\|\bm{M}\|_2 \leq 8 \sigma^+ h^*(\beta) .
\]
Notice that $\bm{M}$ is a large matrix formed by placing the small matrices $\bm{A}_t$ on their respective indices $\mathcal{T}_t$.
\end{lem}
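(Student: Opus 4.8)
The plan is to exploit the fact that the support constraint \eqref{union} forces $\bm{M}$ to be, up to a coordinate permutation, a block-tridiagonal matrix, whose spectral norm can then be controlled band by band. The first step is a reduction to a single block of the partition. Since $b-a\le\beta$, the window $[a,b]$ meets at most two consecutive blocks $[(u-1)\beta,u\beta-1]$ on which \eqref{union} is asserted, so writing $\bm{M}=\bm{M}^{(u)}+\bm{M}^{(u+1)}$ with $\bm{M}^{(u)}:=\sum_{t\in[a,b]\cap[(u-1)\beta,u\beta-1]}\bm{I}_{\mathcal{T}_t}\bm{A}_t{\bm{I}_{\mathcal{T}_t}}'$, it suffices by the triangle inequality to prove $\|\bm{M}^{(u)}\|_2\le 4\sigma^+h^*(\beta)$ for a single block.

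Fix such a block and suppress the index $u$. Choose the mutually disjoint sets $\mathcal{T}_{(1)},\dots,\mathcal{T}_{(l)}$ attaining the inner minimum in the definition \eqref{hstar} of $h^*(\beta)$, so that every group index $i$ has at most $h^*(\beta)$ time instants $t$ of the block with $\mathcal{T}_t\subseteq\mathcal{T}_{(i)}\cup\mathcal{T}_{(i+1)}$; assign to each contributing $t$ the smallest index $i_t$ for which this inclusion holds, so the sets $N_i:=\{t:\,i_t=i\}$ are disjoint with $|N_i|\le h^*(\beta)$. After relabeling the coordinates of $\R^n$ so that $\mathcal{T}_{(1)},\mathcal{T}_{(2)},\dots$ and then the indices used by no $\mathcal{T}_{(i)}$ occupy consecutive blocks of rows and columns (a permutation, hence norm-preserving), each summand $\bm{I}_{\mathcal{T}_t}\bm{A}_t{\bm{I}_{\mathcal{T}_t}}'$ is supported inside the $2\times 2$ super-block indexed by $\{i_t,i_t+1\}$, so $\bm{M}^{(u)}$ is block-tridiagonal: $\bm{M}^{(u)}=\bm{E}+\bm{F}+\bm{F}'$, where $\bm{E}$ is block-diagonal and $\bm{F}$ carries the $+1$ band. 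The $i$th diagonal block of $\bm{E}$ receives contributions only from $t\in N_{i-1}\cup N_i$, hence is a sum of at most $2h^*(\beta)$ matrices of norm $\le\sigma^+$, giving $\|\bm{E}\|_2=\max_i\|\bm{E}_{i,i}\|_2\le 2\sigma^+h^*(\beta)$; the $(i,i+1)$ block of $\bm{F}$ receives contributions only from $t\in N_i$, so $\|\bm{F}_{i,i+1}\|_2\le\sigma^+h^*(\beta)$, and since the blocks $\bm{F}_{i,i+1}$ have pairwise-disjoint row supports $\mathcal{T}_{(i)}$ and pairwise-disjoint column supports $\mathcal{T}_{(i+1)}$ we get $\|\bm{F}\|_2=\max_i\|\bm{F}_{i,i+1}\|_2\le\sigma^+h^*(\beta)$. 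Therefore $\|\bm{M}^{(u)}\|_2\le\|\bm{E}\|_2+2\|\bm{F}\|_2\le 4\sigma^+h^*(\beta)$, and adding the two blocks gives $\|\bm{M}\|_2\le 8\sigma^+h^*(\beta)$.

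An even shorter variant avoids the permutation: from $\bm{A}_t\preceq\sigma^+\bm{I}$ one gets $\bm{I}_{\mathcal{T}_t}\bm{A}_t{\bm{I}_{\mathcal{T}_t}}'\preceq\sigma^+\bm{I}_{\mathcal{T}_{(i_t)}\cup\mathcal{T}_{(i_t+1)}}{\bm{I}_{\mathcal{T}_{(i_t)}\cup\mathcal{T}_{(i_t+1)}}}'$, and summing over $t$ grouped by $i_t$ together with $\sum_i\bm{I}_{\mathcal{T}_{(i)}}{\bm{I}_{\mathcal{T}_{(i)}}}'\preceq\bm{I}$ (disjointness) yields $0\preceq\bm{M}^{(u)}\preceq 2\sigma^+h^*(\beta)\bm{I}$, which in fact proves the stronger bound $4\sigma^+h^*(\beta)$. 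I would nevertheless keep the block-banded picture, because it generalizes transparently to Remark \ref{dsubset}: replacing \eqref{union} by a union of $d$ consecutive sets makes $\bm{M}^{(u)}$ block-banded of bandwidth $d$, each diagonal block then sees $\le d\,h^*(\beta)$ contributions and the $b$th band $\le (d-b)h^*(\beta)$, which sum to $d^2\sigma^+h^*(\beta)$ per block and $2d^2\sigma^+h^*(\beta)$ overall.

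I do not expect a genuine obstacle here: the content is the structural observation plus careful counting. The one place to be careful is the bookkeeping — assigning each $t$ a unique group $i_t$ so that the bound $h^*(\beta)$ may be applied band by band, and remembering that an arbitrary window $[a,b]$ of length $\le\beta$ need not be aligned with the blocks $[(u-1)\beta,u\beta-1]$ and so may straddle two of them; this straddling is the source of one of the two factors of $2$ in the final constant, the other coming from the three-band decomposition.
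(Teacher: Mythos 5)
Your main argument is correct and follows essentially the same route as the paper: split $\bm{M}$ into the two $\beta$-blocks it straddles, pass to the optimal partition from the definition of $h^*(\beta)$, observe that after a permutation similarity the result is block tridiagonal with each diagonal block hit by at most $2h^*(\beta)$ summands and each off-diagonal block by at most $h^*(\beta)$, and conclude via the triangle inequality together with the block-diagonal structure of $\bm{C}\bm{C}'$. The only cosmetic difference is that the paper additionally extends each half to the sum over the \emph{entire} $\beta$-interval via the PSD ordering $\bm{M}_1\preceq\tilde{\bm{M}}_1$ before applying the counting; you instead observe directly that a subset of the window has fewer contributing $t$'s, which is equivalent and in fact a hair cleaner.

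Your second, semidefinite-ordering variant is a genuinely different and shorter route, and it is worth flagging that it gives a strictly stronger conclusion. From $\bm{I}_{\mathcal{T}_t}\bm{A}_t{\bm{I}_{\mathcal{T}_t}}'\preceq\sigma^+\bigl(\bm{I}_{\mathcal{T}_{(i_t)}}{\bm{I}_{\mathcal{T}_{(i_t)}}}'+\bm{I}_{\mathcal{T}_{(i_t+1)}}{\bm{I}_{\mathcal{T}_{(i_t+1)}}}'\bigr)$, summing over $t$ grouped by $i_t$ and using $|N_i|\le h^*(\beta)$ together with $\sum_i\bm{I}_{\mathcal{T}_{(i)}}{\bm{I}_{\mathcal{T}_{(i)}}}'\preceq\bm{I}$ gives $\bm{0}\preceq\bm{M}^{(u)}\preceq 2\sigma^+h^*(\beta)\bm{I}$, hence $\|\bm{M}\|_2\le 4\sigma^+h^*(\beta)$ rather than $8\sigma^+h^*(\beta)$. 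This sidesteps the permutation and the three-band triangle-inequality step entirely, and requires only that the $\bm{A}_t$ be PSD (which the lemma already assumes); it would permit doubling $h^+$ in Theorem~\ref{thm1}. It also generalizes to the $d$-set version of Remark~\ref{dsubset} with a linear rather than quadratic loss: each group $\mathcal{T}_{(k)}$ is covered by $d$ of the windows $\mathcal{T}_{(i)}\cup\dots\cup\mathcal{T}_{(i+d-1)}$, giving $\bm{M}^{(u)}\preceq d\sigma^+h^*(\beta)\bm{I}$ and hence $\|\bm{M}\|_2\le 2d\,\sigma^+h^*(\beta)$, improving the remark's $2d^2\sigma^+h^*(\beta)$. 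The trade-off is that the band-by-band view you kept is what the analysis of $\E\bigl[\sum_t\bm{\Phi}_{(0)}\lt\et{}'\bigr]$ actually needs (that matrix is a product of a full matrix and a block-banded one, not itself PSD), so the tridiagonal picture still has to appear somewhere in the paper; your instinct to retain it is right.

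Your final caution about bookkeeping is appropriate but resolved correctly: assigning each $t$ a unique $i_t$ costs nothing since $|N_i|$ can only be smaller than the count appearing in the definition of $h^*(\beta)$, and the straddling of two $\beta$-blocks is genuinely the source of one factor of $2$ because the two blocks may use unrelated optimal partitions.
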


Lemma \ref{blockdiag} is used in the proof of Lemma \ref{PCAlem} to bound the norm of $ \| \E [ \sum_{t} \et {\et}' ] \|_2$
and $ \| \E [ \sum_{t} (\I - \hat{\bm{P}}_{(j-1)}\hat{\bm{P}}_{(j-1)}{}')\bm{\ell}_t {\et}' ] \|_2$,
both of which have support structure governed by the support of $\et$, which by Lemma \ref{cslem},
is $\mathcal{T}_t$.  Here the expectation is conditioned on accurate recovery in the previous interval,
and the sum is taken over $\alpha$ time instants.

\begin{remark}
For the disjoint supports of Example \ref{disjointsupp}, the above Lemma can be modified to $\|\bm{M}\|_2 \leq 2\sigma^+ h^*(\beta)$, so in Theorem \ref{thm1} would would only need to assume $h^+ = \frac{1}{22}$.
\end{remark}

The key to proving Lemma \ref{blockdiag} is using the fact that
$\mathcal{T}_{t} \subseteq \mathcal{T}_{(i)} \cup \mathcal{T}_{(i+1)}$
to write $\bm{M}$ as a block tridiagonal matrix
with blocks corresponding to the $\mathcal{T}_{(i)}$.

\begin{proof}
First notice that because $b-a\leq\beta$, there exists a $u$ such that $(u-1)\beta \leq a \leq b \leq (u+1)\beta - 1$.  In other words, $[a,b]$ overlaps with at most two of the intervals $[(u-1)\beta,u\beta-1]$. So we can write $\bm{M}$ as
\[
\bm{M} = \sum_{t=a}^{u\beta-1} \bm{I}_{\mathcal{T}_t} \bm{A}_t {\bm{I}_{\mathcal{T}_t}}' + \sum_{t=u\beta}^{b} \bm{I}_{\mathcal{T}_t} \bm{A}_t {\bm{I}_{\mathcal{T}_t}}' .
\]
Let $\bm{M}_1 =  \sum_{t=a}^{u\beta-1} \bm{I}_{\mathcal{T}_t} \bm{A}_t {\bm{I}_{\mathcal{T}_t}}'$ and $\bm{M}_2 = \sum_{t=u\beta}^{b} \bm{I}_{\mathcal{T}_t} \bm{A}_t {\bm{I}_{\mathcal{T}_t}}'$.
We will show that $\|\bm{M}_1\|_2 \leq 4\sigma^+h^*(\beta)$. 

First notice that because the $\bm{A}_t$ are positive semi-definite,
\[
\bm{M}_1 = \sum_{t=a}^{u\beta-1} \bm{I}_{\mathcal{T}_t} \bm{A}_t {\bm{I}_{\mathcal{T}_t}}' \preceq 
\sum_{t=(u-1)\beta}^{u\beta-1} \bm{I}_{\mathcal{T}_t} \bm{A}_t {\bm{I}_{\mathcal{T}_t}}'
\]
So define $\tilde{\bm{M}}_1 := \sum_{t=(u-1)\beta}^{u\beta-1} \bm{I}_{\mathcal{T}_t} \bm{A}_t {\bm{I}_{\mathcal{T}_t}}'$.  Then a bound on $\|\tilde{\bm{M}}_1\|_2 $ gives a bound on $\|\bm{M}_1\|_2$.

Let $\mathcal{T}_{(i),u}$ ($i = 1,\dots,l_u$) be the optimizer in \eqref{hstar} (the definition of $h^*(\beta)$).  In the remainder of the proof we remove the subscript $u$ for convenience.  So $\mathcal{T}_{(i)}$ refers to $\mathcal{T}_{(i),u}$. Now consider a time $t$ for which $\mathcal{T}_t \subseteq \mathcal{T}_{(i)} \cup \mathcal{T}_{(i+1)}$.
\footnote{
If $\mathcal{T}_t \subseteq \mathcal{T}_{(i)}$,
then either $\mathcal{T}_{(i-1)} \cup \mathcal{T}_{(i)}$
or $\mathcal{T}_{(i)} \cup \mathcal{T}_{(i+1)}$
can be used to construct $\bm{A}_{t,\full}$.
The choice is inconsequential.
}
Define $\bm{A}_{t,\full}$ to be $\bm{A}_t$ with rows and columns of zeros appropriately inserted so that
\begin{equation}\label{full}
\bm{I}_{\mathcal{T}_t} \bm{A}_t {\bm{I}_{\mathcal{T}_t}}' = \bm{I}_{\mathcal{T}_{(i)} \cup \mathcal{T}_{(i+1)}} \bm{A}_{t,\full} {\bm{I}_{\mathcal{T}_{(i)}\cup \mathcal{T}_{(i+1)}}}'.
\end{equation}
Such an $\bm{A}_{t,\full}$ exists because $\mathcal{T}_t \subseteq \mathcal{T}_{(i)} \cup \mathcal{T}_{(i+1)} $.
Notice that
\begin{equation}\label{normequal}
\|\bm{A}_{t,\full}\|_2 = \|\bm{A}_{t}\|_2
\end{equation}
because $\bm{A}_{t,\full}$ is permutation similar to
\[
\left[
\begin{array}{cc}
\bm{A}_t & 0 \\
0    & 0
\end{array}
\right]
\]

Since $\mathcal{T}_{(i)}$ and $\mathcal{T}_{(i+1)}$ are disjoint, we can, after permutation similarity, correspondingly partition $\bm{A}_{t,\full}$ for the $t$ such that $\mathcal{T}_t \subseteq \mathcal{T}_{(i)} \cup \mathcal{T}_{(i+1)}$ as
\[
\bm{A}_{t,\full} =
\left[
\begin{array}{ccc}
\bm{A}_{t,\full}^{(i,i)} & \bm{A}_{t,\full}^{(i,i+1)} \\
\bm{A}_{t,\full}^{(i+1,i)} & \bm{A}_{t,\full}^{(i+1,i+1)}
\end{array}
\right].
\]

Notice that because $\bm{A}_t$ is symmetric, $\bm{A}_{t,\full}^{(i+1,i)} = \big( \bm{A}_{t,\full}^{(i,i+1)} \big)'$.

Then,
\begin{align*}
\tilde{\bm{M}}_1 &= \sum_{t=(u-1)\beta}^{u\beta-1} \bm{I}_{\mathcal{T}_t} \bm{A}_t {\bm{I}_{\mathcal{T}_t}}' \\
&= \sum_{i=1}^{l-1} \sum_{t:\mathcal{T}_t \subseteq \mathcal{T}_{(i)}\cup \mathcal{T}_{(i+1)}}  \bm{I}_{\mathcal{T}_{(i)} \cup \mathcal{T}_{(i+1)}} \bm{A}_{t,\full} {\bm{I}_{\mathcal{T}_{(i)}\cup \mathcal{T}_{(i+1)}}}' \qquad \text{by \eqref{full} } \\
&= \sum_{i=1}^{l-1}\sum_{t:\mathcal{T}_t \subseteq \mathcal{T}_{(i)}\cup \mathcal{T}_{(i+1)}}  [ \bm{I}_{\mathcal{T}_{(i)}} \ \bm{I}_{\mathcal{T}_{(i+1)}}] \bm{A}_{t,\full} \left[\begin{array}{c} {\bm{I}_{\mathcal{T}_{(i)}}}' \\ {\bm{I}_{\mathcal{T}_{(i+1)}}}' \end{array}\right] \\
&= \sum_{i=1}^{l-1} \sum_{t:\mathcal{T}_t \subseteq \mathcal{T}_{(i)}\cup \mathcal{T}_{(i+1)}}  \bm{I}_{\mathcal{T}_{(i)}}\bm{A}_{t,\full}^{(i,i)}{\bm{I}_{\mathcal{T}_{(i)}}}' + \bm{I}_{\mathcal{T}_{(i)}}\bm{A}_{t,\full}^{(i,i+1)}{\bm{I}_{\mathcal{T}_{(i+1)}}}' + \bm{I}_{\mathcal{T}_{(i+1)}}\bm{A}_{t,\full}^{(i+1,i)}{\bm{I}_{\mathcal{T}_{(i)}}}' + \bm{I}_{\mathcal{T}_{(i+1)}} \bm{A}_{t,\full}^{(i+1,i+1)} {\bm{I}_{\mathcal{T}_{(i+1)}}}' \\
&= \sum_{i=1}^{l-1} \left[ \bm{I}_{\mathcal{T}_{(i)}}\left( \sum_{t:\mathcal{T}_t \subseteq T_{(i-1)}\cup \mathcal{T}_{(i)}} \bm{A}_{t,\full}^{(i,i)} + \sum_{t:\mathcal{T}_t \subseteq \mathcal{T}_{(i)}\cup \mathcal{T}_{(i+1)}} \bm{A}_{t,\full}^{(i,i)} \right) {\bm{I}_{\mathcal{T}_{(i)}}}'\right.  \\
& \hspace{1 in}\left. + \bm{I}_{\mathcal{T}_{(i)}} \left(\sum_{t:\mathcal{T}_t \subseteq \mathcal{T}_{(i)}\cup \mathcal{T}_{(i+1)}} \bm{A}_{t,\full}^{(i,i+1)}\right) {\bm{I}_{\mathcal{T}_{(i+1)}}}'  +  \bm{I}_{\mathcal{T}_{(i+1)}} \left(\sum_{t:\mathcal{T}_t \subseteq \mathcal{T}_{(i)}\cup \mathcal{T}_{(i+1)}} \bm{A}_{t,\full}^{(i+1,i)}\right) {\bm{I}_{\mathcal{T}_{(i)}}}' \right]
\end{align*}

Because $\mathcal{T}_{(i)}$ and $T_{(j)}$ are disjoint for $i\neq j$, $\tilde{\bm{M}}_1$ has a block tridiagonal structure (by a permutation similarity if necessary):
\begin{equation}\label{blockstruct}
\tilde{\bm{M}}_1 =
\left[
\begin{array}{cccc}
\bm{B}_{(1)} & \bm{C}_{(1)} & 0 & 0\\
\bm{C}_{(1)}' & \bm{B}_{(2)} & \ddots & 0 \\
0          & \ddots &     \ddots & \bm{C}_{(l-1)} \\
0 & 0 & \bm{C}_{(l-1)}' & \bm{B}_{(l)}
\end{array}
\right]
\end{equation}

where
\begin{equation}\label{doubledose}
\bm{B}_{(i)} = \sum_{t:\mathcal{T}_t \subseteq T_{(i-1)}\cup \mathcal{T}_{(i)}} \bm{A}_{t,\full}^{(i,i)} + \sum_{t:\mathcal{T}_t \subseteq \mathcal{T}_{(i)}\cup \mathcal{T}_{(i+1)}} \bm{A}_{t,\full}^{(i,i)}
\end{equation}
and
\begin{equation}\label{B(i)}
\bm{C}_{(i)} = \sum_{t:\mathcal{T}_t \subseteq \mathcal{T}_{(i)}\cup \mathcal{T}_{(i+1)}} \bm{A}_{t,\full}^{(i,i+1)}.
\end{equation}

Now we proceed to bound $\|\tilde{\bm{M}}_1\|_2$:
\begin{align*}
\|\tilde{\bm{M}}_1\|_2 & =
\left\| \begin{array}{cccc}
\bm{B}_{(1)} & \bm{C}_{(1)}  & 0 &0 \\
{\bm{C}_{(1)}}' & \ddots & \ddots & 0 \\
0     &  \ddots & \ddots & \bm{C}_{(l-1)} \\
0 & 0 & {\bm{C}_{(l-1)}}' & \bm{B}_{(l)}
\end{array}\right\|_2 \nonumber\\
&\leq
\left\| \begin{array}{cccc}
\bm{B}_{(1)} & 0  & 0 &0 \\
0 & \ddots & 0 & 0 \\
0     &  0 & \ddots &0 \\
0 & 0 & 0 & \bm{B}_{(l)}
\end{array}\right\|_2
+
\left\| \begin{array}{cccc}
0 & \bm{C}_{(1)}  & 0 &0 \\
0 & 0 & \ddots & 0 \\
0     &  0 & 0 & \bm{C}_{(l-1)} \\
0 & 0 & 0 & 0
\end{array}\right\|_2
+
\left\| \begin{array}{cccc}
0 &0  & 0 &0 \\
{\bm{C}_{(1)}}' & 0 & 0 & 0 \\
0     &  \ddots & 0 &0 \\
0 & 0 & {\bm{C}_{(l-1)}}' & 0
\end{array}\right\|_2. \\
\end{align*}

Call the middle matrix $\bm{C}$, and observe that $\bm{CC}'$ is block diagonal with blocks $\bm{C}_{(i)}{\bm{C}_{(i)}}'$.  So $\|\bm{C}\|_2 = \max_{i}\|\bm{C}_{(i)}\|_2$.

Therefore,
\begin{align*}
\|\tilde{\bm{M}}_1\|_2 &\leq \max_{i}\|\bm{B}_{(i)}\|_2 + 2\max_{i}\|\bm{C}_{(i)}\|_2 \\
&= \max_{i} \bigg\| \sum_{t:\mathcal{T}_t \subseteq T_{(i-1)}\cup \mathcal{T}_{(i)}} \bm{A}_{t,\full}^{(i,i)} + \sum_{t:\mathcal{T}_t \subseteq \mathcal{T}_{(i)}\cup \mathcal{T}_{(i+1)}} \bm{A}_{t,\full}^{(i,i)}\bigg\|_2 + 2 \max_{i} \bigg\| \sum_{t:\mathcal{T}_t \subseteq T_{(i-1)}\cup\T_{(i)}} \bm{A}_{t,\full}^{(i,i+1)} \bigg\|_2  \quad \text{ by \eqref{doubledose} and \eqref{B(i)}}\nn\\
&\leq \max_{i} \left( \sum_{t:\mathcal{T}_t \subseteq T_{(i-1)}\cup \mathcal{T}_{(i)}}\big\|  \bm{A}_{t,\full}^{(i,i)}\big\|_2 + \sum_{t:\mathcal{T}_t \subseteq \mathcal{T}_{(i)}\cup \mathcal{T}_{(i+1)}}\big\|  \bm{A}_{t,\full}^{(i,i)}\big\|_2 \right) + 2 \max_{i} \sum_{t:\mathcal{T}_t \subseteq \mathcal{T}_{(i)}\cup \mathcal{T}_{(i+1)}}\big\|  \bm{A}_{t,\full}^{(i,i+1)} \big\|_2 \nn \\
%
%
&\leq \max_{i} \left( \sum_{t:\mathcal{T}_t \subseteq T_{(i-1)}\cup \mathcal{T}_{(i)}}\big\|  \bm{A}_{t}\big\|_2 + \sum_{t:\mathcal{T}_t \subseteq T_{(i)}\cup\T_{(i+1)}}\big\|  \bm{A}_{t}\big\|_2 \right) + 2 \max_{i} \sum_{t:\mathcal{T}_t \subseteq T_{(i)}\cup\T_{(i+1)}}\big\|  \bm{A}_{t} \big\|_2 \nn \qquad \text{ by \eqref{normequal}}\\
&\leq \max_{i} \left( \sum_{t:\mathcal{T}_t \subseteq T_{(i-1)}\cup \mathcal{T}_{(i)}}\sigma^{+} + \sum_{t:\mathcal{T}_t \subseteq \mathcal{T}_{(i)}\cup \mathcal{T}_{(i+1)}}\sigma^{+} \right) + 2 \max_{i} \sum_{t:\mathcal{T}_t \subseteq \mathcal{T}_{(i)}\cup \mathcal{T}_{(i+1)}}\sigma^{+} \nn \\
&\leq \sigma^{+}h^*(\beta) + \sigma^{+}h^*(\beta) + 2 \sigma^{+}h^*(\beta) \qquad \text{ by the definition of $h^*(\beta)$ and optimality of $\mathcal{T}_{(i)}$ }\nn \\
&\leq {4}{\sigma^+}{h^*(\beta)} \nn
\end{align*}
The exact same argument shows $\|\bm{M}_2\|_2 \leq {4}{\sigma^+}{h^*(\beta)}$, and so by the triangle inequality we have
\[
\|\bm{M}\|_2 \leq 8 \sigma^+ h^*(\beta) .
\]
\end{proof}

\subsection{Proof of Lemma \ref{PCAlem} }

To prove Lemma \ref{PCAlem} we use:
\begin{enumerate}
\item the $\sin \theta$ theorem of Davis and Kahan \cite{davis_kahan},
\item the expression for $\et$ from Lemma \ref{cslem},
\item Lemma \ref{blockdiag} that bounds the norm of a block banded matrix, and
\item the matrix Hoeffding bounds from \cite{tail_bound}.
\end{enumerate}

\begin{remark}
Because this lemma applies for all $j = 1, \dots, J$, we remove the subscript $j$ for
convenience.  So $\zeta_{k}^+$ refers to $\zeta_{j,k}^+$,
$\Phat_{\rmnew,k} $ refers to $\Phat_{(j),\rmnew,k}$, etc.
Also, $\bm{P}_{*}$ refers to $\bm{P}_{(j-1)}$ and $\hat{\bm{P}}_{*}$ refers to $\hat{\bm{P}}_{(j-1)}$.
\end{remark}

The proof of Lemma \ref{PCAlem} requires several definitions.

\begin{definition}
\[
\mathcal{I}_{j,k} := [t_j + (k-1)\alpha, t_j + k\alpha -1]
\]
for $j = 1,\dots,J$ and $k = 1,\dots,K,$
and
\[
\mathcal{I}_{j,K+1} := [t_j +K\alpha, t_{j+1} - 1]
\]
\end{definition}

\begin{definition}\label{defn_Phi}

Define the following
\ben
\item $\Phat_{\rmnew,0} = [.]$ (empty matrix)
\item $\bm{\Phi}_{(k)} := \bm{I} - \Phat_{*} {\Phat_{*}}{}' - \Phat_{\rmnew,k} \Phat_{\rmnew,k}{}' $.

Because $\Phat_{*}\perp\Phat_{\rmnew,k}$, $\bm{\Phi}_{(k)} = ( \bm{I} - \Phat_{*} {\Phat_{*}}{}' )( \bm{I} - \Phat_{\rmnew,k} \Phat_{\rmnew,k}{}' )$

Notice (from Algorithm \ref{reprocs}) that for $t\in \mathcal{I}_{j,k}$, $\bm{\Phi}_t = \bm{\Phi}_{(k-1)}$.

\item $\bm{D}_{\rmnew,k}$, $\bm{D}_{\rmnew}$, $\bm{D}_{*,k}$ and $\bm{D}_{*}$
\ben
\item $\bm{D}_{\rmnew,k} := \bm{\Phi}_{(k)} \bm{P}_{\rmnew}$ and
$\bm{D}_{\rmnew} := \bm{D}_{\rmnew,0} = \bm{\Phi}_{(0)} \bm{P}_{\rmnew}$.

\item $\bm{D}_{*,k} := \bm{\Phi}_{(k)} \bm{P}_{*}$ and $\bm{D}_{*} := \bm{D}_{*,0} = \bm{\Phi}_{(0)} \bm{P}_{*}$.

\item Notice that $\zeta_{0} = \|\bm{D}_{\rmnew}\|_2$, $\zeta_{k} = \|\bm{D}_{\rmnew,k}\|_2$, $\zeta_{*} = \|\bm{D}_{*}\|_2$.
Also, clearly, $\|\bm{D}_{*,k}\|_2 \le \zeta_{*}$.
\een
\een
\end{definition}

\begin{definition}\label{defHk}

\begin{enumerate}
\item Let $\bm{D}_{\rmnew} \overset{QR}{=} \bm{E}_{\rmnew} \bm{R}_{\rmnew}$ denote its reduced QR decomposition, i.e. let $\bm{E}_{\rmnew}$ be a basis matrix for $\Span(\bm{D}_{\rmnew})$ and let $\bm{R}_{\rmnew} = {\bm{E}_{\rmnew}}'\bm{D}_{\rmnew}$.

\item Let $\bm{E}_{\rmnew,\perp}$ be a basis matrix for the orthogonal complement of $\Span(\bm{E}_{\rmnew})=\Span(\bm{D}_{\rmnew})$. To be precise, $\bm{E}_{\rmnew,\perp}$ is a $n \times (n-c_{\rmnew})$ basis matrix that satisfies ${\bm{E}_{\rmnew,\perp}}'\bm{E}_{\rmnew}=\bm{0}$.

\item Using $\bm{E}_{\rmnew}$ and $\bm{E}_{\rmnew,\perp}$, define $\bm{A}_{k}$ and $\bm{A}_{k,\perp}$ as
\begin{align*}
\bm{A}_{k} &:= \frac{1}{\alpha} \sum_{t \in \mathcal{I}_{j,k}} {\bm{E}_{\rmnew}}' \bm{\Phi}_{(0)} \bm{\ell}_t {\bm{\ell}_t}' \bm{\Phi}_{(0)} \bm{E}_{\rmnew} \\
\bm{A}_{k,\perp} &:= \frac{1}{\alpha} \sum_{t \in \mathcal{I}_{j,k}} {\bm{E}_{\rmnew,\perp}}' \bm{\Phi}_{(0)} \bm{\ell}_t {\bm{\ell}_t}' \bm{\Phi}_{(0)} \bm{E}_{\rmnew,\perp}
\end{align*}
and let
\[
\bm{\mathcal{A}}_{k} := \left[ \begin{array}{cc} \bm{E}_{\rmnew} & \bm{E}_{\rmnew,\perp} \\ \end{array} \right]
\left[\begin{array}{cc} \bm{A}_{k} \ & 0 \ \\ 0 \ & \bm{A}_{k,\perp}  \\ \end{array} \right]
\left[ \begin{array}{c} {\bm{E}_{\rmnew}}' \\ {\bm{E}_{\rmnew,\perp}}' \\ \end{array} \right]
\]

\item Define $\bm{\mathcal{H}}_{k}$ so that
\begin{align*}
\bm{\mathcal{A}}_{k} + \bm{\mathcal{H}}_{k} = \frac{1}{\alpha} \sum_{t \in \mathcal{I}_{j,k}} \bm{\Phi}_{(0)} \hat{\bm{\ell}}_t \hat{\bm{\ell}}_t{}' \bm{\Phi}_{(0)}
\end{align*}
is the matrix whose top $c_{j,\rmnew}$ singular vectors form $\Phat_{\rmnew,k}$ (see step \ref{PCA} of Algorithm \ref{reprocs}).  So $\bm{\mathcal{A}}_{k} + \bm{\mathcal{H}}_{k}$ has eigendecomposition
\begin{align*}
\bm{\mathcal{A}}_{k} + \bm{\mathcal{H}}_{k} \overset{\mathrm{EVD}}{=} \left[ \begin{array}{cc} \Phat_{\rmnew,k} & \Phat_{\rmnew,k,\perp} \\ \end{array} \right]
\left[\begin{array}{cc} \bm{\Lambda}_k \ & 0 \ \\ 0 \ & \ \bm{\Lambda}_{k,\perp} \\ \end{array} \right]
\left[ \begin{array}{c} \Phat_{\rmnew,k}{}' \\ \Phat_{\rmnew,k,\perp}{}' \\ \end{array} \right].
\end{align*}

\end{enumerate}

\end{definition}

\begin{lem}[$\sin\theta$ theorem \cite{davis_kahan} using the above notation]
If $\lambda_{\min}(\bm{A}_k) > \lambda_{\max}(\bm{\Lambda}_{k,\perp})$, then
\[
\| (\I - \Phat_{\rmnew,k} \Phat_{\rmnew,k}{}')\bm{E}_{\rmnew}\|_2 \leq
\frac{\|\bm{\mathcal{H}}_k\bm{E}_{\rmnew}\|_2}{\lambda_{\min}(\bm{A}_k) - \lambda_{\max}(\bm{\Lambda}_{k,\perp})}
\]
\end{lem}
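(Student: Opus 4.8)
The plan is to recognize this statement as the Davis--Kahan $\sin\theta$ theorem \cite{davis_kahan} transcribed into the notation of Definition \ref{defHk}, so that the proof amounts to setting up the two matrices being compared and then running the standard Sylvester-equation argument. I would take $\bm{\mathcal{A}}_k$ to be the unperturbed Hermitian matrix and $\bm{\mathcal{A}}_k + \bm{\mathcal{H}}_k$ its Hermitian perturbation. From the block form in Definition \ref{defHk}, $\bm{\mathcal{A}}_k = \bm{E}_{\rmnew}\bm{A}_k{\bm{E}_{\rmnew}}' + \bm{E}_{\rmnew,\perp}\bm{A}_{k,\perp}{\bm{E}_{\rmnew,\perp}}'$ with ${\bm{E}_{\rmnew,\perp}}'\bm{E}_{\rmnew} = \bm{0}$, so $\Span(\bm{E}_{\rmnew})$ is $\bm{\mathcal{A}}_k$-invariant with $\bm{\mathcal{A}}_k \bm{E}_{\rmnew} = \bm{E}_{\rmnew}\bm{A}_k$; and from the EVD in Definition \ref{defHk}, $\Span(\Phat_{\rmnew,k,\perp})$ is $(\bm{\mathcal{A}}_k + \bm{\mathcal{H}}_k)$-invariant with ${\Phat_{\rmnew,k,\perp}}'(\bm{\mathcal{A}}_k + \bm{\mathcal{H}}_k) = \bm{\Lambda}_{k,\perp}{\Phat_{\rmnew,k,\perp}}'$.

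Next I would set $\bm{G} := {\Phat_{\rmnew,k,\perp}}'\bm{E}_{\rmnew}$ and evaluate ${\Phat_{\rmnew,k,\perp}}'(\bm{\mathcal{A}}_k + \bm{\mathcal{H}}_k)\bm{E}_{\rmnew}$ in two ways using the two invariance identities; subtracting yields the Sylvester equation
\[
\bm{\Lambda}_{k,\perp}\bm{G} - \bm{G}\bm{A}_k = {\Phat_{\rmnew,k,\perp}}'\bm{\mathcal{H}}_k\bm{E}_{\rmnew} .
\]
Because $\bm{\Lambda}_{k,\perp}$ and $\bm{A}_k$ are Hermitian with $\lambda_{\max}(\bm{\Lambda}_{k,\perp}) < \lambda_{\min}(\bm{A}_k)$, the linear map $\bm{G} \mapsto \bm{\Lambda}_{k,\perp}\bm{G} - \bm{G}\bm{A}_k$ has no singular value smaller than $\delta := \lambda_{\min}(\bm{A}_k) - \lambda_{\max}(\bm{\Lambda}_{k,\perp}) > 0$, hence $\|\bm{G}\|_2 \le \|{\Phat_{\rmnew,k,\perp}}'\bm{\mathcal{H}}_k\bm{E}_{\rmnew}\|_2 / \delta \le \|\bm{\mathcal{H}}_k\bm{E}_{\rmnew}\|_2 / \delta$, the last inequality because $\Phat_{\rmnew,k,\perp}$ is a basis matrix. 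To finish, I would use that $\Phat_{\rmnew,k,\perp}$ is a basis matrix for the orthogonal complement of $\Span(\Phat_{\rmnew,k})$, so that $\I - \Phat_{\rmnew,k}{\Phat_{\rmnew,k}}' = \Phat_{\rmnew,k,\perp}{\Phat_{\rmnew,k,\perp}}'$, and therefore $\|(\I - \Phat_{\rmnew,k}{\Phat_{\rmnew,k}}')\bm{E}_{\rmnew}\|_2 = \|\Phat_{\rmnew,k,\perp}{\Phat_{\rmnew,k,\perp}}'\bm{E}_{\rmnew}\|_2 = \|{\Phat_{\rmnew,k,\perp}}'\bm{E}_{\rmnew}\|_2 = \|\bm{G}\|_2$, which is the claimed bound.

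I do not expect a substantive obstacle: the statement is a known theorem and the work is essentially bookkeeping --- identifying the unperturbed matrix, the two invariant subspaces being compared, and checking that the hypothesis $\lambda_{\min}(\bm{A}_k) > \lambda_{\max}(\bm{\Lambda}_{k,\perp})$ is exactly the spectral separation the argument needs. The one point deserving a line of justification is the operator-norm lower bound $\delta$ on the Sylvester operator; the cleanest route is to pass to orthonormal eigenbases of $\bm{\Lambda}_{k,\perp}$ and $\bm{A}_k$, note that the operator then acts entrywise by multiplication by $\mu_i - \nu_j$ with $|\mu_i - \nu_j| \ge \delta$, and observe that its inverse is the Schur multiplier by $[\,\delta/(\mu_i - \nu_j)\,]$, which is a contraction via the representation $\delta/(\nu_j - \mu_i) = \delta\int_0^\infty e^{-t(\nu_j - \mu_i)}\,dt$ writing it as an average of rank-one (hence norm-$\le 1$ after scaling) Schur multipliers. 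Alternatively one simply cites \cite{davis_kahan} for this operator-norm form directly.
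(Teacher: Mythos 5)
Your proof is correct, and it is the standard Sylvester-equation proof of the Davis--Kahan $\sin\theta$ theorem; the paper itself gives no proof of this lemma at all, it simply transcribes the statement from \cite{davis_kahan} into the notation of Definition~\ref{defHk}, so your write-up supplies exactly the argument the citation points to. The bookkeeping checks out: $\bm{\mathcal{A}}_k\bm{E}_{\rmnew}=\bm{E}_{\rmnew}\bm{A}_k$ because ${\bm{E}_{\rmnew,\perp}}'\bm{E}_{\rmnew}=\bm{0}$; ${\Phat_{\rmnew,k,\perp}}'(\bm{\mathcal{A}}_k+\bm{\mathcal{H}}_k)=\bm{\Lambda}_{k,\perp}{\Phat_{\rmnew,k,\perp}}'$ from the EVD; and $\I-\Phat_{\rmnew,k}{\Phat_{\rmnew,k}}'=\Phat_{\rmnew,k,\perp}{\Phat_{\rmnew,k,\perp}}'$ since $[\Phat_{\rmnew,k}\ \Phat_{\rmnew,k,\perp}]$ is a full orthonormal eigenbasis, so $\|(\I-\Phat_{\rmnew,k}{\Phat_{\rmnew,k}}')\bm{E}_{\rmnew}\|_2=\|\bm{G}\|_2$. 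For the Sylvester-operator bound, the slickest formulation of your integral argument is to exhibit the solution explicitly as $\bm{G}=-\int_0^\infty e^{t\bm{\Lambda}_{k,\perp}}\,{\Phat_{\rmnew,k,\perp}}'\bm{\mathcal{H}}_k\bm{E}_{\rmnew}\,e^{-t\bm{A}_k}\,dt$, the integral converging because $\|e^{t\bm{\Lambda}_{k,\perp}}\|_2\|e^{-t\bm{A}_k}\|_2\le e^{-t\delta}$ with $\delta=\lambda_{\min}(\bm{A}_k)-\lambda_{\max}(\bm{\Lambda}_{k,\perp})>0$; differentiating the integrand verifies it solves the equation and the norm bound $\|\bm{G}\|_2\le\|{\Phat_{\rmnew,k,\perp}}'\bm{\mathcal{H}}_k\bm{E}_{\rmnew}\|_2/\delta$ follows immediately, which avoids having to make the Schur-multiplier language precise.
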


The next lemma follows from the $\sin\theta$ lemma and Weyl's theorem.  It is taken from \cite{ReProCS_IT}.

\begin{lem}\label{zetakbnd}
If $\lambda_{\min}(\bm{A}_k) - \|\bm{A}_{k,\perp}\|_2 - \|\bm{\mathcal{H}}_k\|_2 >0$, then
\beq \label{zetakbound}
\zeta_k \leq  \frac{\|\bm{\mathcal{H}}_k\|_2}{\lambda_{\min} (\bm{A}_k) - \|\bm{A}_{k,\perp}\|_2 - \|\bm{\mathcal{H}}_k\|_2}
\eeq
\end{lem}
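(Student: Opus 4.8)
The plan is to derive Lemma \ref{zetakbnd} from the $\sin\theta$ lemma stated just above, together with Weyl's inequality, the reduced QR factorization $\bm{D}_{\rmnew} = \bm{E}_{\rmnew}\bm{R}_{\rmnew}$ of Definition \ref{defHk}, and the block-diagonal structure of $\bm{\mathcal{A}}_k$. Recall (Definition \ref{defn_Phi}) that $\zeta_k = \|\bm{D}_{\rmnew,k}\|_2$, so it suffices to bound $\|\bm{D}_{\rmnew,k}\|_2$ and to verify that the denominator in the $\sin\theta$ bound dominates the weaker denominator appearing in the statement.

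First I would re-express $\zeta_k$ through $\bm{E}_{\rmnew}$. Since $\Phat_{*}\perp\Phat_{\rmnew,k}$, the projector factors as $\bm{\Phi}_{(k)} = (\I - \Phat_{\rmnew,k}\Phat_{\rmnew,k}{}')(\I - \Phat_{*}\Phat_{*}{}') = (\I - \Phat_{\rmnew,k}\Phat_{\rmnew,k}{}')\bm{\Phi}_{(0)}$, so that $\bm{D}_{\rmnew,k} = \bm{\Phi}_{(k)}\bm{P}_{\rmnew} = (\I - \Phat_{\rmnew,k}\Phat_{\rmnew,k}{}')\bm{D}_{\rmnew}$. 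Substituting $\bm{D}_{\rmnew} = \bm{E}_{\rmnew}\bm{R}_{\rmnew}$ and using $\|\bm{R}_{\rmnew}\|_2 = \|\bm{D}_{\rmnew}\|_2 = \zeta_0 \le \|\bm{\Phi}_{(0)}\|_2\|\bm{P}_{\rmnew}\|_2 \le 1$ (as $\bm{\Phi}_{(0)}$ is an orthogonal projector and $\bm{P}_{\rmnew}$ is a basis matrix), I obtain
\[
\zeta_k = \|\bm{D}_{\rmnew,k}\|_2 \le \|(\I - \Phat_{\rmnew,k}\Phat_{\rmnew,k}{}')\bm{E}_{\rmnew}\|_2\,\|\bm{R}_{\rmnew}\|_2 \le \|(\I - \Phat_{\rmnew,k}\Phat_{\rmnew,k}{}')\bm{E}_{\rmnew}\|_2 .
\]

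Next I would bound $\lambda_{\max}(\bm{\Lambda}_{k,\perp})$. By the eigendecomposition in Definition \ref{defHk}, $\lambda_{\max}(\bm{\Lambda}_{k,\perp})$ is the $(c_{\rmnew}+1)$-th largest eigenvalue of $\bm{\mathcal{A}}_k + \bm{\mathcal{H}}_k$, so Weyl's inequality gives $\lambda_{\max}(\bm{\Lambda}_{k,\perp}) \le \lambda_{c_{\rmnew}+1}(\bm{\mathcal{A}}_k) + \|\bm{\mathcal{H}}_k\|_2$. Because $\bm{\mathcal{A}}_k$ is, up to permutation similarity, block diagonal with the $c_{\rmnew}\times c_{\rmnew}$ block $\bm{A}_k$ and the block $\bm{A}_{k,\perp}$, any $c_{\rmnew}+1$ of its eigenvalues — in particular its top $c_{\rmnew}+1$ — must include at least one eigenvalue of $\bm{A}_{k,\perp}$; hence $\lambda_{c_{\rmnew}+1}(\bm{\mathcal{A}}_k) \le \lambda_{\max}(\bm{A}_{k,\perp}) = \|\bm{A}_{k,\perp}\|_2$ (using $\bm{A}_{k,\perp}\succeq \bm{0}$). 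Thus $\lambda_{\max}(\bm{\Lambda}_{k,\perp}) \le \|\bm{A}_{k,\perp}\|_2 + \|\bm{\mathcal{H}}_k\|_2$, and the hypothesis $\lambda_{\min}(\bm{A}_k) - \|\bm{A}_{k,\perp}\|_2 - \|\bm{\mathcal{H}}_k\|_2 > 0$ then forces $\lambda_{\min}(\bm{A}_k) > \lambda_{\max}(\bm{\Lambda}_{k,\perp})$, which is exactly the hypothesis needed to invoke the $\sin\theta$ lemma.

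Finally I would assemble the pieces. The $\sin\theta$ lemma gives $\|(\I - \Phat_{\rmnew,k}\Phat_{\rmnew,k}{}')\bm{E}_{\rmnew}\|_2 \le \|\bm{\mathcal{H}}_k\bm{E}_{\rmnew}\|_2 / \big(\lambda_{\min}(\bm{A}_k) - \lambda_{\max}(\bm{\Lambda}_{k,\perp})\big)$. Bounding the numerator by the larger quantity $\|\bm{\mathcal{H}}_k\|_2$ (since $\bm{E}_{\rmnew}$ has orthonormal columns) and the denominator below by the smaller positive quantity $\lambda_{\min}(\bm{A}_k) - \|\bm{A}_{k,\perp}\|_2 - \|\bm{\mathcal{H}}_k\|_2$ from the previous step — both operations only enlarge the ratio, which remains positive — and chaining with the first display gives the claimed bound on $\zeta_k$. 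Every step is elementary; the only place that needs genuine care is the block-diagonal eigenvalue bookkeeping (that $\lambda_{c_{\rmnew}+1}(\bm{\mathcal{A}}_k)$ cannot exceed $\|\bm{A}_{k,\perp}\|_2$) and keeping the direction of the inequalities straight when passing from the sharp $\sin\theta$ quantities to the cruder ones stated in the lemma.
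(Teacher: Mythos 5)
Your proof is correct and uses exactly the two ingredients the paper cites (the $\sin\theta$ lemma together with Weyl's inequality, via the block-diagonal structure of $\bm{\mathcal{A}}_k$); the paper itself only references \cite{ReProCS_IT} for this lemma without spelling out the details, and your reconstruction matches that intended argument. The bookkeeping you flagged as needing care — $\zeta_k \le \|(\I-\Phat_{\rmnew,k}\Phat_{\rmnew,k}{}')\bm{E}_{\rmnew}\|_2$ via $\|\bm{R}_{\rmnew}\|_2\le 1$, and $\lambda_{c_{\rmnew}+1}(\bm{\mathcal{A}}_k)\le\|\bm{A}_{k,\perp}\|_2$ — is handled correctly.
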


\begin{definition}\label{Gamma_k}
Define the random variable
\[
X_{k-1} := [\bm{a}_1 , \dots , \bm{a}_{t_j + (k-1)\alpha - 1}]
\]
and the set
\[
\Gamma_{k-1} = \{ X_{k-1} : \zeta_{j',*} \leq \zeta_{j',*}^+ \text{ for } j' = 1, \dots, j \text{ and } \zeta_{j,k'} \leq \zeta_{j,k'}^+ \text{ for } k' = 1, \dots, k -1  \}
\]
and let
$\Gamma_{k-1}^e$ denote the event $X_{k-1} \in \Gamma_{k-1}$.

To prove Corollary \ref{probcor} (or if otherwise considering random supports), also include $\{\T_{1},\T_{2}, \dots, \T_{t_{\max}}\}$ in the definition of $X_{k-1}$.  Because the supports are independent of the $\at$'s, when conditioned on $X_{k-1}$ they can be treated as constant.
\end{definition}

\begin{lem} \label{Dnew0_lem}
Define $\kappa_{s}^+ := 0.0215$.
Assume that the assumptions of Theorem \ref{thm1} hold.
Conditioned on $X_{k-1} \in \Gamma_{k-1}$,
\[
\| {\I_{\mathcal{T}}}' \bm{D}_{\new} \|_2 \leq \kappa_{s}^+
\]
for all $\mathcal{T}$ such that $|\mathcal{T}|\leq s$.
\end{lem}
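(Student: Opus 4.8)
The plan is to unfold $\bm{D}_{\new}$ via Definition~\ref{defn_Phi} and reduce the claim to the denseness bound on $\bm{P}_{\new}:=\bm{P}_{(j),\rmnew}$ plus the (small) old-subspace error $\zeta_{j,*}$; there is no probabilistic content beyond invoking the event $\Gamma_{k-1}^e$, so no concentration inequality is needed here. Recall that at the stage $k=0$ we have $\Phat_{\rmnew,0}=[.]$, so $\bm{\Phi}_{(0)} = \bm{I}-\Phat_{*}\Phat_{*}{}'$ with $\Phat_{*}=\Phat_{(j-1)}$, and hence $\bm{D}_{\new} = \bm{\Phi}_{(0)}\bm{P}_{\new} = \bm{P}_{\new} - \Phat_{*}\Phat_{*}{}'\bm{P}_{\new}$.

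Now fix $\mathcal{T}$ with $|\mathcal{T}|\le s$. Using $\|\I_{\mathcal{T}}{}'\bm{M}\|_2 \le \|\bm{M}\|_2$ and $\|\Phat_{*}\bm{X}\|_2 = \|\bm{X}\|_2$,
\[
\|\I_{\mathcal{T}}{}'\bm{D}_{\new}\|_2 \;\le\; \|\I_{\mathcal{T}}{}'\bm{P}_{\new}\|_2 + \|\Phat_{*}{}'\bm{P}_{\new}\|_2 \;\le\; \kappa_s(\bm{P}_{\new}) + \|\Phat_{*}{}'\bm{P}_{\new}\|_2 .
\]
The first term is at most $\kappa_{2s}(\bm{P}_{(j),\rmnew}) \le 0.02$ because $\kappa_s$ is nondecreasing in $s$ and by assumption~6 of Theorem~\ref{thm1}. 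For the second term I will exploit that the true subspace addition is orthogonal, $\bm{P}_{*}{}'\bm{P}_{\new}=\bm{0}$ (Section~\ref{ltmodel}): this gives $\bm{P}_{\new} = (\bm{I}-\bm{P}_{*}\bm{P}_{*}{}')\bm{P}_{\new}$, so $\Phat_{*}{}'\bm{P}_{\new} = \Phat_{*}{}'(\bm{I}-\bm{P}_{*}\bm{P}_{*}{}')\bm{P}_{\new}$ and thus $\|\Phat_{*}{}'\bm{P}_{\new}\|_2 \le \|(\bm{I}-\bm{P}_{*}\bm{P}_{*}{}')\Phat_{*}\|_2$. Since $\Phat_{*}$ and $\bm{P}_{*}$ have the same number of columns (by construction of Algorithm~\ref{reprocs} and the model), $\bm{P}_{*}{}'\Phat_{*}$ is square, and a short computation with $\bm{B}'\bm{B}$ versus $\bm{B}\bm{B}'$ for $\bm{B}=\bm{P}_{*}{}'\Phat_{*}$ shows $\|(\bm{I}-\bm{P}_{*}\bm{P}_{*}{}')\Phat_{*}\|_2 = \|(\bm{I}-\Phat_{*}\Phat_{*}{}')\bm{P}_{*}\|_2 = \zeta_{j,*}$.

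Combining, $\|\I_{\mathcal{T}}{}'\bm{D}_{\new}\|_2 \le 0.02 + \zeta_{j,*}$. On the event $\Gamma_{k-1}^e$ we have $\zeta_{j,*}\le \zeta_{j,*}^+ = (r_0+(j-1)c)\zeta \le r\zeta$, and the bound $\zeta \le 10^{-4}/r^2$ gives $r\zeta \le 10^{-4}$. Hence $\|\I_{\mathcal{T}}{}'\bm{D}_{\new}\|_2 \le 0.02 + 10^{-4} = 0.0201 \le 0.0215 = \kappa_s^+$, uniformly over $\mathcal{T}$. The only mildly delicate points are recognizing that at $k=0$ the projector $\bm{\Phi}_{(0)}$ carries no new-direction estimate, and the square-matrix identity relating $\|(\bm{I}-\bm{P}_{*}\bm{P}_{*}{}')\Phat_{*}\|_2$ to $\zeta_{j,*}$; the rest is routine operator-norm bookkeeping, so I do not expect a real obstacle.
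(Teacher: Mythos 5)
Your proof is correct and follows essentially the same route as the paper's: split $\bm{D}_{\new}=(\bm{I}-\Phat_{*}\Phat_{*}{}')\bm{P}_{\new}$ by the triangle inequality into a $\kappa_s(\bm{P}_{\new})$ term and a $\|\Phat_{*}{}'\bm{P}_{\new}\|_2$ term, then bound the latter by $\zeta_{j,*}$ using $\bm{P}_{*}{}'\bm{P}_{\new}=\bm{0}$. The only difference is cosmetic: the paper cites its auxiliary Lemma \ref{hatswitch} (claim 3) for the bound $\|\Phat_{*}{}'\bm{P}_{\new}\|_2\le\zeta_{j,*}$ rather than re-deriving it via the $\bm{B}'\bm{B}$ vs.\ $\bm{B}\bm{B}'$ spectral identity as you do, and the paper uses the looser numeric bound $\zeta_{j,*}^+\le 0.0015$ (so that $0.02+0.0015$ lands exactly on $0.0215$) where you use the tighter $\zeta_{j,*}^+\le r\zeta\le 10^{-4}$, which also suffices.
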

\begin{proof}
Observe that $\bm{D}_{\rmnew,0} = (\bm{I} - \Phat_{j-1}\Phat_{j-1}{}') \bm{P}_\rmnew$.
Then $\|{\I_{\mathcal{T}}}' \bm{D}_{\rmnew} \|_2 = \|{\I_{\mathcal{T}}}'  (\bm{I} - \Phat_{j-1}\Phat_{j-1}{}') \bm{P}_\rmnew \|_2 \leq \| {\I_{\mathcal{T}}}' \bm{P}_\rmnew \|_2 + \|\Phat_{j-1}{}' \bm{P}_\rmnew \|_2 \leq {\kappa_s(\bm{P}_{\rmnew}) +  \zeta_*}$.
The event $X_{k-1} \in \Gamma_{k-1}$ implies that $\zeta_* \le \zeta_*^+ \le 0.0015$. Thus, the lemma follows.
\end{proof}

\begin{lem}[High probability bounds for each of the terms in the $\zeta_k$ bound (\ref{zetakbound})]\label{termbnds}
Assume the conditions of Theorem \ref{thm1} hold.  Also assume that $\mathbb{P}(X_{k-1} \in \Gamma_{k-1})>0$. Then, for $ k = 1,\dots,K$
\begin{enumerate}

\item $\mathbb{P} \left(\lambda_{\min} (\bm{A}_{k}) \geq  \lambda_{\rmnew}^- \left(1 -(\zeta_{j,*}^+)^2  - \frac{c \zeta}{12}\right) \big|X_{k-1}\in\Gamma_{k-1}\right) > 1- p_{a}(\alpha,\zeta)$
where
\[
p_{a}(\alpha,\zeta) := c \exp \left(\frac{-\alpha \zeta^2 (\lambda^-)^2}{8 \cdot 24^2 \cdot {\gamma_{\rmnew}}^4  } \right) + c \exp \left( \frac{-\alpha c^2 \zeta^2(\lambda^-)^2} {8 \cdot 24^2 \cdot 4^2}\right)
\]

\item $\mathbb{P}\left(\lambda_{\max}(\bm{A}_{k,\perp}) \leq \lambda_{\rmnew}^- \left( (\zeta_{j,*}^+)^2 f + \frac{c \zeta}{24}\right) \big| X_{k-1}\in\Gamma_{k-1} \right) > 1- p_b(\alpha,\zeta)$
where
\[
p_b (\alpha,\zeta) := (n-c) \exp \left(\frac{-\alpha c^2 \zeta^2 (\lambda^-)^2}{8 \cdot 24^2}\right)
\]

\item
$\mathbb{P}\left(\|\bm{\mathcal{H}}_{k}\|_2 \leq  b_k + \frac{5c\zeta\lambda^-}{24}) \ \big|X_{k-1}\in\Gamma_{k-1}\right) \geq 1 - p_c(\alpha,\zeta)$
where $b_k = b_2 + 2b_4 + 2b_6$, and
\begin{align}
b_{2,k} &= 
\begin{cases} 
8h^+(\phi^+)^2 \Big((\zeta_*^+)^2\lambda^+ + (\kappa_{s}^+)^2\lambda_{\rmnew}^+\Big) & k=1 \\
8h^+(\phi^+)^2 \Big((\zeta_*^+)^2\lambda^+ + (\zeta_{k-1}^+)^2\lambda_{\rmnew}^+\Big) & k\geq2 
\end{cases} \nn\\
\label{b4}
b_{4,k} &= 
\begin{cases} 
(\zeta_{*}^+)^2\lambda^+ + \kappa_{s}^+\lambda_{\new}^+ & k=1 \\
\left( (\zeta_*^+)^2 \lambda^+ + \zeta_{k-1}^+\lambda_{\rmnew}^+  \right)\left(\sqrt{8h^+}\phi^+\right) & k\geq2.
\end{cases}  \\
b_6 &= (\zeta_*^+)^2 \lambda^+  \nn
\end{align}

Also,
\begin{align*}
p_c(\alpha,\zeta) :=  & \ n \exp\left(\frac{-\alpha c^2\zeta^2 (\lambda^-)^2}{8 \cdot 24^2 \Big( \phi^+ (\sqrt{\zeta}+\sqrt{c}\gamma_{\rmnew}) \Big)^4}\right)
+  n \exp\left(\frac{-\alpha c^2\zeta^2 (\lambda^-)^2}{8\cdot 24^2 (\phi^+)^2 (\sqrt{\zeta}+\sqrt{c}\gamma_{\rmnew})^4}\right) \\
&+  n \exp\left(\frac{-\alpha c^2 \zeta^2 (\lambda^-)^2 }{8 \cdot 24^2 ( \zeta r^2 \gamma^2)^2}\right). \\
\end{align*}


\end{enumerate}

\end{lem}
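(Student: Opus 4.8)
The plan is to prove all three estimates by one recipe: condition on a fixed value of $X_{k-1}\in\Gamma_{k-1}$; write the matrix in question as its conditional expectation plus a zero-mean fluctuation; bound the expectation deterministically using the conditioning hypotheses $\zeta_{j',*}\le\zeta_{j',*}^+$, $\zeta_{j,k'}\le\zeta_{j,k'}^+$ and Lemma~\ref{Dnew0_lem}; and bound the fluctuation with the matrix Hoeffding inequality of \cite{tail_bound}. Conditioning on $X_{k-1}$ freezes $\Phat_{(j-1)}$, $\bm{\Phi}_{(0)}$, $\bm{E}_{\rmnew}$, $\bm{E}_{\rmnew,\perp}$, $\bm{R}_{\rmnew}$ and $\Phat_{\rmnew,k-1}$ into fixed matrices, while the only remaining randomness, $\{\bm{a}_t:t\in\mathcal{I}_{j,k}\}$, is independent of $X_{k-1}$ by temporal independence of the $\bm{a}_t$'s; so conditioning on $\Gamma_{k-1}^e$ does not alter this distribution, and matrix Hoeffding applies with the $\bm{a}_t$'s as the independent, bounded, zero-mean summands. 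Since every deterministic bound below is uniform over $X_{k-1}\in\Gamma_{k-1}$, each probability statement survives conditioning on $\Gamma_{k-1}^e$.

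For claims 1 and 2 I would substitute $\bm{\Phi}_{(0)}\bm{\ell}_t=\bm{D}_*\bm{a}_{t,*}+\bm{D}_{\rmnew}\atnew$. Using $\bm{R}_{\rmnew}={\bm{E}_{\rmnew}}'\bm{D}_{\rmnew}$ and diagonality of $\bm{\Lambda}_t$, $\E[\bm{A}_k\mid X_{k-1}]=\frac1\alpha\sum_t\big(\bm{R}_{\rmnew}\Lamtnew{\bm{R}_{\rmnew}}'+{\bm{E}_{\rmnew}}'\bm{D}_*(\bm{\Lambda}_t)_*{\bm{D}_*}'\bm{E}_{\rmnew}\big)$; both summands are positive semidefinite, so $\lambda_{\min}$ of the whole is at least $\lambda_{\rmnew}^-\,\sigma_{\min}(\bm{R}_{\rmnew})^2\ge\lambda_{\rmnew}^-\big(1-(\zeta_{j,*}^+)^2\big)$, where I use $\sigma_{\min}(\bm{R}_{\rmnew})^2=\sigma_{\min}(\bm{D}_{\rmnew})^2\ge1-\zeta_*^2$ (from the orthogonal splitting $\bm{P}_{\rmnew}=\Phat_{(j-1)}{\Phat_{(j-1)}}'\bm{P}_{\rmnew}+\bm{D}_{\rmnew}$) together with $\zeta_*\le\zeta_{j,*}^+$. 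For $\bm{A}_{k,\perp}$ the decisive point is ${\bm{E}_{\rmnew,\perp}}'\bm{D}_{\rmnew}=0$, which deletes the $\atnew$ contribution and leaves a matrix of norm at most $(\zeta_{j,*}^+)^2\lambda^+\le(\zeta_{j,*}^+)^2 f\lambda_{\rmnew}^-$ (using $\lambda^-\le\lambda_{\rmnew}^-$). Expanding the quadratics in $\bm{a}_{t,*}$ and $\atnew$ into diagonal and cross pieces and applying matrix Hoeffding to each — the per-term bounds being governed by $\gamma$, $\gamma_{\rmnew}$, $\zeta_{j,*}^+$ and $\|\bm{R}_{\rmnew}\|_2\le1$ — gives the deviations $\frac{c\zeta}{12}\lambda_{\rmnew}^-$ and $\frac{c\zeta}{24}\lambda_{\rmnew}^-$ with failure probabilities $p_a$, $p_b$; the prescription $C\ge C_{\add}$ in $\alpha$ is exactly what makes the Hoeffding exponents equal the displayed expressions.

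Claim 3 is the technical heart and the place the new idea enters. Writing $\hat{\bm{\ell}}_t=\bm{\ell}_t-\bm{e}_t$ and expanding $\bm{\mathcal{A}}_k+\bm{\mathcal{H}}_k=\frac1\alpha\sum_t\bm{\Phi}_{(0)}\hat{\bm{\ell}}_t{\hat{\bm{\ell}}_t}'\bm{\Phi}_{(0)}$ gives $\bm{\mathcal{H}}_k=\bm{H}_1-\bm{H}_2+\bm{H}_3$, where $\bm{H}_1$ is the cross-block remainder of $\frac1\alpha\sum_t\bm{\Phi}_{(0)}\bm{\ell}_t{\bm{\ell}_t}'\bm{\Phi}_{(0)}$ not in the block-diagonal $\bm{\mathcal{A}}_k$, $\bm{H}_2=\frac1\alpha\sum_t(\bm{\Phi}_{(0)}\bm{\ell}_t{\bm{e}_t}'\bm{\Phi}_{(0)}+\bm{\Phi}_{(0)}\bm{e}_t{\bm{\ell}_t}'\bm{\Phi}_{(0)})$, and $\bm{H}_3=\frac1\alpha\sum_t\bm{\Phi}_{(0)}\bm{e}_t{\bm{e}_t}'\bm{\Phi}_{(0)}$. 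For $\E[\bm{H}_1\mid X_{k-1}]$, the identity ${\bm{E}_{\rmnew,\perp}}'\bm{D}_{\rmnew}=0$ again forces only the small $\bm{D}_*$ piece to survive, so $\|\E[\bm{H}_1\mid X_{k-1}]\|_2\le2(\zeta_{j,*}^+)^2\lambda^+=2b_6$. For $\bm{H}_3$ and $\bm{H}_2$ I would invoke the formula $\bm{e}_t=\bm{I}_{\mathcal{T}_t}[(\bm{\Phi}_t)_{\mathcal{T}_t}'(\bm{\Phi}_t)_{\mathcal{T}_t}]^{-1}{\bm{I}_{\mathcal{T}_t}}'\bm{\Phi}_t\bm{\ell}_t$ of Lemma~\ref{cslem}: since $\hat{\mathcal{T}}_t=\mathcal{T}_t$ there, conditioned on $X_{k-1}$ we have $\bm{e}_t=\bm{I}_{\mathcal{T}_t}\bm{f}_t$ with $\bm{f}_t$ a \emph{fixed} linear image of $\bm{\ell}_t=\bm{P}_t\bm{a}_t$ supported on $\mathcal{T}_t$. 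Hence $\E[\bm{e}_t{\bm{e}_t}'\mid X_{k-1}]=\bm{I}_{\mathcal{T}_t}\bm{A}_t{\bm{I}_{\mathcal{T}_t}}'$ with $\bm{A}_t\succeq0$ and $\|\bm{A}_t\|_2\le(\phi^+)^2\big((\zeta_{j,*}^+)^2\lambda^++(\kappa_s^+)^2\lambda_{\rmnew}^+\big)$ for $k=1$ (using Lemma~\ref{Dnew0_lem}), and the analogue with $\zeta_{k-1}^+$ replacing $\kappa_s^+$ for $k\ge2$; since $\|\bm{\Phi}_{(0)}\bm{M}\bm{\Phi}_{(0)}\|_2\le\|\bm{M}\|_2$, Lemma~\ref{blockdiag} with $\beta=\alpha$ and $h^*(\alpha)\le h^+\alpha$ contracts the naive ``$\alpha\times\max_t\|\bm{A}_t\|_2$'' into $\|\E[\bm{H}_3\mid X_{k-1}]\|_2\le8h^+\max_t\|\bm{A}_t\|_2=b_{2,k}$. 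The same manipulation of $\bm{H}_2$ isolates a dominant summand $\bm{D}_{\rmnew}\Lamtnew{\bm{D}_{\rmnew,k-1}}'\bm{I}_{\mathcal{T}_t}[\cdot]^{-1}{\bm{I}_{\mathcal{T}_t}}'\bm{\Phi}_{(0)}$ plus a small $\bm{D}_*$ remainder: for $k=1$ one bounds $\|{\bm{I}_{\mathcal{T}_t}}'\bm{D}_{\rmnew}\|_2\le\kappa_s^+$ directly (Lemma~\ref{Dnew0_lem}), and for $k\ge2$ one writes the sum over $t$ as $\sum_t\bm{u}_t{\bm{v}_t}'$, applies the matrix Cauchy--Schwarz inequality, and bounds $\|\sum_t\bm{v}_t{\bm{v}_t}'\|_2$ by Lemma~\ref{blockdiag}, producing the $\sqrt{8h^+}\,\phi^+$ factor; this yields $\|\E[\bm{H}_2\mid X_{k-1}]\|_2\le2b_{4,k}$. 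Finally, matrix Hoeffding applied to the three groups of bounded summands $\bm{e}_t{\bm{e}_t}'$ (from $\bm{H}_3$), $\bm{\ell}_t{\bm{e}_t}'$ (from $\bm{H}_2$), and the cross-block part of $\bm{\ell}_t{\bm{\ell}_t}'$ (from $\bm{H}_1$) — with per-term norms at most $(\phi^+)^2(\sqrt\zeta+\sqrt c\gamma_{\rmnew})^2$, $\phi^+(\sqrt\zeta+\sqrt c\gamma_{\rmnew})^2$, and $r^2\zeta\gamma^2$ respectively (the bound $\zeta\le r^{-3}\gamma^{-2}$ of Theorem~\ref{thm1} being used to absorb $\zeta_{j,*}^+\sqrt r\,\gamma$ into $\sqrt\zeta$) — gives the fluctuation bound $\frac{5c\zeta\lambda^-}{24}$ with failure probability $p_c$, and adding everything up proves claim 3.

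I expect the main obstacle to be the bookkeeping of claim 3: decomposing $\bm{\mathcal{H}}_k$ cleanly, separating in each of $\bm{H}_1,\bm{H}_2,\bm{H}_3$ the ``dominant'' part from the $O\big((\zeta_{j,*}^+)^2\big)$-small remainder, and — the crucial point — recognizing that the dominant parts of $\E[\bm{H}_2\mid X_{k-1}]$ and $\E[\bm{H}_3\mid X_{k-1}]$ carry, up to fixed left/right factors, the block-banded structure $\sum_t\bm{I}_{\mathcal{T}_t}(\cdot){\bm{I}_{\mathcal{T}_t}}'$, so that Lemma~\ref{blockdiag} replaces the naive ``$\alpha\times(\text{per-term})$'' estimate by ``$8h^+\alpha\times(\text{per-term})$''. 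This block-banded reduction, resting on exact support recovery (Lemma~\ref{cslem}) and the support-change assumption $h^*(\alpha)\le h^+\alpha$, is precisely what makes these terms small and lets us dispense with the denseness hypotheses on $(\bm{I}-\bm{P}_{(j),\rmnew}{\bm{P}_{(j),\rmnew}}')\Phat_{(j),\rmnew,k}$ that \cite{ReProCS_IT} required.
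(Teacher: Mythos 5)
Your proposal is correct and follows essentially the same route as the paper: the identical three-term decomposition of $\bm{\mathcal{H}}_k$ (the $\bm{e}_t{\bm{e}_t}'$ piece, the $\Ltil_t{\bm{e}_t}'$ cross piece, and the $\bm{F}_t=\bm{E}_{\rmnew,\perp}{\bm{E}_{\rmnew,\perp}}'\Ltil_t\Ltil_t{}'\bm{E}_{\rmnew}{\bm{E}_{\rmnew}}'$ piece), exact-support $\bm{e}_t=\bm{I}_{\mathcal{T}_t}(\cdot)$ from Lemma~\ref{cslem} feeding the block-banded bound of Lemma~\ref{blockdiag} for the $\bm{e}_t{\bm{e}_t}'$ expectation, matrix Cauchy--Schwarz plus Lemma~\ref{blockdiag} giving the $\sqrt{8h^+}\phi^+$ factor in $b_{4,k}$ for $k\ge2$ and Lemma~\ref{Dnew0_lem} for $k=1$, ${\bm{E}_{\rmnew,\perp}}'\bm{D}_{\rmnew}=0$ killing the large part of the $\bm{F}_t$ and $\bm{A}_{k,\perp}$ terms, and conditional matrix Hoeffding for the fluctuations. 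No essential differences from the paper's argument.
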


\begin{proof}[Proof of Lemma \ref{PCAlem}]
Lemma \ref{PCAlem} now follows by combining Lemmas \ref{zetakbnd} and \ref{termbnds},
using $\lambda_{\rmnew}^- \geq \lambda^-$, $f = \frac{\lambda^+}{\lambda^-}$, $g = \frac{\lambda_{\rmnew}^+}{\lambda_{\rmnew}^-}$,
and defining
\begin{equation}
p(\alpha,\zeta) := 1 - p_{a}(\alpha,\zeta) - p_b(\alpha,\zeta) - p_c(\alpha,\zeta). \label{pk}
\end{equation}
\end{proof}

\begin{proof}[Proof of Lemma \ref{termbnds}]

The proof of the first two claims is similar to \cite{ReProCS_IT}, while the proof of the third is very different.

For convenience, we will use $\frac{1}{\alpha}\sum_t$ to denote $\frac{1}{\alpha} \sum_{t \in \mathcal{I}_{j,k}}$.

First observe that the matrices $\bm{D}_{\rmnew}$, $\bm{R}_{\rmnew}$, $\bm{E}_{\rmnew}$, $\bm{D}_{*}, \bm{D}_{\rmnew,k-1}$, $\bm{\Phi}_{(k-1)}$ are all functions of the random variable $X_{k-1}$.  Since $X_{k-1}$ is independent of any $a_{t}$ for $t \in  \mathcal{I}_{j,k}$, the same is true for the matrices  $\bm{D}_{\rmnew}$, $\bm{R}_{\rmnew}$, $\bm{E}_{\rmnew}$, $\bm{D}_{*}, \bm{D}_{\rmnew,k-1}$, $\bm{\Phi}_{(k-1)}$.

    All terms that we bound for the first two claims of the lemma are of the form $\frac{1}{\alpha} \sum_{t \in \mathcal{I}_{j,k}} \bm{Z}_t$ where $\bm{Z}_t= f_1(X_{k-1}) \bm{Y}_t f_2(X_{k-1})$, $\bm{Y}_t$ is a sub-matrix of $\bm{a}_t {\bm{a}_t}'$, and $f_1(.)$ and $f_2(.)$ are functions of $X_{k-1}$. Thus, conditioned on  $X_{k-1}$, the $\bm{Z}_t$'s are mutually independent. \label{X_at_indep} \label{zt_indep}

    All the terms that we bound for the third claim contain $\bm{e}_t$. Using Lemma \ref{cslem}, conditioned on $X_{k-1}$, $\et$ satisfies (\ref{etdef0}) with probability one whenever $X_{k-1} \in \Gamma_{k-1}$. Using (\ref{etdef0}), it is easy to see that all these terms are also of the above form whenever $X_{k-1} \in \Gamma_{k-1}$.  Thus, conditioned on $X_{k-1}$, the $\bm{Z}_t$'s for all the above terms are mutually independent, whenever $X_{k-1} \in \Gamma_{k-1}$.

The following are corollaries of the matrix Hoeffding inequality in \cite{tail_bound} and are proved in \cite{ReProCS_IT}.
\begin{corollary}[Matrix Hoeffding conditioned on another random variable for a nonzero mean Hermitian matrix \cite{tail_bound,ReProCS_IT}]\label{hoeffding_nonzero}
Given an $\alpha$-length sequence $\{\bm{Z}_t\}$ of random Hermitian matrices of size $n\times n$, a r.v. $X$, and a set ${\cal C}$ of values that $X$ can take. Assume that, for all $X \in \calc$, (i) $\bm{Z}_t$'s are conditionally independent given $X$; (ii) $\mathbf{P}(b_1 \bm{I} \preceq \bm{Z}_t \preceq b_2 \bm{I} | X) = 1$ and (iii) $b_3 \bm{I} \preceq \frac{1}{\alpha}\sum_t \E(\bm{Z}_t | X) \preceq b_4 \bm{I} $. Then for all $\epsilon > 0$,
\begin{align*}
\mathbb{P} \left( \lambda_{\max}\left(\frac{1}{\alpha}\sum_t \bm{Z}_t \right) \leq b_4 + \epsilon \Big | X \right)
\geq 1- n \exp\left(\frac{-\alpha \epsilon^2}{8(b_2-b_1)^2}\right) \ \text{for all} \ X \in \calc
\end{align*}
\begin{align*}
\mathbb{P} \left(\lambda_{\min}\left(\frac{1}{\alpha}\sum_t \bm{Z}_t \right) \geq b_3 -\epsilon \Big| X \right)
\geq  1- n \exp\left(\frac{-\alpha \epsilon^2}{8(b_2-b_1)^2} \right)  \text{for all} \ X \in \calc
\end{align*}
\end{corollary}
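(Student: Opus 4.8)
The plan is to reduce the conditional statement to the (unconditional) matrix Hoeffding inequality of \cite{tail_bound} via the standard centering trick, carried out inside the conditional law $\mathbb{P}(\,\cdot \mid X=x)$ for each fixed $x\in\calc$. Recall Tropp's matrix Hoeffding inequality: if $\bm{Y}_1,\dots,\bm{Y}_\alpha$ are independent, zero-mean, Hermitian $n\times n$ random matrices with $\bm{Y}_t^2 \preceq \bm{B}_t^2$ almost surely for fixed Hermitian $\bm{B}_t$, then $\mathbb{P}\big(\lambda_{\max}(\sum_t \bm{Y}_t) \geq s\big) \leq n\exp\big(-s^2/(8\sigma^2)\big)$ with $\sigma^2 := \big\|\sum_t \bm{B}_t^2\big\|_2$.

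First I would fix $x\in\calc$ and set $\bm{Y}_t := \bm{Z}_t - \E(\bm{Z}_t\mid X)$. By hypothesis (i), conditioned on $X=x$ the $\bm{Z}_t$ are mutually independent, hence so are the $\bm{Y}_t$, and each has conditional mean zero by construction. By hypothesis (ii), $b_1\bm{I}\preceq\bm{Z}_t\preceq b_2\bm{I}$ holds with conditional probability one, so the conditional mean obeys the same bounds, and therefore $-(b_2-b_1)\bm{I}\preceq\bm{Y}_t\preceq(b_2-b_1)\bm{I}$, i.e.\ $\bm{Y}_t^2\preceq(b_2-b_1)^2\bm{I}$ almost surely under $\mathbb{P}(\,\cdot\mid X=x)$. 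Applying the matrix Hoeffding inequality to this conditional law with $\bm{B}_t^2=(b_2-b_1)^2\bm{I}$ (so $\sigma^2=\alpha(b_2-b_1)^2$) and $s=\alpha\epsilon$ yields $\mathbb{P}\big(\lambda_{\max}(\frac{1}{\alpha}\sum_t\bm{Y}_t)\geq\epsilon\mid X\big)\leq n\exp\big(-\alpha\epsilon^2/(8(b_2-b_1)^2)\big)$ for every $x\in\calc$.

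Next I would transfer this back to $\bm{Z}_t$. Writing $\frac{1}{\alpha}\sum_t\bm{Z}_t = \frac{1}{\alpha}\sum_t\bm{Y}_t + \frac{1}{\alpha}\sum_t\E(\bm{Z}_t\mid X)$ and using Weyl's inequality together with hypothesis (iii), on the event $\{\lambda_{\max}(\frac{1}{\alpha}\sum_t\bm{Y}_t)<\epsilon\}$ we get $\lambda_{\max}(\frac{1}{\alpha}\sum_t\bm{Z}_t) \leq \lambda_{\max}(\frac{1}{\alpha}\sum_t\bm{Y}_t) + \lambda_{\max}(\frac{1}{\alpha}\sum_t\E(\bm{Z}_t\mid X)) < \epsilon + b_4$, which gives the first displayed bound. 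The second bound follows symmetrically: apply the first step to $-\bm{Y}_t$, which satisfies the same almost-sure bound, and combine with $\lambda_{\min}(A+B)\geq\lambda_{\min}(A)+\lambda_{\min}(B)$ and $\lambda_{\min}(\frac{1}{\alpha}\sum_t\E(\bm{Z}_t\mid X))\geq b_3$ from hypothesis (iii).

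I do not expect a genuine obstacle — this is essentially bookkeeping. The only point that needs care is the first reduction: one must check that for each fixed value $x\in\calc$ the conditional distribution of $(\bm{Y}_1,\dots,\bm{Y}_\alpha)$ is genuinely that of a family of independent, almost-surely bounded, zero-mean Hermitian matrices, so that the unconditional theorem of \cite{tail_bound} applies verbatim; hypotheses (i) and (ii), assumed to hold for all $x\in\calc$, supply exactly this, and the uniformity of the bound over $\calc$ is automatic because $b_1,b_2$ (hence $\sigma^2$) do not depend on $x$.
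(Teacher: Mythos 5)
Your proof is correct and is essentially the route the paper relies on. The paper itself does not reprove Corollary \ref{hoeffding_nonzero}; it cites \cite{ReProCS_IT}, which derives it exactly as you do: fix $x\in\calc$, center $\bm{Y}_t:=\bm{Z}_t-\E(\bm{Z}_t\mid X)$, note that conditional independence and the a.s.\ two-sided bound give independent zero-mean Hermitian summands with $\bm{Y}_t^2\preceq(b_2-b_1)^2\bm{I}$, apply Tropp's matrix Hoeffding inequality to the conditional law (yielding $\sigma^2=\alpha(b_2-b_1)^2$ and, at $s=\alpha\epsilon$, the claimed exponent), and transfer back via Weyl and hypothesis (iii). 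Your handling of the $\lambda_{\min}$ bound by negating $\bm{Y}_t$ and using $\lambda_{\min}(A+B)\geq\lambda_{\min}(A)+\lambda_{\min}(B)$ is also the standard symmetric step. No gap.
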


\begin{corollary}[Matrix Hoeffding conditioned on another random variable for an arbitrary nonzero mean matrix]\label{hoeffding_rec}
Given an $\alpha$-length sequence $\{\bm{Z}_t\}$ of random matrices of size $n\times n$, a r.v. $X$, and a set ${\mathcal{C}}$ of values that $X$ can take. Assume that, for all $X \in \calc$, (i) $\bm{Z}_t$'s are conditionally independent given $X$; (ii) $\mathbb{P}(\|\bm{Z}_t\|_2 \le b_1|X) = 1$ and (iii) $\|\frac{1}{\alpha}\sum_t \E( \bm{Z}_t|X)\|_2 \le b_2$. Then, for all $\epsilon >0$,
\begin{align*}
\mathbb{P} \left(\bigg\|\frac{1}{\alpha}\sum_t \bm{Z}_t \bigg\|_2 \leq b_2 + \epsilon \Big| X \right) 
\geq 1-(n_1+n_2) \exp\left(\frac{-\alpha \epsilon^2}{32 {b_1}^2}\right)  \ \text{for all} \ X \in \calc
\end{align*}
\end{corollary}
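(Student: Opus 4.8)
The plan is to deduce this from the Hermitian, nonzero-mean matrix Hoeffding bound already stated as Corollary \ref{hoeffding_nonzero}, by means of the standard Hermitian dilation. For a (possibly rectangular) matrix $\bm{Z} \in \R^{n_1 \times n_2}$, define
\[
\mathcal{D}(\bm{Z}) := \left[ \begin{array}{cc} \bm{0} & \bm{Z} \\ \bm{Z}' & \bm{0} \end{array} \right] \in \R^{(n_1+n_2)\times(n_1+n_2)}.
\]
First I would record the three elementary facts about $\mathcal{D}$ that drive the proof: it is linear in $\bm{Z}$; it maps every matrix to a Hermitian matrix; and its nonzero eigenvalues are exactly $\pm\sigma_i(\bm{Z})$, so that $\lambda_{\max}(\mathcal{D}(\bm{Z})) = \|\mathcal{D}(\bm{Z})\|_2 = \|\bm{Z}\|_2$. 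In particular, for the Hermitian matrix $\mathcal{D}(\bm{Z})$, the condition $\|\bm{Z}\|_2 \le b$ is equivalent to $-b\bm{I} \preceq \mathcal{D}(\bm{Z}) \preceq b\bm{I}$.

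Next I would set $\bm{Y}_t := \mathcal{D}(\bm{Z}_t)$ and check that, for every $X \in \calc$, the sequence $\{\bm{Y}_t\}$ satisfies the hypotheses of Corollary \ref{hoeffding_nonzero} with ambient dimension $n_1+n_2$ and with the parameters $b_1 \mapsto -b_1$, $b_2 \mapsto b_1$, $b_3 \mapsto -b_2$, $b_4 \mapsto b_2$. Indeed: (i) the $\bm{Y}_t$ are conditionally independent given $X$ since they are fixed measurable functions of the conditionally independent $\bm{Z}_t$; (ii) hypothesis (ii) gives $\|\bm{Z}_t\|_2 \le b_1$ almost surely given $X$, hence $-b_1\bm{I} \preceq \bm{Y}_t \preceq b_1 \bm{I}$ a.s.; (iii) by linearity of $\mathcal{D}$ and of conditional expectation, $\frac{1}{\alpha}\sum_t \E(\bm{Y}_t \mid X) = \mathcal{D}\!\left(\frac{1}{\alpha}\sum_t \E(\bm{Z}_t \mid X)\right)$, which by hypothesis (iii) has operator norm at most $b_2$, so $-b_2\bm{I} \preceq \frac{1}{\alpha}\sum_t \E(\bm{Y}_t \mid X) \preceq b_2\bm{I}$.

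Then I would apply Corollary \ref{hoeffding_nonzero}: for every $\epsilon>0$ and every $X\in\calc$,
\[
\mathbb{P}\!\left( \lambda_{\max}\!\left(\frac{1}{\alpha}\sum_t \bm{Y}_t\right) \le b_2 + \epsilon \;\Big|\; X \right) \ge 1 - (n_1+n_2)\exp\!\left(\frac{-\alpha\epsilon^2}{8\,(b_1-(-b_1))^2}\right) = 1 - (n_1+n_2)\exp\!\left(\frac{-\alpha\epsilon^2}{32\,b_1^2}\right).
\]
Finally, by linearity $\frac{1}{\alpha}\sum_t \bm{Y}_t = \mathcal{D}\!\left(\frac{1}{\alpha}\sum_t \bm{Z}_t\right)$, so $\lambda_{\max}\!\left(\frac{1}{\alpha}\sum_t \bm{Y}_t\right) = \left\|\frac{1}{\alpha}\sum_t \bm{Z}_t\right\|_2$, and the displayed inequality is precisely the claim.

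There is no substantive obstacle here; the points to be careful about are purely bookkeeping. One must track the four constants correctly when matching the hypotheses of Corollary \ref{hoeffding_nonzero}, noting that the ``spread'' $b_2 - b_1$ in that corollary becomes $2b_1$ after dilation, which is what produces the factor $32\,b_1^2$ (rather than $8\,b_1^2$) in the exponent. One should also read the $\bm{Z}_t$ as $n_1 \times n_2$ matrices (the ``$n\times n$'' in the statement should be $n_1 \times n_2$), which is why the dimension appearing in the tail bound is $n_1+n_2$. The only fact used beyond Corollary \ref{hoeffding_nonzero} is the spectral characterization of the dilation, which is immediate from writing $\mathcal{D}(\bm{Z})^2$ as the block-diagonal matrix with diagonal blocks $\bm{Z}\bm{Z}'$ and $\bm{Z}'\bm{Z}$.
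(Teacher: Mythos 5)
Your proof is correct and follows essentially the same route as the proof the paper defers to in \cite{ReProCS_IT}: dilate each $\bm{Z}_t$ to a Hermitian matrix, apply Corollary \ref{hoeffding_nonzero} in dimension $n_1+n_2$ with spread $2b_1$ (which yields the $32b_1^2$ in the exponent), and use that $\lambda_{\max}$ of the dilation equals the spectral norm. Your side remark that the ``$n\times n$'' in the statement should be read as $n_1\times n_2$ is also the intended reading, consistent with the $n_1+n_2$ appearing in the tail bound.
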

We will use the following fact applied to the above corollaries.
\begin{fact}\label{event}
For an event $\mathcal{E}$ and random variable $X$, $\mathbb{P}(\mathcal{E}|X) \geq p$ for all $X \in \mathcal{C}$ implies that
$\mathbb{P}(\mathcal{E}|X\in \mathcal{C}) \geq p$.
\end{fact}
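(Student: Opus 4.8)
The plan is to derive this from the tower property (law of total expectation). The right-hand side only makes sense when $\mathbb{P}(X \in \calc) > 0$, which is the standing assumption whenever this fact is used here (the matrix Hoeffding corollaries above are only invoked when $\mathbb{P}(X_{k-1} \in \Gamma_{k-1}) > 0$, with $\calc = \Gamma_{k-1}$). By the definition of conditional probability,
\[
\mathbb{P}(\mathcal{E} \mid X \in \calc) = \frac{\mathbb{P}\big(\mathcal{E} \cap \{X \in \calc\}\big)}{\mathbb{P}(X \in \calc)},
\]
so it suffices to show that the numerator is at least $p\,\mathbb{P}(X \in \calc)$.

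First I would rewrite the numerator with indicator functions and condition on $X$:
\[
\mathbb{P}\big(\mathcal{E} \cap \{X \in \calc\}\big) = \E\big[\,\mathbf{1}_{\mathcal{E}}\,\mathbf{1}_{\{X \in \calc\}}\,\big] = \E\big[\,\mathbf{1}_{\{X \in \calc\}}\,\mathbb{P}(\mathcal{E} \mid X)\,\big],
\]
where the second equality uses that $\mathbf{1}_{\{X \in \calc\}}$ is a function of $X$ and hence can be pulled outside the inner conditional expectation $\E[\mathbf{1}_{\mathcal{E}} \mid X] = \mathbb{P}(\mathcal{E} \mid X)$. Next I would apply the hypothesis: on the event $\{X \in \calc\}$ we have $\mathbb{P}(\mathcal{E} \mid X) \geq p$ — this is precisely the statement ``$\mathbb{P}(\mathcal{E} \mid X) \geq p$ for all $X \in \calc$'' — so the integrand obeys $\mathbf{1}_{\{X \in \calc\}}\,\mathbb{P}(\mathcal{E} \mid X) \geq p\,\mathbf{1}_{\{X \in \calc\}}$ pointwise. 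Taking expectations yields $\mathbb{P}(\mathcal{E} \cap \{X \in \calc\}) \geq p\,\mathbb{P}(X \in \calc)$, and dividing by $\mathbb{P}(X \in \calc) > 0$ finishes the argument.

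In the discrete case — which covers every application here, since the conditioning variables take values in countable sets (or can be discretized without affecting the bounds) — one can bypass the conditional-expectation machinery entirely:
\[
\mathbb{P}(\mathcal{E} \mid X \in \calc) = \sum_{x \in \calc} \mathbb{P}(\mathcal{E} \mid X = x)\,\mathbb{P}(X = x \mid X \in \calc) \geq p\sum_{x \in \calc}\mathbb{P}(X = x \mid X \in \calc) = p.
\]
There is no genuine obstacle in this proof; the only points deserving a word of care are the positivity assumption $\mathbb{P}(X \in \calc) > 0$ noted above and the measurability of $\mathcal{E}$ and of $\{X \in \calc\}$, both immediate in context. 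Accordingly I would keep the write-up to the two or three lines of computation above.
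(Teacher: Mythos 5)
Your proof is correct. The paper actually states Fact \ref{event} without any proof at all, evidently treating it as standard; the argument you give via the tower property (or its discrete summation form) is exactly the canonical justification the authors are tacitly relying on, and your side remarks about requiring $\mathbb{P}(X\in\mathcal{C})>0$ and interpreting ``$\mathbb{P}(\mathcal{E}\mid X)\geq p$ for all $X\in\mathcal{C}$'' as a bound holding on the event $\{X\in\mathcal{C}\}$ are the right caveats.
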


To begin bounding the terms in \eqref{zetakbound}, first consider $\bm{A}_k := \frac{1}{\alpha} \sum_t {\bm{E}_{\rmnew}}' \bm{\Phi}_{(0)} \bm{\ell}_t {\bm{\ell}_t}' \bm{\Phi}_{(0)} \bm{E}_{\rmnew}$. Notice that ${\bm{E}_{\rmnew}}' \bm{\Phi}_{(0)} \bm{\ell}_t = \bm{R}_{\rmnew} \bm{a}_{t,\rmnew} + {\bm{E}_{\rmnew}}' \bm{D}_* \bm{a}_{t,*}$. Let $\bm{Z}_t = \bm{R}_{\rmnew} \bm{a}_{t,\rmnew} {\bm{a}_{t,\rmnew}}' {\bm{R}_{\rmnew}}'$ and let $\bm{Y}_t = \bm{R}_{\rmnew} \bm{a}_{t,\rmnew}{\bm{a}_{t,*}}' {\bm{D}_*}' {\bm{E}_{\rmnew}} +  {\bm{E}_{\rmnew}}' \bm{D}_* \bm{a}_{t,*}{\bm{a}_{t,\rmnew}}' {\bm{R}_{\rmnew}}'$, then
\beq
\bm{A}_k  \succeq \frac{1}{\alpha} \sum_t \bm{Z}_t + \frac{1}{\alpha} \sum_t \bm{Y}_t \label{lemmabound_1}
\eeq

Consider $\sum_t \bm{Z}_t = \sum_t \bm{R}_{\rmnew} \bm{a}_{t,\rmnew} {\bm{a}_{t,\rmnew}}' {\bm{R}_{\rmnew}}'$.
\begin{enumerate}
\item The $\bm{Z}_t$'s are conditionally independent given $X_{k-1}$.

\item Using a theorem of Ostrowoski \cite[Theorem 4.5.9]{hornjohnson},
 for all $X_{k-1} \in \Gamma_{k-1}$, $\lambda_{\min}\left( \E(\frac{1}{\alpha}\sum_t \bm{Z}_t|X_{k-1})\right) = \lambda_{\min}\left( \bm{R}_{\rmnew} \frac{1}{\alpha}\sum_t \E(\bm{a}_{t,\rmnew}{\bm{a}_{t,\rmnew}}') {\bm{R}_{\rmnew}}'\right) \ge \lambda_{\min} \left(\bm{R}_{\rmnew} {\bm{R}_{\rmnew}}'\right)\lambda_{\min} \left(\frac{1}{\alpha}\sum_t \E(\bm{a}_{t,\rmnew}{\bm{a}_{t,\rmnew}}')\right) \geq (1-(\zeta_{j,*}^+)^2)\lambda_{\rmnew}^-$.

\item Finally, conditioned on $X_{k-1}$,  $0 \preceq \bm{Z}_t  \preceq c {\gamma_{\rmnew}}^2 \bm{I}$ holds for all $X_{k-1} \in \Gamma_{k-1}$.
\end{enumerate}

Thus, applying Corollary \ref{hoeffding_nonzero} with $\epsilon = \frac{c\zeta\lambda^-}{24}$ and Fact \ref{event}, we get
\begin{equation}
\mathbb{P}\left(\lambda_{\min} \left(\frac{1}{\alpha} \sum_t \bm{Z}_t\right)
\geq   (1-(\zeta_*^+)^2)\lambda_{\rmnew}^-  - \frac{c\zeta\lambda^-}{24} \bigg| X_{k-1}\in\Gamma_{k-1}\right)
\geq  1- c \exp \left(\frac{-\alpha \zeta^2 (\lambda^-)^2}{8 \cdot 24^2 \cdot {\gamma_{\rmnew}}^4}\right).
 \label{lemma_add_A1}
\end{equation}

Consider $\bm{Y}_t = \bm{R}_{\rmnew} \bm{a}_{t,\rmnew}{\bm{a}_{t,*}}' {\bm{D}_*}' {\bm{E}_{\rmnew}} +  {\bm{E}_{\rmnew}}' \bm{D}_* \bm{a}_{t,*}{\bm{a}_{t,\rmnew}}' {\bm{R}_{\rmnew}}'$.
\begin{enumerate}
\item  The $\bm{Y}_t$'s are conditionally independent given $X_{k-1}$.

\item Using the fact that $\bm{a}_{t}$ are zero mean and $\cov(\bm{a}_t)$ is diagonal,  $\E\left(\frac{1}{\alpha}\sum_t \bm{Y}_t|X_{k-1}\right) = 0$ for all $X_{k-1} \in \Gamma_{k-1}$.

\item Using  the bound on $\zeta$ from Theorem \ref{thm1}, $\|\bm{Y}_t\| \le 2\sqrt{c r} \zeta_*^+ \gamma \gamma_{\rmnew}  \leq 2\sqrt{c r} \zeta_*^+ {\gamma}^2 \le  2$ holds w.p. one for all $X_{k-1} \in \Gamma_{k-1}$.
\end{enumerate}

Thus, under the same conditioning, $-2 \bm{I} \preceq \bm{Y}_t  \preceq 2 \bm{I}$ with w.p. one.

Thus, applying Corollary \ref{hoeffding_nonzero} with $\epsilon = \frac{c\zeta\lambda^-}{24}$, we get
\beq
\mathbb{P}\left(\lambda_{\min} \left(\frac{1}{\alpha} \sum_t \bm{Y}_t \right) \geq  \frac{-c\zeta\lambda^-}{24} \Big| X_{k-1} \right) \geq 1- c \exp \left( \frac{-\alpha c^2 \zeta^2(\lambda^-)^2} {8 \cdot 24^2 \cdot (4)^2}\right)  \ \text{for all $X_{k-1} \in  \Gamma_{k-1}$}
\label{lemma_add_A2}
\eeq

Combining (\ref{lemmabound_1}), (\ref{lemma_add_A1}) and (\ref{lemma_add_A2}) and using the union bound,
\[
\mathbb{P} \left(\lambda_{\min}(\bm{A}_k) \geq \lambda_{\rmnew}^-(1 - (\zeta_*^+)^2) - \frac{c\zeta\lambda^-}{12} \Big| X_{k-1} \in \Gamma_{k-1}\right) \geq 1-p_{a}(\alpha,\zeta)
\]
The first claim of the lemma follows by using $\lambda_{\rmnew}^- \ge \lambda^-$. 

Now consider $\bm{A}_{k,\perp} := \frac{1}{\alpha} \sum_t {\bm{E}_{\rmnew,\perp}}' \bm{\Phi}_{(0)} \bm{\ell}_t {\bm{\ell}_t}' \bm{\Phi}_{(0)} \bm{E}_{\rmnew,\perp}$. By their definitions, ${\bm{E}_{\rmnew,\perp}}' \bm{\Phi}_{(0)} \bm{\ell}_t = {\bm{E}_{\rmnew,\perp}}' \bm{D}_* \bm{a}_{t,*}$. Thus, $\bm{A}_{k,\perp} = \frac{1}{\alpha} \sum_t \bm{Z}_t$ with  $\bm{Z}_t={\bm{E}_{\rmnew,\perp}}' \bm{D}_* \bm{a}_{t,*} {\bm{a}_{t,*}}' {\bm{D}_*}' \bm{E}_{\rmnew,\perp}$ which is of size $(n-c)\times (n-c)$.
Using the same ideas as above we can show that $0 \preceq \bm{Z}_t \preceq r (\zeta_*^+)^2 {\gamma}^2 \bm{I} \preceq \zeta \bm{I}$ and $\E\left(\frac{1}{\alpha}\sum_t \bm{Z}_t|X_{k-1}\right) \preceq (\zeta_*^+)^2 \lambda^+ \bm{I}$.  Thus by Corollary \ref{hoeffding_nonzero} with $\epsilon = \frac{c\zeta\lambda^-}{24}$ 
the second claim follows.  Notice that $\frac{\lambda^+}{\lambda_{\rmnew}^-}\leq f$.

Now consider the $\bm{\mathcal{H}}_k$ term.  For ease of notation, define
\[
\Ltil_t = \bm{\Phi}_{(0)} \bm{\ell}_t.
\]
Using the expression for $\bm{\mathcal{H}_k}$ given in Definition \ref{defHk}, and noting that $\bm{EE}' + \bm{E}_{\perp}{\bm{E}_{\perp}}' = \I$ we get that
\[
\bm{\mathcal{H}}_k = \frac{1}{\alpha} \sum_t \Big( \bm{\Phi}_{(0)} \bm{e}_t {\bm{e}_t}' \bm{\Phi}_{(0)} - (\Ltil_t {\bm{e}_t}' \bm{\Phi}_{(0)} + \bm{\Phi}_{(0)} \bm{e}_t \Ltil_t{}')  + (\bm{F}_t + {\bm{F}_t}') \Big)
\]
where
\[
\bm{F}_t = \bm{E}_{\rmnew,\perp}{\bm{E}_{\rmnew,\perp}}' \Ltil_t \Ltil_t{}' \bm{E}_{\rmnew}{\bm{E}_{\rmnew}}'.
\]
Thus,
\begin{equation}\label{add_calH1}
\| \bm{\mathcal{H}}_k \|_2 \leq \bigg\| \frac{1}{\alpha} \sum_t \bm{e}_t {\bm{e}_t}' \bigg\|_2 + 2\bigg\| \frac{1}{\alpha} \sum_t  \Ltil_t {\bm{e}_t}'  \bigg\|_2 +  2\bigg\| \frac{1}{\alpha} \sum_t \bm{F}_t \bigg\|_2
\end{equation}

First note that by Lemma \ref{cslem} $\et$ satisfies \eqref{etdef0} if $X_{k-1}\in\Gamma_{k-1}$.

Next, we obtain high probability bounds on each of the three terms on the right hand side of (\ref{add_calH1}) using the Hoeffding corollaries.

Consider $\|\frac{1}{\alpha} \sum_t \bm{e}_t {\bm{e}_t}'\|_2$. Let $\bm{Z}_t = \bm{e}_t {\bm{e}_t}'$.

\begin{enumerate}
\item Conditioned on $X_{k-1}$, the various $\bm{Z}_t$'s in the summation are independent, for all $X_{k-1} \in \Gamma_{k-1}$.
\item Using Lemma \ref{cslem}
conditioned on $X_{k-1}\in\Gamma_{k-1}$, $\bm{0} \preceq \bm{Z}_t \preceq b_1 \bm{I}$ with probability one for all $X_{k-1} \in  \Gamma_{k-1}$. Here $b_1 :=\Big( \phi^+ (\zeta_*^+ \sqrt{r} \gamma + \zeta_{k-1}^+ \sqrt{c}\gamma_{\rmnew}) \Big)^2$.
\item  Using the expression for $\et$ in Lemma \ref{cslem} when $X_{k-1} \in \Gamma_{k-1}$,
\begin{align*}
 &  \frac{1}{\alpha}\sum_t \E \left[ \et{\et}' | X_{k - 1} \right] \\
&=   \frac{1}{\alpha}\sum_t \E \left[\Big( \bm{I}_{\mathcal{T}_t} [ ({\bm{\Phi}_{t})_{\mathcal{T}_t}}'(\bm{\Phi}_{t})_{\mathcal{T}_t}]^{-1}  {\bm{I}_{\mathcal{T}_t}}' \bm{\Phi}_{t} \bm{\ell}_t \Big)\Big(  \bm{I}_{\mathcal{T}_t} [ ({\bm{\Phi}_{t})_{\mathcal{T}_t}}'(\bm{\Phi}_{t})_{\mathcal{T}_t}]^{-1}  {\bm{I}_{\mathcal{T}_t}}' \bm{\Phi}_{t} \bm{\ell}_t \Big)' | X_{k - 1} \right]\\
&=    \frac{1}{\alpha}\sum_t \E\left[\Big(  \bm{I}_{\mathcal{T}_t} [ ({\bm{\Phi}_{t})_{\mathcal{T}_t}}'(\bm{\Phi}_{t})_{\mathcal{T}_t}]^{-1}  {\bm{I}_{\mathcal{T}_t}}' \bm{\Phi}_{t} \bm{\ell}_t {\lt}' \bm{\Phi}_{t}\bm{I}_{\mathcal{T}_t}[ ({\bm{\Phi}_{t})_{\mathcal{T}_t}}'(\bm{\Phi}_{t})_{\mathcal{T}_t}]^{-1}{\bm{I}_{\mathcal{T}_t}}' | X_{k - 1} \right] \\
&= \frac{1}{\alpha}\sum_t   \bm{I}_{\mathcal{T}_t} [ ({\bm{\Phi}_{t})_{\mathcal{T}_t}}'(\bm{\Phi}_{t})_{\mathcal{T}_t}]^{-1}  {\bm{I}_{\mathcal{T}_t}}' \bm{\Phi}_{t} (\bm{P}_{*}(\bm{\Lambda}_{t})_{*}{\bm{P}_{*}}' + \bm{P}_{\rmnew}(\bm{\Lambda}_{t})_{\new}{\bm{P}_{\rmnew}}' ) \bm{\Phi}_{t}\bm{I}_{\mathcal{T}_t}[ ({\bm{\Phi}_{t})_{\mathcal{T}_t}}'(\bm{\Phi}_{t})_{\mathcal{T}_t}]^{-1}{\bm{I}_{\mathcal{T}_t}}'\\
&= \frac{1}{\alpha}\sum_t   \bm{I}_{\mathcal{T}_t} [ ({\bm{\Phi}_{t})_{\mathcal{T}_t}}'(\bm{\Phi}_{t})_{\mathcal{T}_t}]^{-1}  {\bm{I}_{\mathcal{T}_t}}' \Big(\bm{D}_{*,k-1}(\bm{\Lambda}_{t})_{*}{\bm{D}_{*,k-1}}'+ \bm{D}_{\rmnew,k-1}(\bm{\Lambda}_{t})_{\new}{\bm{D}_{\rmnew,k-1}}' \Big) \bm{I}_{\mathcal{T}_t}[ ({\bm{\Phi}_{t})_{\mathcal{T}_t}}'(\bm{\Phi}_{t})_{\mathcal{T}_t}]^{-1}{\bm{I}_{\mathcal{T}_t}}'  \\
\end{align*}

When $k=1$ we can apply Lemma \ref{Dnew0_lem} to get that the $\|{\bm{D}_{\new}}'\I_{\T_t}\|_2 \leq \kappa_s^+$.  Then we apply Lemma \ref{blockdiag} with $\sigma^+ = (\phi^+)^2 \left( (\zeta_{*}^+)^2\lambda^+ +   (\kappa_{s}^+)^2 \lambda_{\rmnew}^+ \right)$.
This gives
\[
\bm{0} \preceq \E \left[  \sum_{t\in\mathcal{I}_{j,1}} \et{\et}' \Big| X_{0} \right] \preceq  8h^+(\phi^+)^2 \Big((\zeta_*^+)^2\lambda^+ + (\kappa_{s}^+)^2 \lambda_{\rmnew}^+ \Big)\I
\quad\text{ for all }X_{0} \in \Gamma_{0}.
\]
When $k\geq2$ we can apply Lemma \ref{blockdiag} with $\sigma^+ = (\phi^+)^2 \left( (\zeta_{*}^+)^2\lambda^+ +  (\zeta_{k-1}^+)^2 \lambda_{\rmnew}^+ \right)$ and the assumed bound $h^*(\alpha)\leq h^+\alpha$ to get that,
\[
\bm{0} \preceq \E \left[  \sum_{t\in\mathcal{I}_{j,k}} \et{\et}' \Big| X_{k - 1} \right] \preceq  8h^+(\phi^+)^2 \Big((\zeta_*^+)^2\lambda^+ + (\zeta_{k-1}^+)^2 \lambda_{\rmnew}^+ \Big)\I
\quad\text{ for all }X_{k - 1} \in \Gamma_{k-1}.
\]
\end{enumerate}

So define 
\[
b_{2,k} := \begin{cases} 
8h^+(\phi^+)^2 \Big((\zeta_*^+)^2\lambda^+ + (\kappa_{s}^+)^2\lambda_{\rmnew}^+\Big) & k=1 \\
8h^+(\phi^+)^2 \Big((\zeta_*^+)^2\lambda^+ + (\zeta_{k-1}^+)^2\lambda_{\rmnew}^+\Big) & k\geq2 .
\end{cases}
\]
Thus, applying Corollary \ref{hoeffding_nonzero} with $\epsilon = \frac{c\zeta\lambda^-}{24}$ and Fact \ref{event},
\beq
\mathbb{P} \left( \Big\|\frac{1}{\alpha} \sum_{t\in\mathcal{I}_{j,k}}  \bm{e}_t {\bm{e}_t}' \Big\|_2 \leq b_{2,k}  + \frac{c\zeta\lambda^-}{24} \Big| X_{k-1} \in \Gamma_{k-1} \right) \geq 1- n \exp\left(\frac{-\alpha c^2 \zeta^2 (\lambda^-)^2}{ 8 \cdot 24^2 {b_1}^2}\right).
\label{add_etet}
\eeq

Consider $\big\| \frac{1}{\alpha} \sum_t  \Ltil_t {\bm{e}_t}'  \big\|_2$. Let $\bm{Z}_t: = \Ltil_t {\bm{e}_t}' $.
\begin{enumerate}
\item Conditioned on $X_{k-1}$, the various $\bm{Z}_t$'s used in the summation are mutually independent, for all $X_{k-1} \in \Gamma_{k-1}$.

\item For  $X_{k-1} \in \Gamma_{k-1}$, $\|\bm{Z}_t\|_2 = \|\Ltil_t {\bm{e}_t}' \|_2 \leq \Big(\zeta_*^+\sqrt{r}\gamma + \sqrt{c}\gamma_{\rmnew} \Big) \Big( \phi^+ (\zeta_*^+ \sqrt{r} \gamma + \zeta_{k-1}^+ \sqrt{c}\gamma_{\rmnew}) \Big) := b_3$
holds with probability one.

\item Using the same bounds, when $X_{0} \in \Gamma_{0}$, (notice that this is the $k=1$ case)
\begin{align*}
&\bigg\| \E \bigg[   \frac{1}{\alpha}\sum_t \Ltil_t {\bm{e}_t}'  \ \big| \ X_{0} \bigg] \bigg\|_2  \\
=& \bigg\|  \E\bigg[ \frac{1}{\alpha}\sum_t\bm{\Phi}_{(0)}(\bm{P}_{*}\bm{a}_{t,*} + \bm{P}_{\rmnew} \bm{a}_{t,\rmnew}) (\bm{P}_{*} \bm{a}_{t,*} + \bm{P}_{\rmnew} \bm{a}_{t,\rmnew})' {\bm{\Phi}_{t}}'\bm{I}_{\mathcal{T}_t}[{(\bm{\Phi}_{t})_{\mathcal{T}_t}}'(\bm{\Phi}_{t})_{\mathcal{T}_t}]^{-1} {\bm{I}_{\mathcal{T}_t}}'\bm{\Phi}_{(0)}   \ \big| \ X_{0} \bigg]\bigg\|_2 \\
=& \bigg\| \E \bigg[   \frac{1}{\alpha} \sum_t (\bm{D}_* \bm{a}_{t,*} + \bm{D}_{\new} \bm{a}_{t,\new})(\bm{D}_{*} \bm{a}_{t,*} + \bm{D}_{\new} \bm{a}_{t,\rmnew})' \bm{I}_{\mathcal{T}_t}[{(\bm{\Phi}_{t})_{\mathcal{T}_t}}'(\bm{\Phi}_{t})_{\mathcal{T}_t}]^{-1} {\bm{I}_{\mathcal{T}_t}}'\bm{\Phi}_{(0)}  \ \big| \ X_{0} \bigg] \bigg\|_2 \\
=& \bigg\| \frac{1}{\alpha}\sum_t \left[ \Big(   \bm{D}_*  (\bm{\Lambda}_t)_*  {\bm{D}_{*}}'  + \bm{D}_{\rmnew} (\bm{\Lambda}_{t})_{\new}  {\bm{D}_{\new}}' \Big) \bm{I}_{\mathcal{T}_t}[{(\bm{\Phi}_{t})_{\mathcal{T}_t}}'(\bm{\Phi}_{t})_{\mathcal{T}_t}]^{-1} {\bm{I}_{\mathcal{T}_t}}' \right] \bigg\|_2\\
\leq & (\zeta_{*}^+)^2\lambda^+ + \kappa_{s}^+\lambda_{\new}^+.
\end{align*}
And when $k\geq2$,
\begin{align*}
& \bigg\| \frac{1}{\alpha}\sum_t \left[ \Big(   \bm{D}_*  (\bm{\Lambda}_t)_*  {\bm{D}_{*,k-1}}'  + \bm{D}_{\rmnew} (\bm{\Lambda}_{t})_{\new}  {\bm{D}_{\rmnew,k-1}}' \Big) \bm{I}_{\mathcal{T}_t}[{(\bm{\Phi}_{t})_{\mathcal{T}_t}}'(\bm{\Phi}_{t})_{\mathcal{T}_t}]^{-1} {\bm{I}_{\mathcal{T}_t}}' \right] \bigg\|_2\\
\leq& \sqrt{\lambda_{\max}\left(\frac{1}{\alpha}\sum_t \Big(   \bm{D}_*  (\bm{\Lambda}_t)_*  {\bm{D}_{*,k-1}}'  + \bm{D}_{\rmnew} (\bm{\Lambda}_{t})_{\new}  {\bm{D}_{\rmnew,k-1}}' \Big)\Big(   \bm{D}_*  (\bm{\Lambda}_t)_*  {\bm{D}_{*,k-1}}'  + \bm{D}_{\rmnew} (\bm{\Lambda}_{t})_{\new}  {\bm{D}_{\rmnew,k-1}}' \Big)' \right)} \\
& \sqrt{\lambda_{\max}\left( \frac{1}{\alpha}\sum_t \Big(\bm{I}_{\mathcal{T}_t}[{(\bm{\Phi}_{t})_{\mathcal{T}_t}}'(\bm{\Phi}_{t})_{\mathcal{T}_t}]^{-1} {\bm{I}_{\mathcal{T}_t}}'\Big)\Big(\bm{I}_{\mathcal{T}_t}[{(\bm{\Phi}_{t})_{\mathcal{T}_t}}'(\bm{\Phi}_{t})_{\mathcal{T}_t}]^{-1} {\bm{I}_{\mathcal{T}_t}}'\Big)' \right)  } \\
\leq& \left( (\zeta_*^+)^2 \lambda^+ + \zeta_{k-1}^+\lambda_{\rmnew}^+  \right)\left(\sqrt{8h^+}\phi^+\right).
\end{align*}
The first inequality is Cauchy-Schwarz for a sum of matrices.  This can be found as Lemma \ref{CSmat} in the appendix.
The last line uses Lemma $\ref{blockdiag}$ with $\sigma^+ = (\phi^+)^2$.
So define
\[
b_{4,k} := 
\begin{cases} 
(\zeta_{*}^+)^2\lambda^+ + \kappa_{s}^+\lambda_{\new}^+ & k=1 \\
\left( (\zeta_*^+)^2 \lambda^+ + \zeta_{k-1}^+\lambda_{\rmnew}^+  \right)\left(\sqrt{8h^+}\phi^+\right) & k\geq2.
\end{cases}
\]

By Corollary \ref{hoeffding_rec} with $\epsilon = \frac{c\zeta\lambda^-}{24}$ and Fact \ref{event},
\begin{equation*}
\mathbb{P} \left(  \Big\| \frac{1}{\alpha} \sum_{t\in\Ijk}  \Ltil_t {\bm{e}_t}'  \Big\|_2 \leq b_{4,k} + \frac{c\zeta\lambda^-}{24} \Bigg| X_{k-1} \in \Gamma_{k-1} \right) \geq 1 - n \exp\left(  \frac{-\alpha c^2\zeta^2(\lambda^-)^2}{8 \cdot 24^2 {b_3}^2} \right).
\end{equation*}
and so
\begin{equation}\label{Ltet}
\mathbb{P} \left(  2\Big\| \frac{1}{\alpha} \sum_t  \Ltil_t {\bm{e}_t}' \Big\|_2 \leq 2b_{4,k} + \frac{c\zeta\lambda^-}{12} \Bigg| X_{k-1} \in \Gamma_{k-1} \right) \geq 1 - n \exp\left(  \frac{-\alpha c^2\zeta^2(\lambda^-)^2}{8 \cdot 24^2 {b_3}^2} \right).
\end{equation}

\end{enumerate}

Finally, consider  $\big\| \frac{1}{\alpha} \sum_t \bm{F}_t \big\|_2$.
\begin{enumerate}
\item Conditioned on $X_{k-1}$, the $\bm{F}_t$'s  are mutually independent, for all $X_{k-1} \in  \Gamma_{k-1}$.
\item For $X_{k-1} \in  \Gamma_{k-1}$, 
\begin{align*}
\|\bm{F}_t\|_2 &= \| \bm{E}_{\rmnew,\perp}{\bm{E}_{\rmnew,\perp}}' \Ltil_t \Ltil_t{}' \bm{E}_{\rmnew}{\bm{E}_{\rmnew}}' \|_2 \\
&= \| \bm{E}_{\rmnew,\perp}{\bm{E}_{\rmnew,\perp}}'\bm{D}_*\bm{a}_t {\bm{a}_t}' \Pj \bm{\Phi}_{(0)}\bm{E}_{\rmnew}{\bm{E}_{\rmnew}}' \|_2 \\
&\leq \| \bm{D}_* \bm{a}_t{\bm{a}_t}'  \|_2 \\
&\leq \zeta_*^+ (\sqrt{r}\gamma  )^2 := b_5
\end{align*}
 holds with probability 1.
Here and below, notice that $\bm{E}_{\rmnew,\perp}$ nullifies $\bm{D}_{\rmnew}$.

\item For $X \in \Gamma_{k-1}$,
\begin{align*}
\bigg\| \E \Big[ \frac{1}{\alpha} \sum_t \bm{F}_t  \ \big| \ X_{k-1} \Big] \bigg\|_2 &  = \bigg\| \E \Big[ \frac{1}{\alpha} \sum_t \bm{E}_{\rmnew,\perp} {\bm{E}_{\rmnew,\perp}}' \Ltil_t \Ltil_t{}' \bm{E}_{\rmnew}{\bm{E}_{\rmnew}}' \ \big| \ X_{k-1} \Big] \bigg\|_2 \\
&\leq \bigg\| \E \Big[ \frac{1}{\alpha} \sum_t \bm{E}_{\rmnew,\perp}{\bm{E}_{\rmnew,\perp}}'\Ltil_t \Ltil_t{}' \ \big| \ X_{k-1} \Big] \bigg\|_2  \\
&= \bigg\|\frac{1}{\alpha}\sum_t  \bm{E}_{\rmnew,\perp} {\bm{E}_{\rmnew,\perp}}' (\bm{D}_*\bm{\Lambda}_t {\bm{D}_*}' + \bm{D}_{\rmnew}\bm{\Lambda}_{(j),\rmnew}\bm{D}_{\rmnew}') \bigg\|_2 \\
&= \bigg\|\frac{1}{\alpha}\sum_t  \bm{E}_{\rmnew,\perp} {\bm{E}_{\rmnew,\perp}}' (\bm{D}_*\bm{\Lambda}_t {\bm{D}_*}' ) \bigg\|_2 \\
&\leq \bigg\|\frac{1}{\alpha}\sum_t  (\bm{D}_*\bm{\Lambda}_t {\bm{D}_*}' ) \bigg\|_2 \\
&\leq (\zeta_*^+)^2\lambda^+  := b_6
\end{align*}

\end{enumerate}

Applying Corollary \ref{hoeffding_rec} with $\epsilon = \frac{c\zeta\lambda^-}{24}$ and Fact \ref{event},
\begin{equation}\label{Ft}
\mathbb{P} \left( 2 \Big\|  \frac{1}{\alpha}\sum_t \bm{F}_t  \Big\|_2 \leq 2b_6 + \frac{c\zeta\lambda^-}{12} \Bigg| X_{k-1} \in \Gamma_{k-1} \right) \geq 1 - n \exp\left(  \frac{-\alpha c^2\zeta^2(\lambda^-)^2}{8 \cdot 24^2 {b_5}^2} \right)
\end{equation}


Using (\ref{add_calH1}), (\ref{add_etet}), (\ref{Ltet}) and (\ref{Ft}) and the union bound,  for any $X_{k-1} \in  \Gamma_{k-1}$,
\begin{align}
&\mathbb{P} \left(\|\bm{\mathcal{H}}_k\|_2 \leq b + \frac{5c\zeta\lambda^-}{24} \Big|X_{k-1}\right)\geq \nn \\
&  1-n \exp\left(\frac{-\alpha c^2 \zeta^2 (\lambda^-)^2}{8 \cdot 24^2 {b_1}^2}\right)- n \exp\left(\frac{-\alpha c^2\zeta^2 (\lambda^-)^2}{8\cdot 24^2 \cdot  {b_3}^2}\right)- n \exp\left(\frac{-\alpha c^2 \zeta^2 (\lambda^-)^2 }{8 \cdot 24^2 {b_5}^2}\right)
\label{normHk}
\end{align}
where $b_k := b_{2,k} +2b_{4,k}+ 2b_6$.


Applying the bounds assumed in Theorem \ref{thm1} and $\zeta_{k-1}^+ \leq 1$, we get that the probability is at least $1 - p_c(\alpha,\zeta)$.

\end{proof}

\section{Proof of Lemma \ref{probsupch}}\label{probpf}

\begin{proof}
To show that the $\mathcal{T}_t$ satisfy Example \ref{sby2}, we will show:
\begin{enumerate}
\item in an interval $\mathcal{J}_{u}$, the support changes fewer than $\frac{n}{2s}$ times;
\item the support changes at least once every $\frac{\alpha}{200}$ instants with probability greater than $1 - \frac{n^{-10}}{2}$;
\item when it changes, the support moves by at least $\frac{s}{2}$ and not more than $2s$ indices with probability greater than $1 - \frac{n^{-10}}{2}$.
\end{enumerate}

To see 1), observe that the object moves at most $\alpha$ times during an interval $\mathcal{J}_{u}$.  Then the assumed bound $s \leq \frac{n}{2\alpha}$ implies that the the support changes fewer than $\frac{n}{2s}$ times during an interval $\mathcal{J}_{u}$.  So 1) occurs with probability 1.

For 2) we have by the bound in \cite{longest_run}
\begin{align*}
&\mathbb{P}\left(\text{The object moves at least once every } \frac{\alpha}{200} \text{ instants in the interval } \mathcal{J}_{u}\right)\\
& =  \mathbb{P}\left( \text{The bit sequence } \theta_{(u-1)\alpha} \dots \theta_{u\alpha-1} \text{ does not contain a sequence of } \frac{\alpha}{200} \text{ consecutive zeros} \right) \\
& \geq  \left( 1 - (1-q)^{\frac{\alpha}{200}} \right)^{\alpha - \frac{\alpha}{200} + 1}  \geq \left(1 - (1-q)^{\frac{\alpha}{200}} \right)^{\alpha}  .
\end{align*}
We need the object to move at least once every $\frac{\alpha}{200}$ time instants  in {\em every} interval $\mathcal{J}_{u}$.  We have
\begin{align*}
\mathbb{P}\left(\text{The object moves at least once every } \frac{\alpha}{200} \text{ instants in {\em every} interval } \mathcal{J}_{u}\right) &\geq  \left(1 - (1-q)^{\frac{\alpha}{200}} \right)^{\lceil \frac{t_{\max}}{\alpha} \rceil \alpha} \\
&\geq \left(1 - (1-q)^{\frac{\alpha}{200}} \right)^{( \frac{t_{\max}}{\alpha} + 1) \alpha} \\
&\geq 1 - (t_{\max} + \alpha) (1 - q)^{\frac{\alpha}{200}}.
\end{align*}
This probability will be greater than $1 - \frac{n^{-10}}{2}$ if
\[
q \geq 1 - \left(  \frac{n^{-10}}{2(t_{\max}+\alpha)} \right)^{\frac{200}{\alpha}} 
\]

%

To prove 3), consider the probability of having motion of at least $\frac{s}{2}$ indices whenever the object moves. This will happen if $\nu_t \geq -0.1s$ for $t=1,\dots,t_{\max}$. Also, if $\nu_t \leq .1s$, then the object will move by fewer than $.7s\leq 2s$ indices.  Using a standard Gaussian tail bound,
\begin{align*}
\mathbb{P}(|\nu_t| \leq 0.1s)^{t_{\max}}
&\geq \left( 1 - \frac{2\exp\left(\frac{-(0.1s)^2}{2\sigma^2}\right)}{\frac{0.1s}{\sigma}\sqrt{2\pi}} \right)^{t_{\max}} \\
&= \left( 1 - \frac{20\sigma\exp\left(\frac{-(0.1s)^2}{2\sigma^2}\right)}{s \sqrt{2\pi}} \right)^{t_{\max}} \\ 
&= \left( 1 - \frac{20\sqrt{\frac{\rho}{\log(n)}}\exp\left(\frac{-0.01\log(n)}{2\rho }\right)}{ \sqrt{2\pi}} \right)^{t_{\max}} \\
&\geq 1 - t_{\max}\frac{20}{\sqrt{2\pi}}\sqrt{\frac{\rho}{\log(n)}} \exp\left(\frac{-.01}{2\rho}\log(n)\right) \\
&=  1 - t_{\max}\frac{20\sqrt{\rho}}{\sqrt{2\pi}} \frac{n^{\frac{-0.005}{\rho}}}{\sqrt{\log(n)}} \\
&\geq 1 - \frac{20\sqrt{\rho}}{\sqrt{2\pi}} \frac{n^{10 - \frac{0.005}{\rho}}}{\sqrt{\log(n)}} \\
&\geq 1 - \frac{n^{-10}}{2}.
\end{align*}
The last line uses the bound $\rho \leq \frac{1}{4000}$, and for simplicity we assume $n\geq 3$, so that $\log(n)\geq 1$.

Finally, by the union bound
\[
\mathbb{P}\Big( 2) \text{ and } 3) \text{ hold} \Big) =1 - \Pr \big( 2) \text{ or } 3) \text{ does not hold} \big) \geq 1 - 2\frac{n^{-10}}{2} = 1 - n^{-10}
\]

The only modification to the proof of Theorem \ref{thm1} is in the definition of the random variable $\bm{X}_{k-1}$.  See Definition \ref{Gamma_k}.

\end{proof}

\section{Simulations}\label{simulation}

\begin{figure}[ht]
	\centering
	\includegraphics{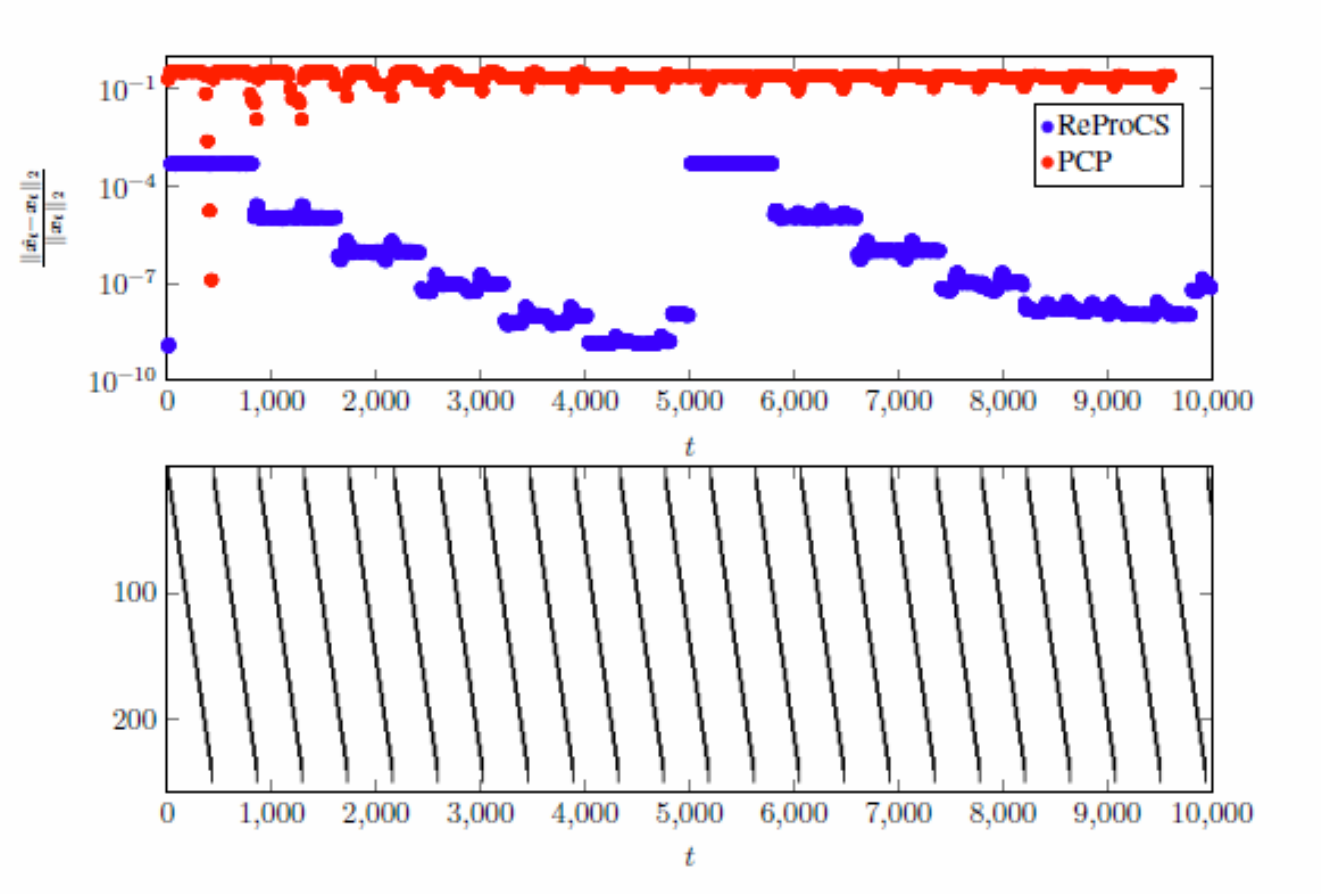}
	\caption{A comparison of ReProCS and PCP for {\bf correlated} supports of $\xt$ shown with the support pattern of $\bm{X}$.
  Results are averaged over 100 Monte Carlo trials.  The $y$ axis scale is logarithmic.
\label{fig:cor}}
\end{figure}

In this section we provide some simulations that demonstrate the result we have proven above.

The data for Figure \ref{fig:cor} was generated as follows.
We chose $n = 256$ and $t_{\max} = 10000$.  The size of the support of $\xt$ was 20.
Each non-zero entry of $\xt$ was drawn uniformly at random between 2 and 6 independent of other entries and other times $t$.  In Figure \ref{fig:cor} the support of $\xt$ changes as assumed in Theorem \ref{thm1}.
For the ReProCS algorithm, we choose $K = 6$ and $\alpha = 800$.
So the support of $\xt$ changes by $\frac{s}{2} = 10$ indices every $\left\lfloor \frac{\alpha}{44} \right\rfloor = 18$ time instants.
When the support of $\xt$ reaches the bottom of the vector, it starts over again at the top.
This pattern can be seen in the bottom half of the figure which shows the sparsity pattern of the matrix $\bm{X} = [\bm{x}_1, \dots, \bm{x}_{t_{\max}}]$.

To form the low dimensional vectors $\lt$,  we started with an $n \times r$ matrix of i.i.d. Gaussian entries and orthonormalized the columns using Gram-Schmidt.
The first $r_0 = 10$ columns of this matrix formed $\bm{P}_{(0)}$, the next 2 columns formed $\bm{P}_{(1),\rmnew}$, and the last 2 columns formed $\bm{P}_{(2),\rmnew}$
We show two subspace changes which occur at $t_1 = 25$ and $t_2 = 5001$.
The entries of  $\bm{a}_{t,*}$ were drawn uniformly at random between -5 and 5,
and the entries of $\bm{a}_{t,\rmnew}$ were drawn uniformly at random between -.04 and .04 for the first $K\alpha$ frames after being introduced,
and between -5 and 5 afterwards.
Entries of $\bm{a}_t$ were independent of each other and of the other $\bm{a}_t$'s.

For this simulated data we compare the performance of ReProCS and PCP.
For the initial subspace estimate $\hat{\bm{P}}_{(0)}$, we used $\bm{P}_{(0)}$ plus some small Gaussian noise and then obtained orthonormal columns.
We set $\alpha = 800$ and $K = 6$, so  the  new directions' variance increases after $t_1 + K\alpha = 4825$ and $t_2 + K\alpha = 9801$.
At this point the error seen by the sparse recovery step $\bm{\Phi}_t\lt$ increases,
so the error in estimating $\xt$ also increases.
For the PCP algorithm,  we perform the optimization every $\alpha$ time instants using all of the data up to that point.  So the first time PCP is performed on $[ \bm{m}_1 , \dots , \bm{m}_{\alpha}]$ and the second time it is performed on $[ \bm{m}_1 , \dots , \bm{m}_{2\alpha}]$ and so on.

Figure \ref{fig:cor} illustrates the result we have proven.  That is ReProCS takes advantage of the initial subspace estimate and slow subspace change (including the bound on $\gamma_{\rmnew}$) to handle the case when the supports of $\xt$ are correlated in time.
Notice how the ReProCS error increases after a subspace change, but decays exponentially with each projection PCA step.
For this data, the PCP program fails to give a meaningful estimate for all but a few times.
Compare this to Figure \ref{fig:rand} where the only change in the data is that the support of $\bm{X}$ is chosen uniformly at random from all sets of size $\frac{s t_{\max}}{n}$ (as assumed in \cite{rpca}).  Thus the total sparsity of the matrix $\bm{X}$ is the same for both figures.  In Figure \ref{fig:rand}, ReProCS performs almost the same, while PCP does substantially better than in the case of correlated supports.

\begin{figure}[ht]
	\centering
	\includegraphics{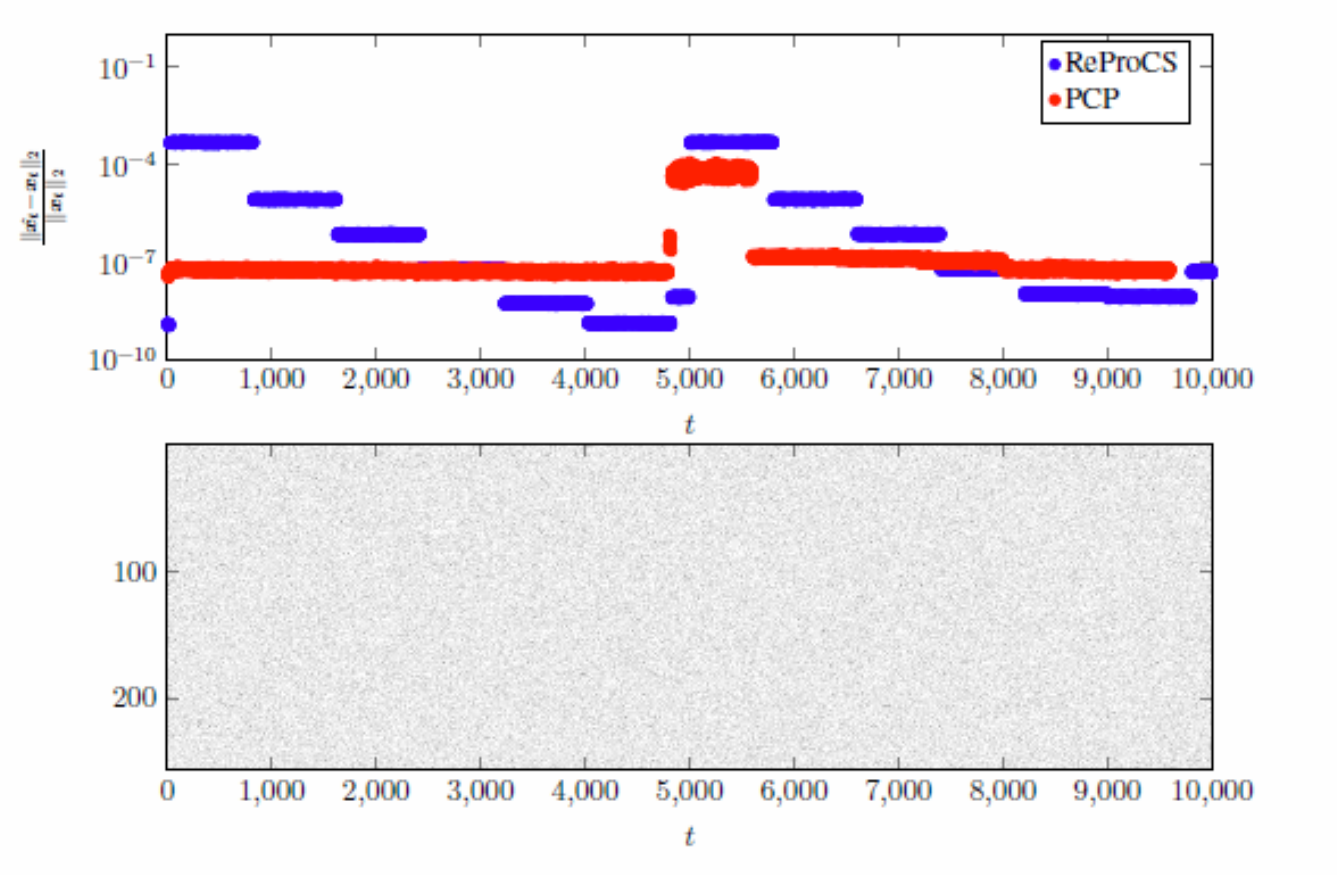}
	\caption{A comparison of ReProCS and PCP for {\bf uniformly random} support of $\bm{X}$ shown with the support pattern of $\bm{X}$.
  Results are averaged over 100 Monte Carlo trials.  The $y$ axis scale is logarithmic.
\label{fig:rand}}
\end{figure}

\section{Conclusions and Future Work}\label{conclusions}

In this paper we showed that with an accurate estimate of the initial subspace, denseness and slow subspace change assumptions on the vectors $\lt$, sparsity and support change assumptions on the vectors $\xt$, and other minor assumptions, with knowledge of model parameters $t_j$ and $c_{j,\rmnew}$, the online algorithm ReProCS will, with high probability, recover from $\mt = \xt + \lt$ the vectors $\xt$ and $\lt$ within a fixed, small tolerance.  Furthermore, the support of $\xt$ will be recovered exactly, and the subspace where the $\lt$'s lie will be accurately estimated within a finite delay of the the subspace change.  In future work we would like to extend our subspace change model to allow for the deletion of directions and relax the independence assumption on the $\lt$'s.  We also hope to explore the case where the support of $\xt$ is not exactly recovered and the case where the measurements $\mt$ also contain noise.

\appendices
\renewcommand{\thetheorem}{\thesection.\arabic{theorem}}


\section{Proof of Theorem \ref{thm1}}\label{thmpf}

\begin{definition}
Define the set $\check{\Gamma}_{j,k}$ as follows:
\begin{align*}
\check{\Gamma}_{j,k} &:= \{ X_{j,k} : \zeta_{j,k} \leq \zeta_{k}^+ \text{ and } \hat{\mathcal{T}}_t = \mathcal{T}_t  \text{ for all } t \in \mathcal{I}_{j,k} \} \\
\check{\Gamma}_{j,K+1} &:= \{ X_{j+1,0} :  \hat{\mathcal{T}}_t = \mathcal{T}_t  \text{ for all } t \in \mathcal{I}_{j,K+1} \}
\end{align*}

\end{definition}

\begin{definition}\label{Gamma_def}
Recursively define the sets $\Gamma_{j,k}$ as follows:
\begin{align*}
\Gamma_{1,0} &:= \{ X_{1,0}: \zeta_{1,*} \leq r_0 \zeta \ \text{and} \ \hat{\mathcal{T}}_t = \mathcal{T}_t \ \text{for all} \ t\in [t_{\mathrm{train}}+1: t_1 -1]\} \\
\Gamma_{j,0} &:=  \{X_{j,0}: \zeta_{j',*} \le \zeta_{j',*}^+ \ \text{for all} \ j' = 1, 2, \dots, j \ \text{and} \ \hat{\mathcal{T}}_t = \mathcal{T}_t \  \text{for all} \ t \le t_{j-1} \} \\
\Gamma_{j,k} &:=\Gamma_{j,k-1} \cap \check{\Gamma}_{j,k} \ k=1,2,\dots K+1
\end{align*}

\end{definition}


\begin{proof}[\textbf{Proof of Theorem \ref{thm1}}]

The theorem is a direct consequence of Lemmas \ref{expzeta}, \ref{cslem}, \ref{mainlem}, and Fact \ref{zetafact}.  
Observe that $\mathbb{P}(\Gamma_{j+1,0}^e|\Gamma_{j,0}^e) \ge \mathbb{P}(\check\Gamma_{j,1}^e, \dots \check\Gamma_{j,K+1}^e | \Gamma_{j,0}^e) = \prod_{k=1}^{K+1} \mathbb{P}(\check\Gamma_{j,k}^e|\Gamma_{j,k-1}^e)$. Also, since $\Gamma_{j+1,0} \subseteq \Gamma_{j,0}$, using the chain rule 
$\mathbb{P}(\Gamma_{J+1,0}^e | \Gamma_{1,0}^e) = \prod_{j=1}^{J}  \mathbb{P}(\Gamma_{j+1,0}^e | \Gamma_{j,0}^e)$.
Thus,
\[
\mathbb{P}(\Gamma_{J+1,0}^e | \Gamma_{1,0}^e) \ge \prod_{j=1}^{J}  \prod_{k=1}^{K+1} \mathbb{P}(\check\Gamma_{j,k}^e|\Gamma_{j,k-1}^e)
\]
Using Lemma \ref{mainlem}, we get
\[
\mathbb{P}(\Gamma_{J+1,0}^e| \Gamma_{1,0}) \geq {p}(\alpha,\zeta)^{KJ}.
\]
Also, $\mathbb{P}(\Gamma_{1,0}^e)=1$. This follows by the assumption on $\hat{\bm{P}}_0$ and Lemma \ref{cslem}. Thus, $\mathbb{P}(\Gamma_{J+1,0}^e) \geq {p}(\alpha,\zeta)^{KJ}$.

Using the lower bound on $\alpha$ assumed in the theorem, we get that
\[
\mathbb{P}(\Gamma_{J+1,0}^e) \geq {p}(\alpha,\zeta)^{KJ} \geq 1- n^{-10}.
\]

The event $\Gamma_{J+1,0}^e$ implies that $\hat{\mathcal{T}}_t = \mathcal{T}_t$ and $\bm{e}_t$ satisfies (\ref{etdef0}) for all $t < t_{\max}$. The definition of $\Gamma_{J+1,0}^e$  also implies that all the bounds on the subspace error hold. Using these, $\|\bm{a}_{t,\rmnew}\|_2 \le \sqrt{c} \gamma_{\rmnew}$ for $t \in [t_j, t_j + K\alpha]$, and $\|\bm{a}_t\|_2 \le \sqrt{r} \gamma$ for all $t$, $\Gamma_{J+1,0}^e$ implies that the bound on $\|\bm{e}_t\|_2$ holds.

Thus, all conclusions of the the result hold with probability at least $1- n^{-10}$.
\end{proof}

\section{Proof of Lemma \ref{expzeta} } \label{pfoflem1}

\begin{proof}
Recall that $\kappa_s^+ := 0.0215$, $\phi^+ := 1.2$, and $h^+ := \frac{1}{200}$, so we can make this substitution directly.
Notice that $\zeta_{j,k}^+$ is an increasing function of $\zeta_{j,*}^+$, and $ \zeta, c, f, g$.
Therefore we can use upper bounds on each of these quantities to get an upper bound on $\zeta_{j,k}^+$.
 From the bounds assumed on $\zeta$ in Theorem \ref{thm1} and $\zeta_{j,*}^+ := (r_0 + (j-1)c)\zeta$ we get
\begin{itemize}
\item $\zeta_{j,*}^+ \leq 10^{-4} $ and $c\zeta \leq 10^{-4}$
\item $\zeta_{j,*}^+ f \leq 1.5 \times 10^{-4}$
\item $\ds \frac{\zeta_{j,*}^+}{c\zeta} = \frac{(r_0+(j-1)c)\zeta}{c\zeta} \leq \frac{r_0 +(J-1)c}{c} = \frac{r}{c}\leq r$ (Without loss of generality we can assume that $c \geq 1$ because if $c=0$ then there is no subspace estimation problem to be solved. $c=0$ is the trivial case where all conclusions of Theorem \ref{thm1} will hold just using Lemma \ref{cslem}.)
\item $\zeta_{j,*}^+ f  \leq \zeta_{j,*}^+ f r \leq r^2 f \zeta \leq 1.5 \times 10^{-4}$
\end{itemize}
First we prove by induction that $\zeta_{j,k}^+ \leq \zeta_{j,k-1}^+  \leq 1$ for all $k\geq 1$.  Notice that $\zeta_{j,0}^+ =1$ by definition.
\begin{itemize}
\item Base cases: ($k=1$)  Using the above bounds we get that $\zeta_{j,1}^+ < 0.1 < 1 = \zeta_{j,0}^+$.
This also proves the first claim of the lemma for the $k=1$ case.

($k=2$) Using the above bounds and the bound on $\zeta_{j,1}^+$, we get that $\zeta_{j,2}^{+} \leq 0.06$.

\item For the induction step, assume that $\zeta_{j,k-1}^+ \leq \zeta_{j,k-2}^+$.  Then because $\zeta_{j,k}^+$ is increasing in $\zeta_{j,k-1}^+$ we get that $\zeta_{j,k}^+ \leq f_{\mathrm{inc}}(\zeta_{j,k-1}^+) \leq f_{\mathrm{inc}}(\zeta_{j,k-2}^+) = \zeta_{j,k-1}^+$.  Where $f_{\mathrm{inc}}$ represents the increasing function.
\end{itemize}
To prove the lemma we apply the above bounds including $\zeta_{k-1}^+ \leq 0.1$ and get that
\begin{align*}
\zeta_{j,k}^+  \leq \zeta_{k-1}^{+}(0.72)  + c\zeta(0.23) &= \zeta_0^{+} (0.72)^{k} + \sum_{i = 0}^{k-1}(0.72)^i(0.23) c\zeta \\
&\leq \zeta_0^{+} (0.72)^{k} + \sum_{i = 0}^{\infty}(0.72)^i(0.23) c\zeta \\
&\leq 0.72^k + 0.83 c\zeta
\end{align*}

\end{proof}

\section{Proof of Lemma \ref{cslem}}\label{pfcslem}

The proof of Lemma \ref{cslem} uses the subspace error bounds $\zeta_{*} \leq \zeta_{*}^{+}$ and $\zeta_{k-1} \leq \zeta_{k-1}^{+}$ that hold when $X_{j,k-1}\in\Gamma_{j,k-1}$ to obtain bounds on the restricted isometry constant of the sparse recovery matrix $\Phi_t$, and the sparse recovery error $\| \bm{b}_t \|_2$.  Applying the theorem in \cite{candes_rip} and the assumed bounds on $\zeta$ and $\gamma$, the result follows.

\begin{definition}
Define $\kappa_{s,*} := \kappa_s(\bm{P}_{(J)})$ and $\kappa_{s,\rmnew} := \max_{j} \kappa_s(\bm{P}_{(j),\rmnew})$.
\end{definition}
Recall that Theorem \ref{thm1} assumes that $\kappa_{2s,*} \leq 0.3$ and  $\kappa_{2s,\rmnew} \leq 0.02$.

\begin{lem} \cite[Lemma 2.10]{ReProCS_IT}\label{lemma0}\label{hatswitch}
Suppose that $\bm{P}$, $\Phat$ and $\bm{Q}$ are three basis matrices. Also, $\bm{P}$ and $\Phat$ are of the same size, $\bm{Q}' \bm{P} = 0$ and $\|(\I-\Phat \Phat{}' ) \bm{P} \|_2 = \zeta_*$. Then,
\begin{enumerate}
  \item $\|(I-\Phat\Phat{}')\bm{P}\bm{P}'\|_2 =\|( \I - \bm{P}\bm{P}' ) \Phat \Phat{}'\|_2 =  \|( \I - \bm{P} \bm{P}' ) \Phat \|_2 = \| ( \I - \Phat \Phat{}' ) \bm{P}\|_2 =  \zeta_*$
  \item $\|\bm{P} \bm{P}' - \Phat \Phat{}'\|_2 \leq 2 \|(\I-\Phat \Phat{}')\bm{P}\|_2 = 2 \zeta_*$
  \item $\|\Phat{}' \bm{Q}\|_2 \leq \zeta_*$ \label{lem_cross}
  \item $ \sqrt{1-\zeta_*^2} \leq \sigma_i((\I-\Phat \Phat{}')\bm{Q})\leq 1 $
\end{enumerate}
\end{lem}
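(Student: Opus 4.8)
The plan is to establish the four claims in order, the whole argument resting on one elementary observation: for any matrix $\bm{A}$ and any basis matrix $\bm{P}$ one has $\|\bm{A}\bm{P}\bm{P}'\|_2 = \|\bm{A}\bm{P}\|_2$, since $\|\bm{A}\bm{P}\bm{P}'\|_2^2 = \|\bm{A}\bm{P}\bm{P}'\bm{P}\bm{P}'\bm{A}'\|_2 = \|\bm{A}\bm{P}\bm{P}'\bm{A}'\|_2 = \|\bm{A}\bm{P}\|_2^2$ using $\bm{P}'\bm{P}=\I$, and symmetrically $\|\bm{P}\bm{P}'\bm{A}\|_2 = \|\bm{P}'\bm{A}\|_2$. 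With this, the two outer equalities of claim 1 are immediate ($\|(\I-\Phat\Phat{}')\bm{P}\bm{P}'\|_2 = \|(\I-\Phat\Phat{}')\bm{P}\|_2$ and $\|(\I-\bm{P}\bm{P}')\Phat\Phat{}'\|_2 = \|(\I-\bm{P}\bm{P}')\Phat\|_2$), so the real content of claim 1 is the middle equality $\|(\I-\Phat\Phat{}')\bm{P}\|_2 = \|(\I-\bm{P}\bm{P}')\Phat\|_2$. I would obtain this by writing the SVD $\Phat{}'\bm{P} = \bm{U}\bm{\Sigma}\bm{V}'$, with $\bm{\Sigma}$ square of size equal to the (common) number of columns of $\bm{P}$ and $\Phat$, and computing $\|(\I-\Phat\Phat{}')\bm{P}\|_2^2 = \|\bm{P}'(\I-\Phat\Phat{}')\bm{P}\|_2 = \|\I - \bm{V}\bm{\Sigma}^2\bm{V}'\|_2 = 1 - \sigma_{\min}^2(\bm{\Sigma})$; running the same computation with $\bm{P}$ and $\Phat$ interchanged gives the identical value because $\bm{P}'\Phat$ and $\Phat{}'\bm{P}$ have the same singular values. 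Hence all four quantities in claim 1 equal $\zeta_*$.

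For claim 2 I would use the decomposition $\bm{P}\bm{P}' - \Phat\Phat{}' = \bm{P}\bm{P}'(\I - \Phat\Phat{}') - (\I - \bm{P}\bm{P}')\Phat\Phat{}'$ and apply the triangle inequality: the first term is the transpose of $(\I-\Phat\Phat{}')\bm{P}\bm{P}'$ and the second is $(\I-\bm{P}\bm{P}')\Phat\Phat{}'$, each of norm $\zeta_*$ by claim 1, giving the bound $2\zeta_*$. For claim 3, the hypothesis $\bm{Q}'\bm{P} = 0$ means $\bm{Q} = (\I - \bm{P}\bm{P}')\bm{Q}$, so $\Phat{}'\bm{Q} = \Phat{}'(\I - \bm{P}\bm{P}')\bm{Q}$ and therefore $\|\Phat{}'\bm{Q}\|_2 \le \|\Phat{}'(\I - \bm{P}\bm{P}')\|_2 = \|(\I - \bm{P}\bm{P}')\Phat\|_2 = \zeta_*$ by claim 1.

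For claim 4, the upper bound $\sigma_i((\I - \Phat\Phat{}')\bm{Q}) \le 1$ is immediate since $\I - \Phat\Phat{}'$ is an orthogonal projection and $\bm{Q}$ is a basis matrix. For the lower bound, for any unit vector $\bm{v}$ I would expand $\|(\I - \Phat\Phat{}')\bm{Q}\bm{v}\|_2^2 = \|\bm{Q}\bm{v}\|_2^2 - \|\Phat{}'\bm{Q}\bm{v}\|_2^2 = 1 - \|\Phat{}'\bm{Q}\bm{v}\|_2^2 \ge 1 - \|\Phat{}'\bm{Q}\|_2^2 \ge 1 - \zeta_*^2$, using claim 3, so $\sigma_{\min} \ge \sqrt{1-\zeta_*^2}$. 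The only step that requires genuine care is the middle equality in claim 1 — that the two one-sided subspace distances coincide — which is exactly where the equal-dimension hypothesis on $\bm{P}$ and $\Phat$ is needed (otherwise only an inequality holds in general); every other step is routine bookkeeping with orthogonal projections.
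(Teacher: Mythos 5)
The paper does not prove this lemma; it is cited verbatim from \cite{ReProCS_IT} (Lemma 2.10), so there is no in-paper argument to compare your proposal against. Your blind proof is correct and self-contained: the reduction $\|\bm{A}\bm{P}\bm{P}'\|_2 = \|\bm{A}\bm{P}\|_2$ for a basis matrix $\bm{P}$ is valid and disposes of the outer equalities in claim 1; the SVD computation $\|\bm{P}'(\I-\Phat\Phat{}')\bm{P}\|_2 = 1-\sigma_{\min}^2(\Phat{}'\bm{P})$ correctly exploits the equal-column-count hypothesis (so that $\bm{V}$ is square unitary and $\Phat{}'\bm{P}$, $\bm{P}'\Phat$ share singular values) and is precisely where that hypothesis is needed; the algebraic decomposition $\bm{P}\bm{P}'-\Phat\Phat{}' = \bm{P}\bm{P}'(\I-\Phat\Phat{}') - (\I-\bm{P}\bm{P}')\Phat\Phat{}'$ for claim 2 checks out; and claims 3 and 4 follow cleanly from claim 1 together with $\bm{Q}=(\I-\bm{P}\bm{P}')\bm{Q}$ and the Pythagorean identity $\|(\I-\Phat\Phat{}')\bm{Q}\bm{v}\|_2^2 = 1 - \|\Phat{}'\bm{Q}\bm{v}\|_2^2$ for unit $\bm{v}$.
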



We begin by first bounding the RIC of the CS matrix $\bm{\Phi}_{(k)}.$

\begin{lem}[{Bounding the RIC of $\bm{\Phi}_{(k)}$ \cite[Lemma 6.6]{ReProCS_IT}}] \label{RIC_bnd}
Recall that $\zeta_*:= \|(\I-\Phat_*\Phat_*{}')\bm{P}_*\|_2$.  The following hold.
\begin{enumerate}
\item Suppose that a basis matrix $\bm{P}$ can be split as $\bm{P} = [\bm{P}_1, \bm{P}_2]$ where $\bm{P}_1$ and $\bm{P}_2$ are also basis matrices. Then $\kappa_s^2 (\bm{P}) = \max_{\mathcal{T}: |\mathcal{T}| \leq s} \|{\I_T}'\bm{P}\|_2^2 \le \kappa_s^2 (\bm{P}_1) + \kappa_s^2 (\bm{P}_2)$.
\item $\kappa_s^2(\Phat_*) \leq (\kappa_{s,*})^2 + 2\zeta_*$
\item $\kappa_s (\Phat_{\rmnew,k}) \leq \kappa_{s,\rmnew} + \zeta_k + \zeta_*$
\item $\delta_{s} (\bm{\Phi}_{(0)}) = \kappa_s^2 (\Phat_*) \leq  (\kappa_{s,*})^2 + 2 \zeta_*$
\item $\delta_{s}(\bm{\Phi}_{(k)})  = \kappa_s^2 ([\Phat_* \ \Phat_{\rmnew,k}]) \leq \kappa_s^2 (\Phat_*) + \kappa_s^2 (\Phat_{\rmnew,k}) \leq (\kappa_{s,*})^2 + 2\zeta_* + (\kappa_{s,\rmnew} +  \zeta_k + \zeta_*)^2$ for $k \ge 1$
\end{enumerate}
\end{lem}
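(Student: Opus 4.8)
The plan is to prove the five items in the order stated, reducing everything to three ingredients: the elementary row-block inequality $\|[\bm{A}\ \bm{B}]\|_2^2 = \|\bm{A}\bm{A}' + \bm{B}\bm{B}'\|_2 \le \|\bm{A}\|_2^2 + \|\bm{B}\|_2^2$, Lemma \ref{lemma0}, and Lemma \ref{kappadelta}. For item (1), I would fix $\mathcal{T}$ with $|\mathcal{T}|\le s$; then $\I_{\mathcal{T}}'\bm{P} = [\,\I_{\mathcal{T}}'\bm{P}_1\ \ \I_{\mathcal{T}}'\bm{P}_2\,]$, so the row-block inequality gives $\|\I_{\mathcal{T}}'\bm{P}\|_2^2 \le \|\I_{\mathcal{T}}'\bm{P}_1\|_2^2 + \|\I_{\mathcal{T}}'\bm{P}_2\|_2^2 \le \kappa_s^2(\bm{P}_1)+\kappa_s^2(\bm{P}_2)$, and maximizing over $\mathcal{T}$ finishes it.

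For item (2) I would first observe that $\Span(\bm{P}_*)=\Span(\bm{P}_{(j-1)})\subseteq\Span(\bm{P}_{(J)})$ by the subspace change model, hence $\bm{P}_*\bm{P}_*' \preceq \bm{P}_{(J)}\bm{P}_{(J)}'$ and therefore $\kappa_s(\bm{P}_*)\le\kappa_s(\bm{P}_{(J)})=\kappa_{s,*}$. Then, for $|\mathcal{T}|\le s$, I would write $\Phat_*\Phat_*' = \bm{P}_*\bm{P}_*' + (\Phat_*\Phat_*' - \bm{P}_*\bm{P}_*')$ and use Lemma \ref{lemma0}(2):
\[
\|\I_{\mathcal{T}}'\Phat_*\|_2^2 = \|\I_{\mathcal{T}}'\Phat_*\Phat_*'\I_{\mathcal{T}}\|_2 \le \|\I_{\mathcal{T}}'\bm{P}_*\bm{P}_*'\I_{\mathcal{T}}\|_2 + \|\Phat_*\Phat_*' - \bm{P}_*\bm{P}_*'\|_2 \le \kappa_s^2(\bm{P}_*) + 2\zeta_* \le (\kappa_{s,*})^2 + 2\zeta_*,
\]
and take the maximum over $\mathcal{T}$. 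Item (4) is then immediate: since $\bm{\Phi}_{(0)} = \I - \Phat_*\Phat_*'$ with $\Phat_*$ a basis matrix, Lemma \ref{kappadelta} gives $\delta_s(\bm{\Phi}_{(0)}) = \kappa_s^2(\Phat_*)$, and item (2) closes it.

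Item (3) is the step that needs care, and I expect it to be the main obstacle. I would split $\Phat_{\rmnew,k} = \bm{P}_{\rmnew}\bm{P}_{\rmnew}'\Phat_{\rmnew,k} + (\I - \bm{P}_{\rmnew}\bm{P}_{\rmnew}')\Phat_{\rmnew,k}$ (with $\bm{P}_{\rmnew} = \bm{P}_{(j),\rmnew}$), so that for $|\mathcal{T}|\le s$,
\[
\|\I_{\mathcal{T}}'\Phat_{\rmnew,k}\|_2 \le \|\I_{\mathcal{T}}'\bm{P}_{\rmnew}\|_2\,\|\bm{P}_{\rmnew}'\Phat_{\rmnew,k}\|_2 + \|(\I - \bm{P}_{\rmnew}\bm{P}_{\rmnew}')\Phat_{\rmnew,k}\|_2 \le \kappa_{s,\rmnew} + \|(\I - \bm{P}_{\rmnew}\bm{P}_{\rmnew}')\Phat_{\rmnew,k}\|_2 .
\]
To bound the last term I would first estimate, by adding and subtracting $\Phat_*\Phat_*'\bm{P}_{\rmnew}$, that $\|(\I - \Phat_{\rmnew,k}\Phat_{\rmnew,k}')\bm{P}_{\rmnew}\|_2 \le \|(\I - \Phat_*\Phat_*' - \Phat_{\rmnew,k}\Phat_{\rmnew,k}')\bm{P}_{\rmnew}\|_2 + \|\Phat_*'\bm{P}_{\rmnew}\|_2 \le \zeta_k + \zeta_*$, where the cross term uses Lemma \ref{lemma0}(3) with $\bm{Q} = \bm{P}_{\rmnew}$ (legitimate since $\bm{P}_*\perp\bm{P}_{\rmnew}$ in the subspace model), and then invoke the symmetry of principal angles, Lemma \ref{lemma0}(1), for the equal-size basis matrices $\bm{P}_{\rmnew}$ and $\Phat_{\rmnew,k}$ to conclude $\|(\I - \bm{P}_{\rmnew}\bm{P}_{\rmnew}')\Phat_{\rmnew,k}\|_2 = \|(\I - \Phat_{\rmnew,k}\Phat_{\rmnew,k}')\bm{P}_{\rmnew}\|_2 \le \zeta_k + \zeta_*$. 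Maximizing over $\mathcal{T}$ gives item (3). Finally, for item (5), $\Phat_*\perp\Phat_{\rmnew,k}$ makes $[\Phat_*\ \Phat_{\rmnew,k}]$ a basis matrix with $\bm{\Phi}_{(k)} = \I - [\Phat_*\ \Phat_{\rmnew,k}][\Phat_*\ \Phat_{\rmnew,k}]'$, so Lemma \ref{kappadelta} gives $\delta_s(\bm{\Phi}_{(k)}) = \kappa_s^2([\Phat_*\ \Phat_{\rmnew,k}])$; applying item (1) with $\bm{P}_1 = \Phat_*$, $\bm{P}_2 = \Phat_{\rmnew,k}$ and then items (2) and (3) yields the stated bound. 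The only genuinely subtle point is remembering, in item (3), that the component of $\Phat_{\rmnew,k}$ transverse to $\bm{P}_{\rmnew}$ has norm $\|(\I - \bm{P}_{\rmnew}\bm{P}_{\rmnew}')\Phat_{\rmnew,k}\|_2$ rather than $\|(\I - \Phat_{\rmnew,k}\Phat_{\rmnew,k}')\bm{P}_{\rmnew}\|_2$, and converting between the two via the symmetry of the sine of principal angles before pulling in the cross-term estimate; everything else is routine bookkeeping with the row-block inequality and Lemma \ref{kappadelta}.
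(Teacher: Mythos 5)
Your proposal is correct and follows essentially the same route as the paper: item (1) via the row-block inequality, item (2) by adding and subtracting $\bm{P}_*\bm{P}_*'$ and invoking Lemma \ref{lemma0}(2), item (3) by splitting $\Phat_{\rmnew,k}$ along $\bm{P}_{\rmnew}\bm{P}_{\rmnew}'$ and its complement, using the $\sin\theta$ symmetry from Lemma \ref{lemma0}(1) and the cross-term bound from Lemma \ref{lemma0}(3), and items (4)--(5) via Lemma \ref{kappadelta} and item (1). The one place you are more explicit than the paper is in justifying $\kappa_s(\bm{P}_*)\le\kappa_{s,*}$ through $\bm{P}_*\bm{P}_*'\preceq\bm{P}_{(J)}\bm{P}_{(J)}'$; the paper uses this implicitly.
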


\begin{proof}
\begin{enumerate}
\item Since $\bm{P}$ is a basis matrix, $\kappa_s^2 (\bm{P}) = \max_{|\mathcal{T}| \leq s} \|{\I_\mathcal{T}}' \bm{P}\|_2^2$. Also, $\|{\I_\mathcal{T}}' \bm{P}\|_2^2 = \|{\I_\mathcal{T}}' [\bm{P}_1, \bm{P}_2] [\bm{P}_1, \bm{P}_2]' \I_\mathcal{T} \|_2 =  \|{\I_\mathcal{T}}' (\bm{P}_1 {\bm{P}_1}' + \bm{P}_2 {\bm{P}_2}') \I_\mathcal{T} \|_2 \le \|{\I_\mathcal{T}}' \bm{P}_1 {\bm{P}_1}' \I_\mathcal{T}\|_2 + \|{\I_\mathcal{T}}' \bm{P}_2 {\bm{P}_2}' \I_\mathcal{T} \|_2$. Thus, the inequality follows.

\item For any set $\mathcal{T}$ with $|\mathcal{T}| \le s$, $\|{\I_\mathcal{T}}' \Phat_*\|_2^2  = \|{\I_\mathcal{T}}' \Phat_* \Phat_*{}' \I_\mathcal{T}\|_2 =\|{\I_\mathcal{T}}'( \Phat_* \Phat_*{}' - \bm{P}_* {\bm{P}_*}' + \bm{P}_* {\bm{P}_*}') \I_\mathcal{T}\|_2 \leq \|{\I_\mathcal{T}}'( \Phat_* \Phat_*{}' - \bm{P}_* {\bm{P}_*}' ) \I_\mathcal{T} \|_2 + \|{\I_\mathcal{T}}' \bm{P}_* {\bm{P}_*}' \I_\mathcal{T}\|_2 \leq 2\zeta_* + (\kappa_{s,*})^2$. The last inequality follows using Lemma \ref{lemma0} with $\bm{P} = \bm{P}_*$ and $\hat{\bm{P}} = \hat{\bm{P}}_*$.

\item By Lemma \ref{lemma0} with $\bm{P} = \bm{P}_*$, $\Phat = \Phat_*$ and $\bm{Q} = \bm{P}_\rmnew$, $\|{\bm{P}_{\rmnew}}' \Phat_* \|_2 \leq \zeta_*$.
By Lemma \ref{lemma0} with $\bm{P} = \bm{P}_\rmnew$ and $\hat{\bm{P}} = \hat{\bm{P}}_{\rmnew,k}$, $\|( \I - \bm{P}_\rmnew {\bm{P}_\rmnew}' ) \Phat_{\rmnew,k} \|_2  = \|( \I - \Phat_{\rmnew,k}\Phat_{\rmnew,k}{}') \bm{P}_{\rmnew}\|_2$.

For any set $\mathcal{T}$ with $|\mathcal{T}| \leq s$, $\|{\I_\mathcal{T}}' \Phat_{\rmnew,k}\|_2 \leq \|{\I_\mathcal{T}} '( \I - \bm{P}_{\rmnew} {\bm{P}_{\rmnew}}') \Phat_{\rmnew,k} \|_2 + \|{\I_\mathcal{T}}' \bm{P}_{\rmnew} {\bm{P}_{\rmnew}}' \Phat_{\rmnew,k}\|_2 \leq \|( \I - \bm{P}_{\rmnew}{\bm{P}_{\rmnew}}') \Phat_{\rmnew,k} \|_2 + \|{\I_\mathcal{T}}' \bm{P}_{\rmnew}\|_2 =  \|(\I - \Phat_{\rmnew,k}\Phat_{\rmnew,k}{}') \bm{P}_{\rmnew}\|_2 + \|{\I_\mathcal{T}}' \bm{P}_{\rmnew}\|_2 \leq  \|\bm{D}_{\rmnew,k}\|_2 + \| \Phat_* \Phat_*{}' \bm{P}_{\rmnew}\|_2 + \|{\I_\mathcal{T}}' \bm{P}_{\rmnew}\|_2 $. Taking $\max$ over $|\mathcal{T}| \le s$ the claim follows.

\item This follows using Lemma \ref{kappadelta} and the second claim of this lemma.

\item This follows using Lemma \ref{kappadelta} and the first three claims of this lemma.
\end{enumerate}

\end{proof}

\begin{corollary}\label{RICnumbnd}
If the conditions of Theorem \ref{thm1} are satisfied, and $X_{j,k-1}\in \Gamma_{j,k-1}$,  then
\begin{enumerate}
\item $\delta_s(\bm{\Phi}_{(0)}) \leq \delta_{2s}(\bm{\Phi}_{(0)})  \leq (\kappa_{2s,*})^2 + 2\zeta_*^+ <0.1 < 0.1479$
\item $\delta_s(\bm{\Phi}_{(k-1)}) \leq \delta_{2s}(\bm{\Phi}_{(k-1)}) \leq (\kappa_{2s,*})^2 + 2\zeta_*^+ + (\kappa_{2s,\rmnew} +  \zeta_{k-1}^+ + \zeta_*^+)^2 < 0.1479$
\item $\|  [ ({\bm{\Phi}_{(k-1)})_{\mathcal{T}_t}}'(\bm{\Phi}_{(k-1)})_{\mathcal{T}_t}]^{-1} \|_2 \le \frac{1}{1-\delta_s(\bm{\Phi}_{(k-1)})} < 1.2 := \phi^+$
\end{enumerate}
\end{corollary}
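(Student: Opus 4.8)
The plan is to invoke Lemma \ref{RIC_bnd} and then substitute the numerical bounds assumed in Theorem \ref{thm1} together with the high-probability bounds on $\zeta_*^+$ and $\zeta_{k-1}^+$. First I would record two elementary observations. Since every $s$-sparse vector is also $2s$-sparse, $\delta_s(\bm{A}) \le \delta_{2s}(\bm{A})$ for any matrix $\bm{A}$, which supplies the first inequality in claims~1 and~2. Next, the conditioning $X_{j,k-1} \in \Gamma_{j,k-1}$, via Definition \ref{Gamma_def}, forces $\zeta_{j',*} \le \zeta_{j',*}^+$ for all $j' \le j$ and $\zeta_{j,k'} \le \zeta_{j,k'}^+$ for all $k' \le k-1$; in particular $\zeta_* \le \zeta_*^+$ and $\zeta_{k-1} \le \zeta_{k-1}^+$, which is exactly what lets us replace the error quantities in parts~4 and~5 of Lemma \ref{RIC_bnd} by their ``$+$'' upper bounds.

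For claim~1, I would apply part~4 of Lemma \ref{RIC_bnd} with $s$ replaced by $2s$ to get $\delta_{2s}(\bm{\Phi}_{(0)}) \le (\kappa_{2s,*})^2 + 2\zeta_*^+$, then plug in $\kappa_{2s,*} = \kappa_{2s}(\bm{P}_{(J)}) \le 0.3$ and $\zeta_*^+ = (r_0 + (j-1)c)\zeta \le r\zeta \le 10^{-4}$ (the last step uses $\zeta \le 10^{-4}/r^2$ from Theorem \ref{thm1} and $r \ge 1$), obtaining $\delta_{2s}(\bm{\Phi}_{(0)}) \le 0.09 + 2\cdot 10^{-4} < 0.1 < 0.1479$. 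For claim~2 with $k \ge 2$, part~5 of Lemma \ref{RIC_bnd} gives $\delta_{2s}(\bm{\Phi}_{(k-1)}) \le (\kappa_{2s,*})^2 + 2\zeta_*^+ + (\kappa_{2s,\rmnew} + \zeta_{k-1}^+ + \zeta_*^+)^2$; using $\kappa_{2s,\rmnew} \le 0.02$, the bound $\zeta_{k-1}^+ \le 0.1$ from Lemma \ref{expzeta} (valid for $k-1 \ge 1$), and $\zeta_*^+ \le 10^{-4}$, this is at most $0.09 + 2\cdot 10^{-4} + (0.12 + 10^{-4})^2 < 0.1479$. For claim~3 I would invoke the standard consequence of the RIC: for $|\mathcal{T}_t| \le s$ the matrix $(\bm{\Phi}_{(k-1)})_{\mathcal{T}_t}{}'(\bm{\Phi}_{(k-1)})_{\mathcal{T}_t}$ has all eigenvalues in $[\,1 - \delta_s(\bm{\Phi}_{(k-1)}),\; 1 + \delta_s(\bm{\Phi}_{(k-1)})\,]$, so its inverse has spectral norm at most $\frac{1}{1 - \delta_s(\bm{\Phi}_{(k-1)})} \le \frac{1}{1 - 0.1479} < 1.2 = \phi^+$.

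The only delicate point, and what I would flag as the main obstacle, is the $k = 1$ case of claim~2: there $\zeta_{k-1}^+ = \zeta_0^+ = 1$, so the displayed expression is not directly below $0.1479$. This is resolved by noting that $\Phat_{\rmnew,0}$ is the empty matrix (Definition \ref{defn_Phi}), hence $\bm{\Phi}_{(0)} = \bm{\Phi}_{(k-1)}$ when $k = 1$ and claim~1 already yields $\delta_{2s}(\bm{\Phi}_{(0)}) < 0.1 < 0.1479$; equivalently, for $k = 1$ one applies part~4 rather than part~5 of Lemma \ref{RIC_bnd}. Everything else is routine arithmetic, so the real care needed is merely tracking that the chain $\zeta_*^+ \le r\zeta \le 10^{-4}$ invokes the correct one of the three bounds on $\zeta$ in Theorem \ref{thm1}, and that Lemma \ref{expzeta} is what licenses $\zeta_{k-1}^+ \le 0.1$ in the $k \ge 2$ case.
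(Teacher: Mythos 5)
Your proposal is correct and follows the same route as the paper, which simply cites Lemma \ref{RIC_bnd}, the definition of $\Gamma_{j,k-1}$, and the bound $\zeta_{k-1}^+ \le 0.1$ from Lemma \ref{expzeta}; you have merely unpacked the arithmetic. The one wrinkle you flag, the $k=1$ case where $\zeta_0^+ = 1$, is handled correctly by observing that $\bm{\Phi}_{(0)}$ has no $\Phat_{\rmnew,0}$ component so claim~1 already applies.
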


\begin{proof}
This follows using Lemma \ref{RIC_bnd}, the definition of $\Gamma_{j,k-1}$, and the bound on $\zeta_{k-1}^+$ from Lemma \ref{expzeta}.
\end{proof}

The following are straightforward bounds that will be useful for the proof of Lemma \ref{cslem}.

\begin{fact}\label{constants}
Under the assumptions of Theorem \ref{thm1}:
\begin{itemize}
\item $ \zeta_{j,*}^+ \gamma \le \frac{\sqrt{\zeta}}{\sqrt{r_0 + (J-1)c}} \le \sqrt{\zeta}$
\item $\zeta_{k-1}^+ \leq 0.72^{k-1} + 0.83 c\zeta$ (from Lemma \ref{expzeta})
\item $\zeta_{k-1}^+ \gamma_{\rmnew} \leq 0.72^{k-1} \gamma_{\rmnew} + 0.83c\zeta \gamma_{\rmnew} \leq 0.72^{k-1}\gamma_{\rmnew} + 0.83\sqrt{\zeta} $

\end{itemize}
\end{fact}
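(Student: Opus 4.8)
The plan is to verify the three displayed inequalities one at a time, drawing only on the constraints on $\zeta$ imposed in Theorem \ref{thm1}, the definition $\zeta_{j,*}^+ := (r_0 + (j-1)c)\zeta$, the conclusion of Lemma \ref{expzeta}, and a few elementary observations: that $r_0 + (j-1)c \le r_0 + (J-1)c \le r$ for $j \le J$ (exactly the bound used inside the proof of Lemma \ref{expzeta}), that $c \le r$, and that $\gamma_{\rmnew} \le \gamma$ since $\atnew$ is a subvector of $\at$ in the model on $\lt$.

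For the first bound, I would start from $\zeta_{j,*}^+ \gamma = (r_0 + (j-1)c)\zeta\gamma \le (r_0 + (J-1)c)\zeta\gamma$ and then invoke the third constraint $\zeta \le \frac{1}{r^3\gamma^2}$, which rearranges to $\gamma \le \frac{1}{r^{3/2}\sqrt{\zeta}} \le \frac{1}{(r_0+(J-1)c)^{3/2}\sqrt{\zeta}}$ (using $r_0 + (J-1)c \le r$). Substituting this in collapses the right-hand side to $\frac{\sqrt{\zeta}}{\sqrt{r_0 + (J-1)c}}$, and the final bound by $\sqrt{\zeta}$ is immediate because $r_0 + (J-1)c \ge 1$.

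The second bound is just Lemma \ref{expzeta} read at index $k-1$, namely $\zeta_{k-1}^+ \le 0.72^{k-1} + 0.83 c\zeta$; the edge case $k=1$ holds separately since $\zeta_{j,0}^+ = 1 \le 1 + 0.83 c\zeta$. For the third bound I would multiply the second through by $\gamma_{\rmnew}$ to get $\zeta_{k-1}^+\gamma_{\rmnew} \le 0.72^{k-1}\gamma_{\rmnew} + 0.83 c\zeta\gamma_{\rmnew}$, after which it remains only to check $c\zeta\gamma_{\rmnew} \le \sqrt{\zeta}$. Combining $\gamma_{\rmnew} \le \gamma \le \frac{1}{r^{3/2}\sqrt{\zeta}}$ with $c \le r \le r^{3/2}$ gives $c\zeta\gamma_{\rmnew} \le c\zeta \cdot \frac{1}{c\sqrt{\zeta}} = \sqrt{\zeta}$, which completes the argument.

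There is no genuine obstacle here: the statement is a bookkeeping fact that repackages consequences of the $\zeta$-bounds for later use in the proof of Lemma \ref{cslem}. The only thing needing a moment of care is tracking which of the three $\zeta$-constraints is being invoked — it is only $\zeta \le 1/(r^3\gamma^2)$ — and confirming the innocuous rank inequalities $r_0 + (J-1)c \le r$ and $c \le r$ that let one pass between $r$ and the various rank quantities; once those are in hand, every line is a direct substitution.
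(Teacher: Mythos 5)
Your proof is correct and matches the style of manipulation the paper itself uses for these quantities (e.g.\ in the proof of Lemma \ref{expzeta}): bound $r_0+(j-1)c$ by $r_0+(J-1)c\le r$, use only the third $\zeta$-constraint $\zeta\le 1/(r^3\gamma^2)$, and observe $\gamma_{\rmnew}\le\gamma$ since $\atnew$ is a subvector of $\at$. The one thing worth flagging is that $r_0+(J-1)c\le r$ is, strictly speaking, an unstated convention the paper relies on as well (it writes $\tfrac{r_0+(J-1)c}{c}=\tfrac{r}{c}$ in the proof of Lemma \ref{expzeta}), so you are inheriting rather than introducing that step; given that, every inequality you write checks out.
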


\begin{proof}[Proof of Lemma \ref{cslem}]
Recall that $X_{j,k-1} \in \Gamma_{j,k-1}$ implies that $\zeta_{j,*} \leq \zeta_{j,*}^+$ and $\zeta_{k-1}\leq \zeta_{k-1}^+$.
\begin{enumerate}
\item
\begin{enumerate}
\item For $t \in \mathcal{I}_{j,k}$, $\bm{b}_t := ( \I - \Phat_{t-1} \Phat_{t-1} {}') \lt = \bm{D}_{*,k-1} \bm{a}_{t,*} + \bm{D}_{\rmnew,k-1} \bm{a}_{t,\rmnew} $. Thus, using Fact \ref{constants}
\begin{align*}
\|\bm{b}_t\|_2 & \leq \zeta_{j,*} \sqrt{r} \gamma + \zeta_{k-1} \sqrt{c} \gamma_{\rmnew} \\
& \leq \sqrt{\zeta}\sqrt{r} + (0.72^{k-1}\gamma_{\rmnew} + .83\sqrt{\zeta})\sqrt{c} \\
& = \sqrt{c} 0.72^{k-1} \gamma_{\rmnew} + \sqrt{\zeta} (\sqrt{r} + 0.83\sqrt{c}) \leq \xi.
\end{align*}

\item By Corollary \ref{RICnumbnd}, $\delta_{2s} (\bm{\Phi}_{(k-1)}) < 0.15< \sqrt{2}-1$. Given $|\mathcal{T}_t| \leq s$, $\|\bm{b}_t\|_2 \leq \xi$, by the theorem in \cite{candes_rip}, the CS error satisfies
\[
\|\hat{\bm{x}}_{t,\cs} - \xt \|_2 \leq  \frac{4\sqrt{1+\delta_{2s}(\bm{\Phi}_{(k-1)})}}{1-(\sqrt{2}+1)\delta_{2s}(\bm{\Phi}_{(k-1)})} \xi < 7 \xi.
\]

\item Using the above, $\|\hat{\bm{x}}_{t,\cs} - \xt\|_{\infty} \leq 7  \xi$. Since $\min_{i\in \mathcal{T}_t} |(\xt)_{i}| \geq x_{\min}$ and $(\xt)_{\mathcal{T}_t^c} = 0$, $\min_{i\in \mathcal{T}_t} |(\hat{\bm{x}}_{t,\cs})_i| \geq x_{\min} - 7 \xi$ and $\min_{i \in \mathcal{T}_t^c} |(\hat{\bm{x}}_{t,\cs})_i| \leq 7 \xi$. If $\omega < x_{\min} - 7 \xi$, then $\hat{\mathcal{T}}_t \supseteq \mathcal{T}_t$. On the other hand, if $\omega > 7 \xi$, then $\hat{\mathcal{T}}_t \subseteq \mathcal{T}_t$. Since $\omega$ satisfies $7 \xi \leq \omega \leq x_{\min} -7 \xi$, the support of $\xt$ is exactly recovered, i.e. $\hat{\mathcal{T}}_t = \mathcal{T}_t$.

\item Given $\hat{\mathcal{T}}_t = \mathcal{T}_t$, the least squares estimate of $\xt$ satisfies $(\hat{\bm{x}}_t)_{\mathcal{T}_t} = [(\bm{\Phi}_{(k-1)})_{\mathcal{T}_t}]^{\dag} \bm{y}_t = [ (\bm{\Phi}_{(k-1)})_{\mathcal{T}_t}]^{\dag} (\bm{\Phi}_{(k-1)} \xt + \bm{\Phi}_{(k-1)} \lt)$ and $(\hat{\bm{x}}_t)_{\mathcal{T}_t^c} = 0$ for $t \in \mathcal{I}_{j,k}$. 
Also,  ${(\bm{\Phi}_{(k-1)})_{\mathcal{T}_t}}' \bm{\Phi}_{(k-1)} = {\I_{\mathcal{T}_t}}' \bm{\Phi}_{(k-1)}$ (this follows since $(\bm{\Phi}_{(k-1)})_{\mathcal{T}_t} = \bm{\Phi}_{(k-1)} \I_{\mathcal{T}_t}$ and ${\bm{\Phi}_{(k-1)}}'\bm{\Phi}_{(k-1)} = \bm{\Phi}_{(k-1)}$).  
Using this, the LS error $\et := \hat{\bm{x}}_t - \xt$ satisfies (\ref{etdef0}).
Thus, using Fact \ref{constants} and the bounds on $\|\bm{a}_t\|_{\infty}$ and $\|\bm{a}_{t,\rmnew}\|_{\infty}$,
\begin{align*}
\|\bm{e}_t\|_2 &\leq
\begin{cases}
 \phi^+ (\zeta_{j,*}^+ \sqrt{r}\gamma +  \zeta_{j,k-1}^+ \sqrt{c}\gamma_{\rmnew}) & t \in [t_j,t_j +d]\\
\phi^+ (\zeta_{j,*}^+ \sqrt{r}\gamma +  c\zeta\sqrt{c}\gamma) & t \in (t_j +d,t_{j+1})
\end{cases}\\
 &\le
\begin{cases}
1.2 \left(1.83\sqrt{\zeta} +  (0.72)^{k-1}\sqrt{c}\gamma_{\rmnew}  \right) &  t \in [t_j,t_j +d] \\
1.2\left( 2\sqrt{\zeta} \right) & t \in (t_j + d, t_{j+1})
\end{cases}
\end{align*}

\end{enumerate}

\item The second claim is just a restatement of the first.

\end{enumerate}

\end{proof}

\section{Cauchy-Schwarz inequality}

\begin{lem}[Cauchy-Schwarz for a sum of vectors]\label{CSsum}
For vectors $\bm{x}_t$ and $\bm{y}_t$,
\[
\left(\sum_{t=1}^{\alpha} {\bm{x}_t}'\bm{y}_t\right) \leq \left( \sum_t \|\bm{x}_t\|_2^2 \right) \left( \sum_t \|\bm{y}_t\|_2^2 \right)
\]
\end{lem}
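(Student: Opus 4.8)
The plan is to reduce the claim to the ordinary Cauchy--Schwarz inequality for two vectors in a single Euclidean space, via a stacking argument. First I would form the long vectors obtained by concatenating the summands: let $\bm{X} := [{\bm{x}_1}', {\bm{x}_2}', \dots, {\bm{x}_{\alpha}}']'$ and $\bm{Y} := [{\bm{y}_1}', {\bm{y}_2}', \dots, {\bm{y}_{\alpha}}']'$. This is well defined because for each $t$ the product ${\bm{x}_t}'\bm{y}_t$ is assumed to make sense, so $\bm{x}_t$ and $\bm{y}_t$ have the same length and the two concatenations have the same total length.

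Next I would record the three elementary identities that follow immediately from this block structure: $\bm{X}'\bm{Y} = \sum_{t=1}^{\alpha} {\bm{x}_t}'\bm{y}_t$, $\|\bm{X}\|_2^2 = \sum_{t=1}^{\alpha}\|\bm{x}_t\|_2^2$, and $\|\bm{Y}\|_2^2 = \sum_{t=1}^{\alpha}\|\bm{y}_t\|_2^2$. Then I would apply the standard Cauchy--Schwarz inequality $|\bm{X}'\bm{Y}| \le \|\bm{X}\|_2 \, \|\bm{Y}\|_2$ and square both sides; substituting the three identities yields $\big(\sum_t {\bm{x}_t}'\bm{y}_t\big)^2 \le \big(\sum_t \|\bm{x}_t\|_2^2\big)\big(\sum_t \|\bm{y}_t\|_2^2\big)$, which is the asserted bound (read with the left side squared, as is the convention when this lemma is invoked, e.g.\ in the bound on $b_{4,k}$).

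There is essentially no obstacle here: the entire content is the observation that a finite list of inner products and squared norms of vectors is itself an inner product and a pair of squared norms of longer vectors, after which the one-vector Cauchy--Schwarz inequality applies verbatim. The only point worth a sentence of care is the bookkeeping that the concatenation is legitimate (equal lengths of $\bm{x}_t$ and $\bm{y}_t$ for each fixed $t$); once that is noted, the proof is three lines.
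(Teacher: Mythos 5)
Your proposal is correct and is essentially identical to the paper's own argument: the paper likewise forms the stacked vectors $[\bm{x}_1', \dots, \bm{x}_\alpha']'$ and $[\bm{y}_1', \dots, \bm{y}_\alpha']'$, applies the ordinary one-vector Cauchy--Schwarz inequality, and squares. You also correctly observe that the displayed left-hand side should be read as squared (a typo in the lemma statement), as both your derivation and the paper's proof establish the squared bound.
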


\begin{proof}
\begin{align*}
\left(\sum_{t=1}^{\alpha} {\bm{x}_t}'\bm{y}_t\right)^2 = \left( [ {\bm{x}_1}', \dots, {\bm{x}_{\alpha}}'] \left[\begin{array}{c}
\bm{y}_{1}\\ \vdots \\ \bm{y}_{\alpha} 
\end{array} \right] \right)^2 \leq \left\|\left[\begin{array}{c}
\bm{x}_{1}\\ \vdots \\ \bm{x}_{\alpha} 
\end{array} \right] \right\|_2^2 \left\| \left[\begin{array}{c}
\bm{y}_{1}\\ \vdots \\ \bm{y}_{\alpha} 
\end{array} \right] \right\|_2^2 = \left(\sum_{t=1}^{\alpha}\|\bm{x}_t\|_2^2\right)\left(\sum_{t=1}^{\alpha}\|\bm{y}_t\|_2^2\right)
\end{align*}
The inequality is by Cauchy-Schwarz for a single vector.
\end{proof}

\begin{lem}[Cauchy-Schwarz for a sum of matrices]\label{CSmat}
For matrices $\bm{X}_t$ and $\bm{Y}_t$,
\[
\left\|\frac{1}{\alpha} \sum_{t=1}^{\alpha} \bm{X}_t {\bm{Y}_t}'\right\|_2^2 \leq \lambda_{\max}\left(\frac{1}{\alpha} \sum_{t=1}^{\alpha} \bm{X}_t {\bm{X}_t}'\right)
\lambda_{\max}\left(\frac{1}{\alpha} \sum_{t=1}^{\alpha} \bm{Y}_t {\bm{Y}_t}'\right)
\]
\end{lem}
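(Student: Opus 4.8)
The final statement to prove is Lemma~\ref{CSmat}, the Cauchy-Schwarz inequality for a sum of matrices. The plan is to reduce the matrix statement to the vector statement of Lemma~\ref{CSsum}, which has just been proved. First I would observe that the operator norm $\|\bm{A}\|_2$ of any matrix $\bm{A}$ equals $\sup_{\|\bm{u}\|_2 = \|\bm{v}\|_2 = 1} \bm{u}'\bm{A}\bm{v}$, so it suffices to fix arbitrary unit vectors $\bm{u}$ and $\bm{v}$ and bound $\bm{u}'\left(\frac{1}{\alpha}\sum_t \bm{X}_t\bm{Y}_t'\right)\bm{v}$. Writing this out, $\bm{u}'\left(\frac{1}{\alpha}\sum_t \bm{X}_t\bm{Y}_t'\right)\bm{v} = \frac{1}{\alpha}\sum_t (\bm{X}_t'\bm{u})'(\bm{Y}_t'\bm{v})$, which is exactly a sum of the form $\sum_t \bm{x}_t'\bm{y}_t$ with $\bm{x}_t := \frac{1}{\sqrt{\alpha}}\bm{X}_t'\bm{u}$ and $\bm{y}_t := \frac{1}{\sqrt{\alpha}}\bm{Y}_t'\bm{v}$.

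Next I would apply Lemma~\ref{CSsum} (being slightly careful that as stated it is missing a square on the left side, but the intended and correct inequality squares the left-hand sum). This gives
\[
\left(\frac{1}{\alpha}\sum_t (\bm{X}_t'\bm{u})'(\bm{Y}_t'\bm{v})\right)^2 \leq \left(\frac{1}{\alpha}\sum_t \|\bm{X}_t'\bm{u}\|_2^2\right)\left(\frac{1}{\alpha}\sum_t \|\bm{Y}_t'\bm{v}\|_2^2\right).
\]
Then I would rewrite each factor: $\frac{1}{\alpha}\sum_t \|\bm{X}_t'\bm{u}\|_2^2 = \frac{1}{\alpha}\sum_t \bm{u}'\bm{X}_t\bm{X}_t'\bm{u} = \bm{u}'\left(\frac{1}{\alpha}\sum_t \bm{X}_t\bm{X}_t'\right)\bm{u} \leq \lambda_{\max}\left(\frac{1}{\alpha}\sum_t \bm{X}_t\bm{X}_t'\right)$ since $\bm{u}$ is a unit vector and $\frac{1}{\alpha}\sum_t \bm{X}_t\bm{X}_t'$ is positive semidefinite; similarly for the $\bm{Y}_t$ factor with $\bm{v}$. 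Combining and then taking the supremum over unit $\bm{u}, \bm{v}$ on the left gives $\left\|\frac{1}{\alpha}\sum_t \bm{X}_t\bm{Y}_t'\right\|_2^2 \leq \lambda_{\max}\left(\frac{1}{\alpha}\sum_t \bm{X}_t\bm{X}_t'\right)\lambda_{\max}\left(\frac{1}{\alpha}\sum_t \bm{Y}_t\bm{Y}_t'\right)$, as claimed.

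There is no real obstacle here; this is a routine reduction. The only point requiring a small amount of care is the dimension bookkeeping — $\bm{X}_t$ and $\bm{Y}_t$ must have the same number of columns for $\bm{X}_t\bm{Y}_t'$ to make sense, and $\bm{u}$ lives in the row space dimension of $\bm{X}_t$, $\bm{v}$ in that of $\bm{Y}_t$ — and noting that the left side of Lemma~\ref{CSsum} as typeset should be squared, which is clear from its own proof (the displayed chain there starts from $\left(\sum_t \bm{x}_t'\bm{y}_t\right)^2$). I would state these minor points explicitly but not belabor them.
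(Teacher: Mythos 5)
Your proof is correct and mirrors the paper's argument step for step: express the squared operator norm as a supremum of a squared bilinear form over unit vectors, split it into a sum of inner products $(\bm{X}_t'\bm{u})'(\bm{Y}_t'\bm{v})$, invoke Lemma~\ref{CSsum}, and recognize the two resulting quadratic forms as bounded by the respective $\lambda_{\max}$ quantities. The only difference is cosmetic (carrying the $1/\alpha$ inside from the start rather than multiplying by $(1/\alpha)^2$ at the end), and your side remark about the missing square in the statement of Lemma~\ref{CSsum} is a genuine typo correctly diagnosed from its own proof.
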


\begin{proof}
\begin{align*}
\left\| \sum_{t=1}^{\alpha} \bm{X}_t {\bm{Y}_t}'\right\|_2^2 &=
\max_{\substack{\|\bm{x}\|=1\\\|\bm{y}\|=1}} \left| \bm{x}'\left(\sum_{t}\bm{X}_t {\bm{Y}_t}'\right)\bm{y} \right|^2 \\
&= \max_{\substack{\|\bm{x}\|=1\\\|\bm{y}\|=1}} \left| \sum_{t=1}^{\alpha} ({\bm{X}_t}'\bm{x})'({\bm{Y}_t}'\bm{y})  \right|^2 \\
&\leq \max_{\substack{\|\bm{x}\|=1\\\|\bm{y}\|=1}} \left( \sum_{t=1}^{\alpha} \left\| {\bm{X}_t}'\bm{x} \right\|_2^2\right) \left( \sum_{t=1}^{\alpha} \left\| {\bm{Y}_t}'\bm{y} \right\|_2^2\right) \\
&= \max_{\|\bm{x}\|=1}  \bm{x}' \sum_{t=1}^{\alpha} \bm{X}_t{\bm{X}_t}' \ \bm{x}  \ \cdot \ \max_{\|\bm{y}\|=1} \bm{y}' \sum_{t=1}^{\alpha} \bm{Y}_t{\bm{Y}_t}' \ \bm{y}\\
&= \lambda_{\max}\left( \sum_{t=1}^{\alpha} \bm{X}_t{\bm{X}_t}' \right)\lambda_{\max}\left( \sum_{t=1}^{\alpha} \bm{Y}_t{\bm{Y}_t}' \right)
\end{align*}
The inequality is by Lemma \ref{CSsum}. The penultimate line is because $\|\bm{x}\|_2^2 = {\bm{x}'\bm{x}}$.
Multiplying both sides by $\left(\frac{1}{\alpha}\right)^2$ gives the desired result.
\end{proof}

\bibliographystyle{IEEEtran}
\bibliography{tipnewpfmtNIPS}

\begin{thebibliography}{10}
\providecommand{\url}[1]{#1}
\csname url@samestyle\endcsname
\providecommand{\newblock}{\relax}
\providecommand{\bibinfo}[2]{#2}
\providecommand{\BIBentrySTDinterwordspacing}{\spaceskip=0pt\relax}
\providecommand{\BIBentryALTinterwordstretchfactor}{4}
\providecommand{\BIBentryALTinterwordspacing}{\spaceskip=\fontdimen2\font plus
\BIBentryALTinterwordstretchfactor\fontdimen3\font minus
  \fontdimen4\font\relax}
\providecommand{\BIBforeignlanguage}[2]{{%
\expandafter\ifx\csname l@#1\endcsname\relax
\typeout{** WARNING: IEEEtran.bst: No hyphenation pattern has been}%
\typeout{** loaded for the language `#1'. Using the pattern for}%
\typeout{** the default language instead.}%
\else
\language=\csname l@#1\endcsname
\fi
#2}}
\providecommand{\BIBdecl}{\relax}
\BIBdecl

\bibitem{error_correction_PCP_l1}
J.~Wright and Y.~Ma, ``Dense error correction via l1-minimization,'' \emph{IEEE
  Trans. on Info. Th.}, vol.~56, no.~7, pp. 3540--3560, 2010.

\bibitem{rpca}
E.~J. Cand{\`e}s, X.~Li, Y.~Ma, and J.~Wright, ``Robust principal component
  analysis?'' \emph{Journal of ACM}, vol.~58, no.~3, 2011.

\bibitem{rpca2}
V.~Chandrasekaran, S.~Sanghavi, P.~A. Parrilo, and A.~S. Willsky,
  ``Rank-sparsity incoherence for matrix decomposition,'' \emph{SIAM Journal on
  Optimization}, vol.~21, 2011.

\bibitem{chen2011low}
Y.~Chen, A.~Jalali, S.~Sanghavi, and C.~Caramanis, ``Low-rank matrix recovery
  from errors and erasures,'' in \emph{Information Theory Proceedings (ISIT),
  2011 IEEE International Symposium on}.\hskip 1em plus 0.5em minus 0.4em\relax
  IEEE, 2011, pp. 2313--2317.

\bibitem{outlier_pursuit}
H.~Xu, C.~Caramanis, and S.~Sanghavi, ``Robust pca via outlier pursuit,''
  \emph{IEEE Tran. on Information Theorey}, vol.~58, no.~5, 2012.

\bibitem{rpca_tropp}
M.~B. McCoy and J.~A. Tropp, ``Sharp recovery bounds for convex deconvolution,
  with applications,'' arXiv:1205.1580.

\bibitem{linear_inverse_prob}
V.~Chandrasekaran, B.~Recht, P.~A. Parrilo, and A.~S. Willsky, ``The convex
  geometry of linear inverse problems,'' \emph{Foundations of Computational
  Mathematics}, vol.~12, no.~6, 2012.

\bibitem{noisy_undersampled_yuan}
M.~Tao and X.~Yuan, ``Recovering low-rank and sparse components of matrices
  from incomplete and noisy observations,'' \emph{SIAM Journal on
  Optimization}, vol.~21, no.~1, pp. 57--81, 2011.

\bibitem{Torre03aframework}
F.~D.~L. Torre and M.~J. Black, ``A framework for robust subspace learning,''
  \emph{International Journal of Computer Vision}, vol.~54, pp. 117--142, 2003.

\bibitem{mardani2013dynamic}
M.~Mardani, G.~Mateos, and G.~B. Giannakis, ``Dynamic anomalography: Tracking
  network anomalies via sparsity and low rank,'' \emph{Selected Topics in
  Signal Processing, IEEE Journal of}, vol.~7, no.~1, pp. 50--66, 2013.

\bibitem{ReProCS_IT}
C.~Qiu, N.~Vaswani, B.~Lois, and L.~Hogben, ``Recursive robust pca or recursive
  sparse recovery in large but structured noise,'' \emph{arXiv: 1211.3754
  [cs.IT]}, 2014.

\bibitem{han_tsp}
H.~Guo, C.~Qiu, and N.~Vaswani, ``An online algorithm for separating sparse and
  low-dimensional signal sequences from their sum,'' \emph{arXiv:1310.4261
  [cs.IT]}.

\bibitem{GRASTA}
J.~He, L.~Balzano, and J.~Lui, ``Online robust subspace tracking from partial
  information,'' \emph{arXiv:1109.3827 [cs.IT]}.

\bibitem{mateos2012robust}
G.~Mateos and G.~B. Giannakis, ``Robust pca as bilinear decomposition with
  outlier-sparsity regularization,'' \emph{Signal Processing, IEEE Transactions
  on}, vol.~60, no.~10, pp. 5176--5190, 2012.

\bibitem{zhan_correlated}
J.~Zhan and N.~Vaswani, ``Performance guarantees for reprocs -- correlated
  low-rank matrix entries case,'' arXiv:1405.5887 [cs.IT].

\bibitem{rrpcp_globalsip}
B.~Lois, N.~Vaswani, and C.~Qiu, ``Performance guarantees for undersampled
  recursive sparse recovery in large but structured noise,'' in
  \emph{GlobalSIP}, 2013.

\bibitem{OnlinePCA_ContaminatedData}
\BIBentryALTinterwordspacing
J.~Feng, H.~Xu, S.~Mannor, and S.~Yan, ``Online pca for contaminated data,'' in
  \emph{Advances in Neural Information Processing Systems 26}, C.~Burges,
  L.~Bottou, M.~Welling, Z.~Ghahramani, and K.~Weinberger, Eds.\hskip 1em plus
  0.5em minus 0.4em\relax Curran Associates, Inc., 2013, pp. 764--772.
  [Online]. Available:
  \url{http://papers.nips.cc/paper/5135-online-pca-for-contaminated-data.pdf}
\BIBentrySTDinterwordspacing

\bibitem{rpca_stochatistic_optimization}
\BIBentryALTinterwordspacing
J.~Feng, H.~Xu, and S.~Yan, ``Online robust pca via stochastic optimization,''
  in \emph{Advances in Neural Information Processing Systems 26}, C.~Burges,
  L.~Bottou, M.~Welling, Z.~Ghahramani, and K.~Weinberger, Eds.\hskip 1em plus
  0.5em minus 0.4em\relax Curran Associates, Inc., 2013, pp. 404--412.
  [Online]. Available:
  \url{http://papers.nips.cc/paper/5131-online-robust-pca-via-stochastic-optimization.pdf}
\BIBentrySTDinterwordspacing

\bibitem{mod_pcp}
J.~Zhan and N.~Vaswani, ``Robust pca with partial subspace knowledge,''
  \emph{arXiv:1403.1591 [cs.IT]}.

\bibitem{hsu2011robust}
D.~Hsu, S.~M. Kakade, and T.~Zhang, ``Robust matrix decomposition with sparse
  corruptions,'' \emph{Information Theory, IEEE Transactions on}, vol.~57,
  no.~11, pp. 7221--7234, 2011.

\bibitem{candes_rip}
E.~Cand{\`e}s, ``The restricted isometry property and its implications for
  compressed sensing,'' \emph{Compte Rendus de l'Academie des Sciences, Paris,
  Serie I}, pp. 589--592, 2008.

\bibitem{nadler}
B.~Nadler, ``Finite sample approximation results for principal component
  analysis: A matrix perturbation approach,'' \emph{The Annals of Statistics},
  vol.~36, no.~6, 2008.

\bibitem{davis_kahan}
C.~Davis and W.~M. Kahan, ``The rotation of eigenvectors by a perturbation.
  iii,'' \emph{SIAM Journal on Numerical Analysis}, Mar. 1970.

\bibitem{tail_bound}
J.~A. Tropp, ``User-friendly tail bounds for sums of random matrices,''
  \emph{Foundations of Computational Mathematics}, vol.~12, no.~4, 2012.

\bibitem{hornjohnson}
R.~Horn and C.~Johnson, \emph{Matrix Analysis}.\hskip 1em plus 0.5em minus
  0.4em\relax Cambridge University Press, 1985.

\bibitem{longest_run}
M.~Muselli, ``On convergence properties of pocket algorithm,'' \emph{Neural
  Networks, IEEE Transactions on}, vol.~8, no.~3, pp. 623--629, May 1997.

\end{thebibliography}

\end{document}